\documentclass[letterpaper,UKenglish,cleveref, autoref,thm-restate]{lipics-v2021}

\pdfoutput=1 
\hideLIPIcs  

\graphicspath{{./figures/}}

\bibliographystyle{plainurl}

\title{Relating Interleaving and Fréchet Distances via Ordered Merge Trees}

\author{Thijs Beurskens}{Department of Mathematics and Computer Science, TU Eindhoven, The Netherlands}{t.p.j.beurskens@tue.nl}{}{Supported by the Dutch Research Council (NWO) under project no. OCENW.M20.089.}

\author{Tim Ophelders}{Department of Information and Computing Science, Utrecht University, The Netherlands \and Department of Mathematics and Computer Science, TU Eindhoven, The Netherlands}{t.a.e.ophelders@uu.nl}{https://orcid.org/0000-0002-9570-024X}{Supported by the Dutch Research Council (NWO) under project no. VI.Veni.212.260.}

\author{Bettina Speckmann}{Department of Mathematics and Computer Science, TU Eindhoven, The Netherlands}{b.speckmann@tue.nl}{https://orcid.org/0000-0002-8514-7858}{}

\author{Kevin Verbeek}{Department of Mathematics and Computer Science, TU Eindhoven, The Netherlands}{k.a.b.verbeek@tue.nl}{https://orcid.org/0000-0003-3052-4844}{}
\authorrunning{T. Beurskens, T. Ophelders, B. Speckmann, K. Verbeek} 
\Copyright{Thijs Beurskens, Tim Ophelders, Bettina Speckmann and Kevin Verbeek}

\ccsdesc[100]{Theory of computation~Computational geometry}

\keywords{Monotone Interleaving Distance, Merge Tree, Fr\'{e}chet distance}
\category{}

\relatedversion{}
\nolinenumbers 

\EventEditors{John Q. Open and Joan R. Access}
\EventNoEds{2}
\EventLongTitle{42nd Conference on Very Important Topics (CVIT 2016)}
\EventShortTitle{CVIT 2016}
\EventAcronym{CVIT}
\EventYear{2016}
\EventDate{December 24--27, 2016}
\EventLocation{Little Whinging, United Kingdom}
\EventLogo{}
\SeriesVolume{42}
\ArticleNo{23}

\RequirePackage{multicol}
\RequirePackage{mathtools}
\RequirePackage{subcaption}

\newcommand{\enumfig}[1]{{\sffamily\bfseries\upshape\mathversion{bold}#1}}
\newcommand{\enumit}[1]{\textcolor{darkgray}{\sffamily\bfseries\upshape\mathversion{bold}#1}}

\newcommand{\R}{\mathbb{R}}
\newcommand{\bigO}{\mathcal{O}}
\newcommand{\norm}[1]{\left\lVert #1 \right\rVert_{\infty}}

\newcommand{\lca}{\mathrm{lca}}
\newcommand{\im}{\mathrm{Im}}
\newcommand{\depth}{\mathrm{depth}}

\newcommand{\intdist}[1]{d_\mathrm{I}^{\mathrm{#1}}}
\newcommand{\orddist}[1]{d_\mathrm{MI}^{\mathrm{#1}}}
\newcommand{\labdist}{d^\mathrm{L}}
\newcommand{\frechet}{d_\mathrm{F}}

\newcommand{\fl}{f_\ell}

\newcommand{\lei}{\sqsubseteq}

\newcommand{\lel}{\lei_L}

\newcommand{\leafo}{\sqsubseteq_L}
\newcommand{\ileaf}{\sqsubseteq_L}
\newcommand{\sileaf}{\sqsubset_L}
\newcommand{\irel}{\sqsubseteq}
\newcommand{\sirel}{\sqsubset}

\newcommand{\ihat}{{\hat{\imath}}}
\newcommand{\jhat}{{\hat{\jmath}}}

\newcommand{\geom}{T}
\newcommand{\imat}{M}

\newcommand{\ls}[1]{f^{-1}({#1})} 		
\newcommand{\an}[2]{#1|^{#2}}		
\newcommand{\lowa}[1]{#1^F}
\newcommand{\elll}{\hat{\ell}}
\newcommand{\leaf}{u}

\newcommand{\mat}{\mathcal{M}}


\newtheorem{definition2}{Definition}

\usepackage{tikz-cd}
\usepackage{mathtools}

\newcommand{\Img}{\mathrm{Im}}
\newcommand{\id}{\mathrm{id}}
\newcommand{\from}{\colon}
\newcommand{\U}{\mathcal{U}}
\newcommand{\C}{\mathcal{C}}

\newcommand{\Rpos}{\R_{\leq}}
\newcommand{\inv}{^{-1}}
\newcommand{\into}{\hookrightarrow}
\newcommand{\To}{\Rightarrow}

\newcommand{\subinv}[1]{\hat#1^{-1}}

\begin{document}

\maketitle

\begin{abstract}
\noindent Merge trees are a common topological descriptor for data with a hierarchical component, such as terrains and scalar fields.
The interleaving distance, in turn, is a common distance for comparing merge trees.
However, the interleaving distance for merge trees is solely based on the hierarchical structure, and disregards any other geometrical or topological properties that might be present in the underlying data. Furthermore, the interleaving distance is NP-hard to compute.

In this paper, we introduce a form of ordered merge trees that can capture intrinsic order present in the data.
We further define a natural variant of the interleaving distance, the \emph{monotone interleaving distance}, which is an order-preserving distance for ordered merge trees.
Analogously to the regular interleaving distance for merge trees, we show that the monotone variant has three equivalent definitions in terms of two maps, a single map, or a labelling.
Furthermore, we establish a connection between the monotone interleaving distance of ordered merge trees and the Fr\'{e}chet distance of 1D curves. As a result, the monotone interleaving distance between two ordered merge trees can be computed exactly in near-quadratic time in their complexity. 
The connection between the monotone interleaving distance and the Fr\'{e}chet distance builds a new bridge between the fields of topological data analysis, where interleaving distances are a common tool, and computational geometry, where Fr\'{e}chet distances are studied extensively.
\end{abstract}

\section{Introduction}\label{sec:introduction}
Topological descriptors, such as persistence diagrams, Reeb graphs, Morse-Smale complexes, and merge trees, lie at the heart of topological data analysis; they support the analysis and visualisation of complex, real-world data.
In this paper, we focus in particular on merge trees.
Merge trees are frequently defined on a scalar field, where they capture connected components of sublevel sets; they are generally used to represent hierarchical structures.
Merge trees closely resemble phylogenetic trees in computational biology, and dendrograms in hierarchical clustering.

\begin{figure}
    \centering
    \includegraphics[width=0.32\textwidth]{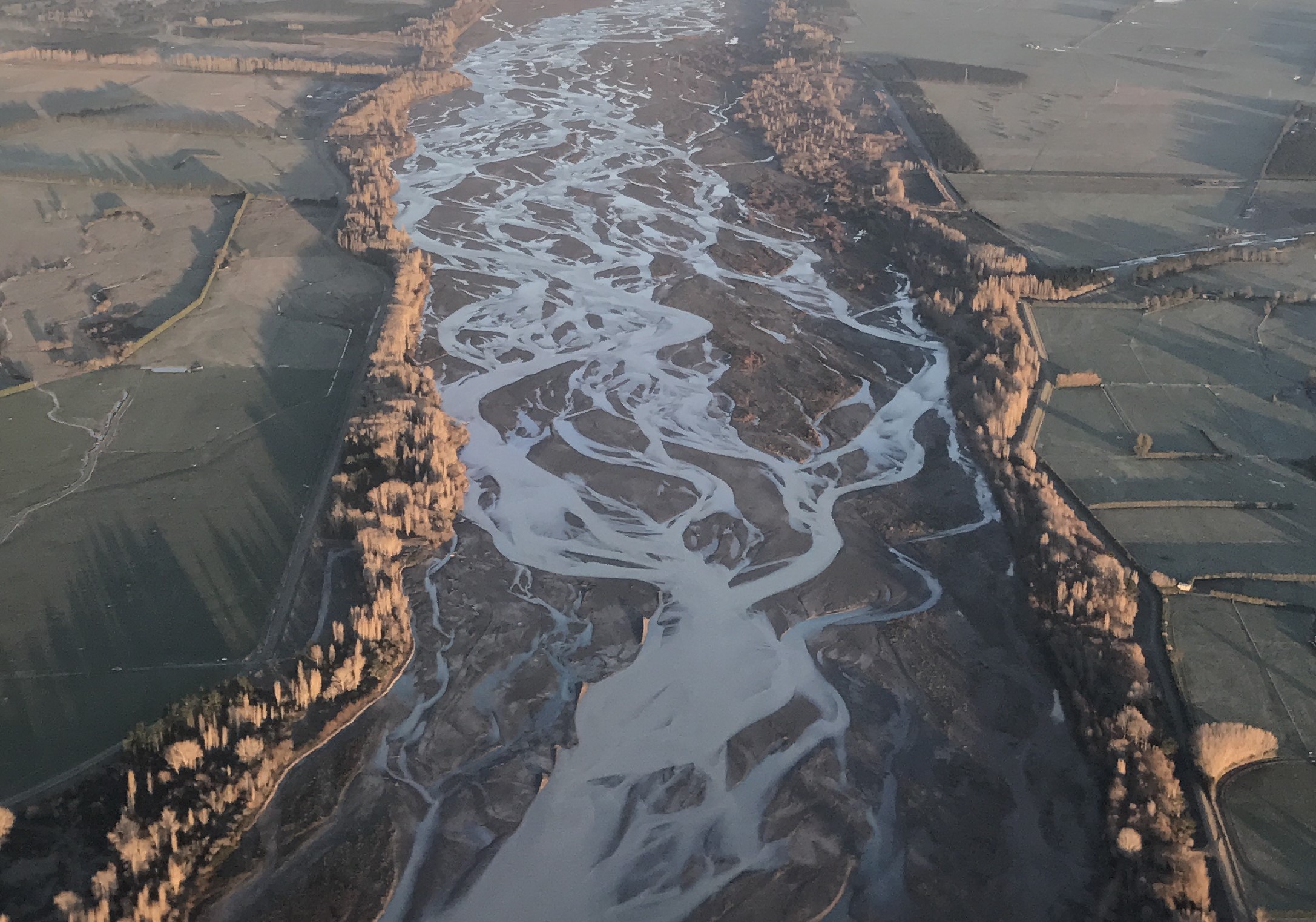}
    \caption{The Waimakariri River in New Zealand is a braided river. Photo by Greg O'Beirne \protect\cite{obeirne2017waimakariri}.}
    \label{fig:braided-river}
\end{figure}

Standard merge trees focus solely on the hierarchy and do not represent other salient geometric features.
However, there are applications where there is an intrinsic order present in the data which we would like to capture in the merge tree as well.
Our work, specifically, is motivated  by the study of braided rivers (see Figure~\ref{fig:braided-river}).
A braided river is a multi-channel river system, known to evolve rapidly~\cite{howard1970topological, marra2014network}. It is our goal to analyse the evolution of such a river system over time. 
Since the water level in a river, and hence the number of recognisable channels, is highly variable, geomorphologists frequently model river networks based solely on bathymetry, that is, terrain models of the river bed~\cite{Hiatt2020JGR, kleinhans2019networks}.
A river network is then characterised by the topological features of this river bed, most notably, by the so-called \emph{braid bars} that correspond to local maxima of the terrain.
For the purpose of studying river networks over time, we use a simple topological descriptor, namely a merge tree based on these bars: each leaf in our merge tree represents a local maximum (bar), and each internal vertex represents two bars merging as we move downward in the terrain (see Figure~\ref{fig:merge-tree}).
We use these river networks and merge trees in the software package TopoTide~\cite{sonke2019topotide} that we have been developing with domain scientists.

\begin{figure}[b]
    \centering
    \includegraphics{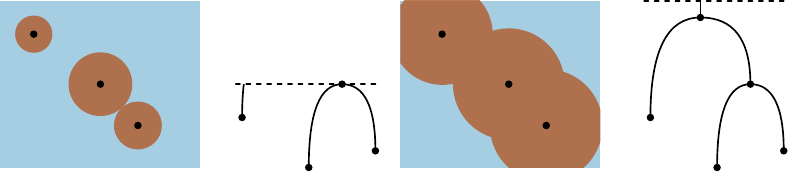}
    \caption{A schematic river, at different levels of the terrain. 
    We use a merge tree to represent the bars: leaves correspond to local maxima, and internal vertices correspond to two bars merging.
    We draw the tree upside down, following the convention that merge trees are rooted at positive infinity.}
    \label{fig:merge-tree}
\end{figure}

To track the evolution of a river over time via a representation with merge trees, we need a geomorphologically meaningful way to match consecutive merge trees while respecting the intrinsic order on the bars induced by the river (upstream towards downstream and from bank to bank).
Hence we need to encode such orders into our merge trees and then match these merge trees in a way that respects the orders.
The standard way to compare and match two merge trees is the interleaving distance~\cite{morozov2013interleaving}.
However, the regular interleaving distance is unable to capture orders and there is no known efficient algorithm to compute even an constant-factor approximation of the interleaving distance, rendering it effectively useless in practice.
In fact, computing the interleaving distance is NP-hard~\cite{agarwal2018computing}.\footnote{Agarwal et al.~\cite{agarwal2018computing} actually prove that approximating the Gromov-Hausdorff distance with a factor better than 3 is NP-hard. As many have observed, this proof also applies to the interleaving distance. 
For completeness, we present the modified proof in Appendix~\ref{app:np-hardness}.
}
We hence introduce a class of ordered merge trees that can capture the intrinsic order present among river bars and an accompanying \emph{monotone inter\-leaving distance} which preserves these orders.

\subparagraph{Contributions.}
We introduce the \emph{monotone interleaving distance}, an extension of the interleaving distance for ordered merge trees based on order-preserving maps.
We show that an ordered merge tree induces a 1D curve and use this to compute the monotone interleaving distance efficiently.
In particular, we show that the monotone interleaving distance between two ordered merge trees is equal to the Fréchet distance between their induced 1D curves.
Hence, we can use any algorithm for the Fréchet distance between 1D curves to compute the monotone interleaving distance between ordered merge trees.
Furthermore, we show that the Fréchet distance, and therefore also the monotone interleaving distance, is an interleaving distance for a specific category.
Lastly, we give two alternative definitions for the monotone interleaving distance, in terms of $\delta$-good maps and in terms of labelled merge trees.
This last result is analogous to a similar result for the regular interleaving distance for merge trees.

In Section~\ref{sec:preliminaries} we discuss the necessary background from computational topology.
We introduce our class of ordered merge trees in Section~\ref{sec:total-orders} and show how to define them in two equivalent ways.
Here, we also define the monotone interleaving distance.
Section~\ref{sec:frechet} focuses on the relation between the Fréchet distance and the monotone interleaving distance.
In Section~\ref{sec:categorical} we switch to a categorical point of view, and we show that the Fréchet distance is an interleaving distance for a specific category.
Lastly, in Section~\ref{sec:variants} we discuss the two alternative definitions: \emph{monotone $\delta$-good interleaving distance} and the \emph{monotone label interleaving distance}.

\subparagraph{Related work.}
The interleaving distance is a distance originally defined on persistence modules, and was first introduced by Chazal et al.~\cite{chazal2009proximity}.
The distance has since been well-studied from a categorical point of view~\cite{bjerkevik2020computing, bjerkevik2018computational, bubenik2015metrics, bubenik2014categorification, curry2022decorated, silva2016categorified, silva2018theory,lesnick2015theory,patel2018generalized}, and has been transferred to numerous topological descriptors such as Reeb graphs~\cite{bauer2014measuring} and homotopy groups~\cite{memoli2019persistent}.
Only few interleaving distances have been shown to be polynomial time computable, such as the interleaving distance for 1-parameter persistence modules~\cite{bjerkevik2018computational} and the interleaving distance for phylogenetic trees~\cite{munch2019the}.
Morozov et al.~\cite{morozov2013interleaving} defined the interleaving distance for merge trees.
Agarwal et al.~\cite{agarwal2018computing} established a relation between the interleaving distance and the Gromov-Hausdorff distance.$^1$
Two alternative definitions have since been formulated: first by Touli and Wang~\cite{touli2022fpt}, who used their result to design an FPT-algorithm for computing the interleaving distance, and secondly by Gasparovich et al.~\cite{gasparovich2019intrinsic}.
Their result has been used to design heuristic algorithms for computing geometry aware labellings~\cite{yan2022geometry, yan2020structural}.

The Fréchet distance is a well-studied distance in the field of computational geometry.
The first algorithm to compute the (continuous) Fréchet distance between polygonal curves was given by Godau~\cite{godau91natural}, who described a $\bigO(n^3\log n)$ time algorithm, where $n$ is the number of vertices of the input curves.
Later, Alt and Godau~\cite{alt95computing} improved this result to a $\bigO(n^2\log n)$ time algorithm, and, in the word RAM model of computation, the complexity was further improved to $\bigO(n^2(\log\log n)^2)$~\cite{buchin17four}.
Moreover, for general curves it is known that no strongly subquadratic algorithm exists, unless the strong exponential time hypothesis (SETH) fails (in dimension $d \ge 2$~\cite{bringmann14hardness} and in dimension $d = 1$~\cite{buchin19seth}).
For restricted classes of 1D curves, faster algorithms do exist~\cite{blank2024faster, buchin2017folding}.

\section{Preliminaries}\label{sec:preliminaries}
In this section, we review the necessary mathematical notions in computational topology and geometry, including merge trees, the interleaving distance, and the Fréchet distance.


\subparagraph{Merge trees.}
A \emph{rooted tree} is a tree where one vertex is identified as the \emph{root}.
Let $T$ denote a rooted tree, with vertices $V(T)$, edges $E(T)$, and leaves $L(T) \subseteq V(T)$.
In the remainder of the paper, we view $T$ not necessarily as a combinatorial tree, but rather as a topological space.
We note here that the distances we consider are defined not only on the vertices of the tree, but also on the interior of the edges.
To that end, we use a continuous representation of $T$, which is commonly referred to as the \emph{geometric realisation}, or, more accurately, the \emph{topological realisation} of $T$.
Formally, the topological realisation of $T$ is a topological space, which is defined as the union of segments that each represent an edge.
Specifically, we represent each edge $e \in E(T)$ by a copy of the unit interval $[0, 1]$ and connect these intervals according to their adjacencies.
We refer to the elements of $T$ as \emph{points} and to the elements of $V(T)$ as \emph{vertices}.
For ease of explanation, we identify $T$ with its topological realisation.

\begin{definition2}\label{def:merge-tree}
	A \emph{merge tree} is a pair $(T, f)$, where $T$ is a rooted tree and $f \colon \geom \to \R \cup \{\infty\}$ is a continuous height function that is strictly increasing towards the root, with $f(v) = \infty$ if and only if $v$ is the root.
\end{definition2}
A merge tree $(T, f)$ naturally induces a partial order on its points.
For two points $x_1, x_2 \in \geom$, the point $x_2$ is called an \emph{ancestor} of $x_1$, denoted $x_1 \preceq x_2$, if there is a path from $x_1$ to $x_2$ that is monotonically increasing under $f$.
We call $x_2$ a \emph{strict ancestor} of $x_1$ if $x_1 \prec x_2$.
Correspondingly, the point $x_1$ is called a \emph{(strict) descendant} of $x_2$ if $x_2$ is a (strict) ancestor of~$x_1$.
For a point $x \in \geom$, we denote by $T_x$ the subtree of $T$ rooted at~$x$.
Furthermore, for a given value $\delta \ge 0$, we denote by $x^\delta$ the unique ancestor of $x$ with $f(x^\delta) = f(x) + \delta$.

\begin{figure}[b]
	\centering
	\includegraphics{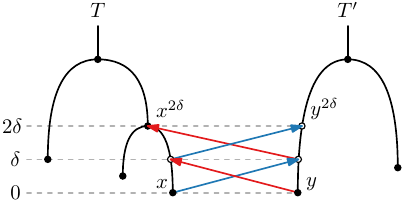}
	\caption{Two merge trees and part of a $\delta$-interleaving. Mapping a point $x$ from $\geom$ to $\geom'$ through~$\alpha$ (in blue), and mapping it back to $\geom$ via $\beta$ (in red) gives the unique ancestor $x^{2\delta}$ of $x$.}
	\label{fig:interleaving}
\end{figure}

\subparagraph{Interleaving distance.}
For two merge trees $(T, f)$ and $(T', f')$, we are now ready to define the \emph{interleaving distance}.
Let $\delta \ge 0$.
Intuitively, a $\delta$-interleaving describes a mapping $\alpha$ from $\geom$ to $\geom'$ that sends points exactly $\delta$ upwards, and a similar map $\beta$ from $\geom'$ to $\geom$, such that the compositions of $\alpha$ and $\beta$ send any point to its unique ancestor $2\delta$ higher.
Figure~\ref{fig:interleaving} shows an example of a $\delta$-interleaving.

\begin{definition2}[Morozov, Beketayev and Weber~\cite{morozov2013interleaving}]\label{def:interleaving}
	Given two merge trees $(T, f)$ and $(T', f')$, a pair of continuous maps~$\alpha \colon \geom \to \geom'$ and $\beta \colon \geom' \to \geom$ is called a \emph{$\delta$-interleaving} if for all $x \in \geom$ and $y \in \geom'$:
	\begin{multicols}{2}
	\begin{enumerate}[({C}1)]
		\item\label{C1}
			$f'(\alpha(x)) = f(x) + \delta$,
		
		\item\label{C2}
			$\beta(\alpha(x)) = x^{2\delta}$,
		
		\item\label{C3}
			$f(\beta(y)) = f'(y) + \delta$, and
			
		\item\label{C4}
			$\alpha(\beta(y)) = y^{2\delta}$.
	\end{enumerate}
	\end{multicols}
	\noindent The \emph{interleaving distance} $\intdist{}((T, f), (T', f'))$ is defined as the infimum of all $\delta$ such that there exists a $\delta$-interleaving.
\end{definition2}

\noindent
Gasparovich et al.~\cite{gasparovich2019intrinsic} showed that for finite trees there always exists an interleaving that achieves the infimum.
Hence, we can replace infimum with minimum.
The individual maps $\alpha$ and $\beta$ of a $\delta$-interleaving $(\alpha, \beta)$ are both \emph{$\delta$-shift maps}, i.e.\ continuous maps that send points exactly $\delta$ higher.
We say a $\delta$-interleaving is \emph{optimal} for two merge trees $T$ and $T'$ if their interleaving distance $\intdist{}(T, T')$ is $\delta$.

\subparagraph{Fréchet distance.}
The \emph{Fréchet distance} is a distance for curves.
We are specifically interested in the Fréchet distance for 1D curves.
A \emph{1D curve} is a continuous map $P \colon [0, 1] \to \R \cup \{\infty\}$.
The Fréchet distance between two 1D curves is defined in terms of matchings between the two curves.
A \emph{matching} between two curves $P$ and $Q$ is a continuous and increasing bijection $\mu \colon [0, 1] \to [0, 1]$.
We say a matching $\mu$ is a \emph{$\delta$-matching} if $|P(t) - Q(\mu(t))| \le \delta$ for all $t \in [0, 1]$.
The Fréchet distance is defined as the infimum value $\delta$ for which a $\delta$-matching exists:
\[
    \frechet(P, Q) = \inf_\mu \max_{t \in [0, 1]} |P(t) - Q(\mu(t))|.
\]

\subsection{Binary relations.}
In the remainder of the paper we use the following standard definitions from order theory.
A \emph{total order} $\le$ on a set $X$ is a binary relation that satisfies the following four properties.
\begin{itemize}
	\item \emph{Reflexivity.} For any $x \in X$, the relation $x \le x$ holds.
	\item \emph{Antisymmetry.} For any $x, y \in X$, if $x \le y$ and $y \le x$ then $x = y$.
	\item \emph{Transitivity.} For any $x, y, z \in X$, if $x \le y$ and $y \le z$ then $x \le z$.
    \item \emph{Totality.} For any $x, y \in X$, either $x \le y$ or $y \le x$ holds.
\end{itemize}
A total order $\le$ on a set $X$ induces a \emph{strict order} $<$ as follows.
For two points $x, y \in X$, we have $x < y$ if and only if $x \le y$ and not $y \le x$, or, equivalently, $x \le y$ and not $x = y$.

\section{An Order-Preserving Interleaving Distance}\label{sec:total-orders}
In this section, we introduce a form of \emph{ordered merge trees} and define an order-preserving distance on them.
We first introduce two classes of ordered merge trees and we show that they are equivalent.
In Section~\ref{subsec:monotone-interleaving-distance}, we then define the monotone interleaving distance.


\subsection{Ordered merge trees.}
We now define a new class of merge trees, which we call \emph{ordered merge trees}.
An ordered merge tree is a merge tree equipped with a total order on each of its level sets.
Specifically, consider a merge tree $(T, f)$ and assume that its lowest leaf has height exactly $0$.
If not, we can simply shift $f$ by a constant term.
Fix a height value $h \ge 0$.
The level set of $T$ at height $h$ is defined as $\ls{h} \coloneqq \{x \in \geom \mid f(x) = h\}$.
For a point $x \in T$ with $f(x) \le h$, we denote by $\an{x}{h}$ the unique ancestor of $x$ at height $h$.
We define a \emph{layer-order} as a set of total orders~$(\le_h)_{h \ge 0}$ on the level sets of $T$, such that it is \emph{consistent}: for two heights $h_1 \le h_2$ and two points $x_1, x_2$ in $\ls{h_1}$, we require that $x_1 \le_{h_1} x_2$ implies $\an{x_1}{h_2} \le_{h_2} \an{x_2}{h_2}$.

\begin{definition2}\label{def:ordered-merge-tree}
	An \emph{ordered merge tree} is a triple $(T, f, (\le_h)_{h\ge 0})$, with $(T, f)$ a merge tree and $(\le_h)_{h \ge 0}$ a layer-order.
\end{definition2}

\begin{figure}
	\centering
     \begin{subfigure}[b]{.49\textwidth}
        \centering
        \includegraphics{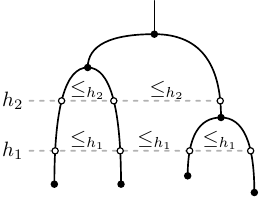}
        \subcaption{}
        \label{fig:ordered-merge-tree}
     \end{subfigure}
     \begin{subfigure}[b]{0.3\textwidth}
		\centering
		\includegraphics{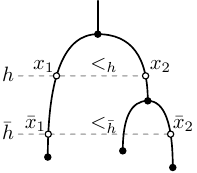}
		\subcaption{}
		\label{subfig:descendant-consistency}
	\end{subfigure}
	\caption{\enumfig{(a)} Visualisation of an ordered merge tree. \enumfig{(b)} Lemma~\protect\ref{lem:total-order-prop}: if $x_1 <_h x_2$, then any descendants $\bar{x_1} \preceq x_1$ and $\bar{x_2} \preceq x_2$ at the same height $\bar{h}$ also satisfy $\bar{x_1} <_{\bar{h}} \bar{x_2}$.}
\end{figure}

\noindent
We draw ordered merge trees such that the left-to-right order of the drawing respects the tree's layer-order (see Figure~\ref{fig:ordered-merge-tree}).
Moreover, we abbreviate the set~$(\le_h)_{h \ge 0}$ as $(\le_h)$.
For each value $h \ge 0$, the total order $\le_h$ induces a strict total order $<_h$ in the standard way.
Let $(T, f, (\le_h))$ be an ordered merge tree.
We first prove the following lemma, that extends the consistency property to subtrees rooted at distinct points (see Figure~\ref{subfig:descendant-consistency}).

\begin{figure}[b]
	\centering	
     \begin{subfigure}[b]{.49\textwidth}
        \centering
        \includegraphics{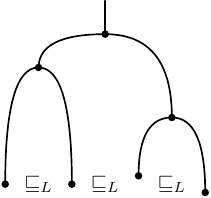}
        \subcaption{}
        \label{fig:leaf-ordered-merge-tree}
     \end{subfigure}
	\begin{subfigure}[b]{0.3\textwidth}
		\centering
		\includegraphics{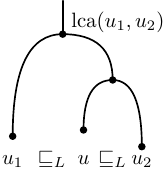}
		\subcaption{}
		\label{subfig:leaf-order}
	\end{subfigure}
	\caption{Visualisations of \enumfig{(a)} a leaf-ordered merge tree and \enumfig{(b)} the separating subtrees property of a leaf-order.}
	\label{fig:omt-examples}
\end{figure}

\begin{lemma}\label{lem:total-order-prop}
	Fix $h \ge 0$ and let $x_1, x_2 \in \ls{h}$ be two distinct points such that $x_1 <_h x_2$.
	Then also $\bar{x}_1 <_{\bar{h}} \bar{x}_2$ for any two points $\bar{x}_1 \in T_{x_1}$ and $\bar{x}_2 \in T_{x_2}$ at height $\bar{h} \le h$.
\end{lemma}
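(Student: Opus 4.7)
The plan is to prove the statement by contrapositive, leveraging the consistency axiom of the layer-order "upwards" rather than downwards. The key observation is that because $\bar{x}_1 \in T_{x_1}$ and $\bar{h} \le h = f(x_1)$, the unique ancestor of $\bar{x}_1$ at height $h$ is exactly $x_1$; symmetrically $\an{\bar{x}_2}{h} = x_2$.

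First, I would establish $\bar{x}_1 \ne \bar{x}_2$: if these were the same point, their unique ancestor at height $h$ would coincide, forcing $x_1 = x_2$ and contradicting $x_1 <_h x_2$. Next, by totality of $\le_{\bar h}$, either $\bar{x}_1 \le_{\bar h} \bar{x}_2$ or $\bar{x}_2 \le_{\bar h} \bar{x}_1$. Suppose for contradiction the latter holds. Applying the consistency property of the layer-order to heights $\bar h \le h$ yields
\begin{equation*}
  x_2 \;=\; \an{\bar{x}_2}{h} \;\le_h\; \an{\bar{x}_1}{h} \;=\; x_1.
\end{equation*}
Combined with the hypothesis $x_1 \le_h x_2$, antisymmetry of $\le_h$ gives $x_1 = x_2$, which contradicts $x_1 <_h x_2$. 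Hence $\bar{x}_1 \le_{\bar h} \bar{x}_2$, and together with $\bar{x}_1 \ne \bar{x}_2$ we conclude $\bar{x}_1 <_{\bar h} \bar{x}_2$.

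The main (and really only) obstacle is a bookkeeping one: verifying that $\an{\bar{x}_i}{h} = x_i$ for $i \in \{1,2\}$, which relies on the fact that the ancestor at a given height is unique in a merge tree and that the relation $\bar{x}_i \in T_{x_i}$ together with $\bar h \le h$ places $x_i$ exactly at that height along the ancestor path of $\bar x_i$. Once this is in hand, the argument is a direct contrapositive application of consistency plus antisymmetry, so no further structural properties of the ordered merge tree are needed.
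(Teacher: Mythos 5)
Your proof is correct and follows essentially the same route as the paper: assume the order fails at height $\bar h$, apply the consistency axiom upward to height $h$, and derive a contradiction with $x_1 <_h x_2$. Your version is in fact slightly more careful than the paper's, since you explicitly establish $\bar{x}_1 \neq \bar{x}_2$ and only invoke the weak relation $\bar{x}_2 \le_{\bar h} \bar{x}_1$ before applying consistency and antisymmetry.
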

\begin{proof}
	Consider two points $\bar{x}_1 \in T_{x_1}$ and $\bar{x}_2 \in T_{x_2}$ at height $\bar{h} \le h$.
	Assume for the sake of a contradiction that not $\bar{x}_1 <_{\bar{h}} \bar{x}_2$.
	It follows that then $\bar{x}_2 <_{\bar{h}} \bar{x}_1$, as $\le_{\bar{h}}$ is a total order.
	We know that $x_1$ and $x_2$ are the unique ancestors at height $h$ of $\bar{x}_1$ and $\bar{x}_2$ respectively.
	However, then by consistency of $(\le_h)$ we must have $x_2 \le_h x_1$, contradicting our assumption.
\hfill\end{proof}

\noindent
If all leaves of an ordered merge tree are at the same height, say $h = 0$, then by the consistency property, the order $\le_0$ is a total order on the leaves and completely and uniquely defines the layer-order $(\le_h)$.
For trees that do not have their leaves all at the same height, we wish to similarly use the layer-order to infer a single total order on the leaves from which we can recover the layer-order.
For this purpose, we introduce a second class of ordered merge trees, which we call \emph{leaf-ordered merge trees} (see Figure~\ref{fig:leaf-ordered-merge-tree}).
Specifically, denote the \emph{lowest common ancestor} for a pair of points $x_1, x_2 \in \geom$ by $\lca(x_1, x_2) \in \geom$.
We define a \emph{leaf-order} as a total order $\leafo$ on $L(T)$ such that it \emph{separates} subtrees.
That is, for every leaf $u \in L(T)$, we require that if there exist leaves $u_1, u_2 \in L(T)$ with $u_1 \leafo u \leafo u_2$, then $u \in T_{\lca(u_1, u_2)}$ (see Figure~\ref{subfig:leaf-order}).

\begin{definition2}
    A \emph{leaf-ordered merge tree} is a triple $(T, f, \leafo)$ with $(T, f)$ a merge tree and~$\leafo$ a leaf-order.
\end{definition2}

\noindent
We show in Theorem~\ref{thm:bijection} that ordered merge trees and leaf-ordered merge trees are equivalent, by providing a bijection between the set of layer-orders and the set of leaf-orders.

\subparagraph{Induced leaf-ordered merge trees.}
Consider an ordered merge tree $(T, f, (\le_h))$.
First, we define the \emph{induced leaf-ordered merge tree} $\mathcal{L}((T, f, (\le_h))) \coloneqq (T, f, \ileaf)$, where $\ileaf$ is defined as follows:
for $u_1, u_2 \in L(T)$, we set $h = \max(f(u_1), f(u_2))$ and $u_1 \ileaf u_2$ if and only if $\an{u_1}{h} \le_h \an{u_2}{h}$.
We refer to $\ileaf$ as the \emph{induced leaf-order}.
It is fairly straightforward to show that $\ileaf$ is a leaf-order.

\begin{lemma}\label{lem:leaf-total}
    The induced leaf-order $\ileaf$ is a total order on the leaves of $T$.
\end{lemma}
\begin{proof}
	Reflexivity follows from the definition.
	To see that $\ileaf$ is antisymmetric, consider $\leaf_1, \leaf_2 \in L(T)$ and assume both $\leaf_1 \ileaf \leaf_2$ and $\leaf_2 \ileaf \leaf_1$.
	Let $h = \max(f(\leaf_1), f(\leaf_2))$.
	By definition, we have $\an{\leaf_1}{h} \le_h \an{\leaf_2}{h}$ and $\an{\leaf_2}{h} \le_h \an{\leaf_1}{h}$.
	Since $\le_h$ is a total order, we obtain $\an{\leaf_1}{h} = \an{\leaf_2}{h}$.
	One of $\leaf_1$, $\leaf_2$ has height $h$, say $f(\leaf_1) = h$.
	Consequentially, we get $\leaf_1 = \an{\leaf_1}{h} = \an{\leaf_2}{h}$, implying that $\leaf_1$ is an ancestor of $\leaf_2$.
	As $\leaf_1$ is a leaf, it does not have any strict descendants, so $\leaf_2 = \leaf_1$.
	A similar argument holds if $f(\leaf_2) = h$.
 
	Next, to show $\ileaf$ is transitive, take distinct points $\leaf_1, \leaf_2, \leaf_3 \in L(T)$ and assume $\leaf_1 \ileaf \leaf_2$ and $\leaf_2 \ileaf \leaf_3$.
	We need to show that $\leaf_1 \ileaf \leaf_3$.
	Define $h = \max(f(\leaf_1), f(\leaf_2), f(\leaf_3))$.
	By consistency of $(\le_h)$, we have $\an{\leaf_1}{h} \le_{h} \an{\leaf_2}{h}$ and $\an{\leaf_2}{h} \le_{h} \an{\leaf_3}{h}$.
	Using transitivity of $\le_h$, we also get $\an{\leaf_1}{h} \le_{h} \an{\leaf_3}{h}$.
	Now, set $h_1 = \max(f(\leaf_1), f(\leaf_3))$ and note $h_1 \le h$.
	If $h = h_1$, then $\leaf_1 \ileaf \leaf_3$ follows.
	Else, if $h_1 < h$, we must have $h = f(\leaf_2)$.
	Then $\an{\leaf_1}{h} \le_h \an{\leaf_2}{h} = \leaf_2$, and as the leaf $\leaf_2$ can not have descendants, we know that $\an{\leaf_1}{h} <_h \leaf_2$.
	Similarly, we know that $\leaf_2 <_h \an{\leaf_3}{h}$.
	By transitivity, we obtain $\an{\leaf_1}{h} <_h \an{\leaf_3}{h}$.
	Now, we can use Lemma~\ref{lem:total-order-prop} to get $\an{\leaf_1}{h_1} <_{h_1} \an{\leaf_3}{h_1}$, showing $\leaf_1 \ileaf \leaf_3$.
	Lastly, to show that $\ileaf$ is total, let $u_1, u_2 \in L(T)$.
    For any $h \ge 0$, the relation $\le_h$ is a total order.
    In particular, for $h = \max(f(u_1), f(u_2))$, either $\an{u_1}{h} \le \an{u_2}{h}$ or $\an{u_2}{h} \le \an{u_1}{h}$ holds, and hence $u_1 \irel u_2$ or $u_2 \irel u_1$.\hfill
\hfill\end{proof}

\begin{lemma}\label{lem:separating-subtrees}
	The induced leaf-order $\ileaf$ separates subtrees.
\end{lemma}
\begin{proof}
	Consider distinct leaves $\leaf, \leaf_1, \leaf_2 \in L(T)$ such that $\leaf_1 \ileaf \leaf \ileaf \leaf_2$.
	Define $v = \lca(\leaf_1, \leaf_2)$.
	Set $h = \max(f(v), f(\leaf))$, and observe that both $\an{\leaf_1}{h} = \an{v}{h}$ and $\an{\leaf_2}{h} = \an{v}{h}$.
	Since $\leaf_1 \ileaf \leaf$, we get $\an{v}{h} = \an{\leaf_1}{h} \le_h \an{\leaf}{h}$.
	Similarly, since $\leaf \ileaf \leaf_2$, we obtain $\an{\leaf}{h} \le_h \an{v}{h}$.
	Combining these inequalities, and using the antisymmetry of $\le_h$, we thus have $\an{\leaf}{h} = \an{v}{h}$.
	We now claim that $h = f(v)$, so that $\an{\leaf}{h} = v$, implying $\leaf \in T_v$.
	Assume for the sake of a contradiction that $h = f(\leaf)$.
	Then, we get $\an{v}{h} = \an{\leaf}{h} = \leaf$, implying that $\leaf_1 \prec v \preceq \leaf$.
	However, $\leaf$ is a leaf, and can not have any descendants.\hfill
\hfill\end{proof}

\noindent
Combining Lemmas \ref{lem:leaf-total} and \ref{lem:separating-subtrees}, we see that $\mathcal{L}$ induces a leaf-ordered merge tree.

\begin{corollary}\label{cor:induced-leaf}
    $\mathcal{L}(T, f, (\le_h))$ is a leaf-ordered merge tree.
\end{corollary}

\subparagraph{Induced ordered merge trees.}
Consider a leaf-ordered merge tree $(T, f, \leafo)$.
For two sets of leaves $U_1, U_2 \subset L(T)$, we write $U_1 \leafo U_2$ if for all $u_1 \in U_1$ and for all $u_2 \in U_2$, we have $u_1 \leafo u_2$.
We now define the \emph{induced ordered merge tree} $\mathcal{T}(T, f, \leafo) \coloneqq (T, f, (\le_h))$, where $\le_h$ is defined as follows: for height $h \ge 0$ and points $x_1, x_2 \in \ls{h}$, we set $x_1 \le_h x_2$ if and only if $x_1 = x_2$ or if~$L(T_{x_1}) \leafo L(T_{x_2})$.
We refer to $(\le_h)$ as the \emph{induced layer-order}.

\begin{restatable}{lemma}{inducedlayerorder}\label{lem:layer-total}
	For each $h \ge 0$, the relation $\le_h$ is a total order.
\end{restatable}
\begin{proof}
	Fix $h \ge 0$.
	Reflexivity follows immediately.
	To show $\le_h$ is antisymmetric, take $x_1, x_2 \in \ls{h}$ and assume $x_1 \le_h x_2$ and $x_2 \le_h x_1$.
	Assume for the sake of a contradiction $x_1 \neq x_2$.
	By construction, we then must have both $L(T_{x_1}) \lel L(T_{x_2})$ and $L(T_{x_2}) \lel L(T_{x_1})$.
	This means that for $\leaf_1 \in L(T_{x_1})$ and $\leaf_2 \in L(T_{x_2})$, we have $\leaf_1 \lel \leaf_2$ and $\leaf_2 \lel \leaf_1$, implying $\leaf_1 = \leaf_2$.
	However, then $x_1 = \an{\leaf_1}{f(x_1)} = \an{\leaf_2}{f(x_2)} = x_2$, contradicting our assumption $x_1 \neq x_2$.
	To show transitivity, take distinct points $x_1, x_2, x_3 \in \ls{h}$ and assume $x_1 \le_h x_2$ and $x_2 \le_h x_3$.
	We need to show that $x_1 \le_h x_3$, i.e.\ that $L(T_{x_1}) \lel L(T_{x_3})$.
	Let $\leaf_k \in T_{x_k}$ for $k = 1,2,3$.
	By assumption, we have $\leaf_1 \lel \leaf_2$ and $\leaf_2 \lel \leaf_3$, so by transitivity of $\lel$ we obtain $\leaf_1 \lel \leaf_3$.
	As these leaves were chosen arbitrarily, we obtain $L(T_{x_1}) \lel L(T_{x_3})$, implying that indeed $x_1 \le_h x_3$.
	
	Next, take two points $x_1, x_2 \in \ls{h}$ and assume not $x_1 \le_h x_2$.
	Then we know that $x_1 \neq x_2$ and not $L(T_{x_1}) \lel L(T_{x_2})$.
	Since $\lel$ is a total order, this means that there are leaves $\leaf_1 \in L(T_{x_1})$ and $\leaf_2 \in L(T_{x_2})$ such that $\leaf_2 \lel \leaf_1$.
	We now show that $L(T_{x_2}) \lel L(T_{x_1})$.
	Consider a leaf $\leaf_2' \in L(T_{x_2})$.
	Assume, towards a contradiction, that $\leaf_1 \lel \leaf_2'$.
	Then we have $\leaf_2 \lel \leaf_1 \lel \leaf_2'$.
	By the separating subtree property of $\ileaf$, we obtain $\leaf_1 \in T_{\lca(\leaf_2, \leaf_2')}$.
	As $\leaf_2, \leaf_2'$ are both leaves in $T_{x_2}$, it follows that $\lca(\leaf_2, \leaf_2') \preceq x_2$, and thus $\leaf_1 \in T_{x_2}$.
	However, as $x_1 \neq x_2$, the subtrees $T_{x_1}$ and $T_{x_2}$ are disjoint, and we reach a contradiction.
	Hence, $L(T_{x_2}) \lel \leaf_1$.
	Secondly, consider another leaf $\leaf_1' \in L(T_{x_2})$.
	We now show that $\leaf_2' \lel \leaf_1'$.
	Assume, towards a contradiction, that $\leaf_1' \lel \leaf_2' \lel \leaf_1$.
	Using a similar argument as before, we obtain $\leaf_2' \in T_{x_1}$, contradicting our assumption $x_1 \neq x_2$.
	Hence, we conclude that indeed $L(T_{x_2}) \lel L(T_{x_1})$, and thus $x_2 \le_h x_1$.
	This shows that $\le_h$ is a total order.
\hfill\end{proof}

\begin{lemma}\label{lem:layer-consistent}
    The induced layer-order $(\le_h)$ is consistent.
\end{lemma}
\begin{proof}
    We need to show that for $h_1 \ge 0$ and any two points $x_1, x_2 \in \ls{h_1}$ with $x_1 \le_{h_1} x_2$ we also have $\an{x_1}{h_2} \le_{h_2} \an{x_2}{h_2}$ for all $h_2 \ge h_1$. 
	If we assume that $x_1 \le_{h_1} x_2$, then by construction we know that $L(T_{x_1}) \lel L(T_{x_2})$.
	Let $h_2 \ge h_1$ and set $\hat{x}_1 = \an{x_1}{h_2}$ and $\hat{x}_2 = \an{x_2}{h_2}$.
	If $x_1 = x_2$, it immediately follows that $\hat{x}_1 = \hat{x}_2$.
	Else, towards a contradiction, assume that we do not have $\hat{x}_1 \le_{h_2} \hat{x}_2$.
	We have already shown that $\le_{h_2}$ is a total order, so we must have $\hat{x}_2 <_{h_2} \hat{x}_1$.
	By construction, this means that $L(T_{\hat{x}_2}) \lel L(T_{\hat{x}_1})$.
	Observe that because $x_1 \preceq \hat{x}_1$, we have $L(T_{x_1}) \subseteq L(T_{\hat{x}_1})$.
	Similarly, we have $L(T_{x_2}) \subseteq L(T_{\hat{x}_2})$.
	It follows that $L(T_{x_2}) \lel L(T_{x_1})$
	Combining with our initial assumption, we get $L(T_{x_1}) = L(T_{x_2})$.
	Since $x_1$ and $x_2$ are from the same layer, this implies that $x_1 = x_2$.
	However, this contradicts our assumption that $x_1$ and $x_2$ are distinct points.
\hfill\end{proof}

\noindent
Combining Lemmas \ref{lem:layer-total} and \ref{lem:layer-consistent}, we see that $\mathcal{T}$ induces an ordered merge tree.

\begin{corollary}\label{cor:induced-ordered}
    $\mathcal{T}(T, f, \leafo)$ is an ordered merge tree.
\end{corollary}

\noindent
We now show that if we compose the maps $\mathcal{L}$ and $\mathcal{T}$, we obtain the identity maps between ordered merge trees and leaf-ordered merge trees.

\begin{restatable}{lemma}{identity1}\label{lem:identity-1}
    $\mathcal{T}(\mathcal{L}((T, f, (\le_h)))) = (T, f, (\le_h))$.    
\end{restatable}
\begin{proof}
    By Corollaries~\ref{cor:induced-leaf} and $\ref{cor:induced-ordered}$ we know that $\mathcal{L}((T, f, (\le_h)))$ is a leaf-ordered merge tree $(T', f', \ileaf)$, and that $\mathcal{T}(\mathcal{L}((T, f, (\le_h))))$ is an ordered merge tree $(T'', f'', (\le'_h))$.
    From the definitions of $\mathcal{L}$ and $\mathcal{T}$ it immediately follows that $T = T' = T''$ and $f = f' = f''$, so it suffices to show that $(\le_h)$ equals $(\le'_h)$.
    Fix $h \ge 0$ and let $x_1, x_2 \in \ls{h}$.
    Assume that $x_1 \le_h x_2$.
    If $x_1 = x_2$, then it immediately follows that $x_1 \le'_h x_2$, so assume $x_1 <_h x_2$.
    We need to show that $x_1 <'_h x_2$ holds as well.
    Let $u_1 \in L(T_{x_1})$ and $u_2 \in L(T_{x_2})$ be any two leaves in the subtrees rooted at $x_1$ and $x_2$.
    Set $h^{u} = \max(f(u_1), f(u_2))$.
    Since both $f(u_1) \le h$ and $f(u_2) \le h$, it follows that $h^u \le h$, and we can use Lemma~\ref{lem:total-order-prop} to obtain $\an{u_1}{h^u} <_{h^u} \an{u_2}{h^u}$.
    By definition of the induced leaf-order, it follows that $u_1 \ileaf u_2$.
    Since this holds for any two leaves $u_1$ and $u_2$, we know that $L(T_{x_1}) \ileaf L(T_{x_2})$.
    Hence, by definition of the induced layer-order, we know that indeed $x_1 <'_h x_2$.    
\hfill\end{proof}

\begin{restatable}{lemma}{identity2}\label{lem:identity-2}
    $\mathcal{L}(\mathcal{T}((T, f, \leafo))) = (T, f, \leafo)$.    
\end{restatable}
\begin{proof}
    By Corollaries~\ref{cor:induced-ordered} and $\ref{cor:induced-leaf}$ we know that $\mathcal{T}((T, f, \leafo)) = (T', f', (\le_h))$ is an ordered merge tree, and that $\mathcal{L}(\mathcal{T}((T, f, \leafo))) = (T'', f'', \ileaf')$ is a leaf-ordered merge tree.
    It immediately follows that $T = T' = T''$ and $f = f' = f''$, so it suffices to show that $\leafo$ equals $\ileaf'$.
    Take $u_1, u_2 \in L(T)$ and assume that $u_1 \leafo u_2$.
    Set $h = \max(f(u_1), f(u_2))$, and set $x_1 = \an{u_1}{h}$ and $x_2 = \an{u_2}{h}$.
    We need to show that $u_1 \ileaf' u_2$ holds as well, i.e. that $x_1 \le_h x_2$.
    By definition of the induced layer-order, this holds if and only if $x_1 = x_2$, or if $L(T_{x_1}) \leafo L(T_{x_2})$.
    If $u_1 = u_2$, we trivially have $x_1 = x_2$, so assume $u_1 \neq u_2$.
    Since both $u_1$ and $u_2$ are leaves and do not have any descendants, we can not have $x_1 = x_2$.
    Instead, we show that $L(T_{x_1}) \leafo L(T_{x_2})$.

    We consider two cases.
    First, if $h = f(u_1)$, then $x_1 = u_1$ and $L(T_{x_1}) = \{u_1\}$.
    Take $u \in L(T_{x_2})$, and suppose $u \ileaf u_1$.
    Then we have $u \ileaf u_1 \ileaf u_2$, and since $\ileaf$ is a leaf-order, we must have that $u_1 \in T_{\lca(u, u_2)}$.
    Since both $u$ and $u_2$ are leaves in $T_{x_2}$, we must have that $\lca(u, u_2) \in T_{x_2}$, and implying that $u_1 \in T_{x_2}$.
    It then follows that $u_1 = x_2$, as $f(u_1) = h = f(x_2)$.
    However, we assumed $u_1 \neq u_2$, and because $u_1$ is a leaf, $u_1$ can not have any descendants.
    We thus reach a contradiction, showing that $u_1 \leafo u$.
    Otherwise, if $h = f(u_2)$, we can use a similar argument showing that any for $u \in L(T_{x_1})$ we must have $u \leafo u_2$.
    We conclude that $L(T_{x_1}) \leafo L(T_{x_2})$.
\hfill\end{proof}

\noindent
Combining Lemmas~\ref{lem:identity-1} and \ref{lem:identity-2}, we obtain Theorem~\ref{thm:bijection}.
\begin{theorem}\label{thm:bijection}
    The maps $\mathcal{L}$ and $\mathcal{T}$ define a bijection between the set of ordered merge trees and the set of leaf-ordered merge trees.
\end{theorem}

\noindent
In the remainder of this paper, we interchange leaf-ordered and ordered merge tree freely, and, if clear from context, we refer to both of them simply as ordered merge trees.

\subsection{Monotone interleaving distance.}\label{subsec:monotone-interleaving-distance}
We are now ready to define an order-preserving distance on ordered merge trees.
Specifically, consider two ordered merge trees $(T, f, (\le_h))$ and $(T', f', (\le'_h))$.
We say a $\delta$-shift map $\alpha \colon \geom \to \geom'$ is \emph{layer-monotone}, or simply \emph{monotone}, if for all height values $h \ge 0$, and for any two points $x_1, x_2 \in \ls{h}$ it holds that if $x_1 \le_h x_2$ then also $\alpha(x_1) \le'_{h+\delta} \alpha(x_2)$.
A \emph{monotone $\delta$-interleaving}, in turn, is a $\delta$-interleaving $(\alpha, \beta)$ such that the maps $\alpha$ and $\beta$ are both monotone.

\begin{figure}[b]
	\centering
	\includegraphics{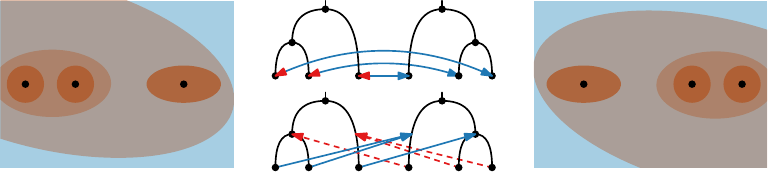}
	\caption{Two schematic rivers with three bars. The regular interleaving distance (top) is $0$, since the rivers induce the same merge trees. The monotone interleaving distance (bottom) is greater than $0$ and captures the left-to-right order of the bars better.}
	\label{fig:monotone-maps}
\end{figure}

\begin{definition2}
    The \emph{monotone interleaving distance} $\orddist{}$ is the minimum~$\delta$ that admits a monotone $\delta$-interleaving.
\end{definition2}
As a monotone interleaving is still an interleaving, the monotone interleaving distance is bounded from below by the regular interleaving distance.
On the contrary, Figure~\ref{fig:monotone-maps} illustrates that an optimal interleaving is not necessarily an optimal monotone interleaving.

\section{Relation to 1D Curves}\label{sec:frechet}
In this section, we establish a connection between the monotone interleaving distance and the Fréchet distance.
An in-order tree walk on an ordered merge tree induces a particular (in-order) curve on the tree, which in turn induces a particular 1D curve.
We show that the monotone interleaving distance between two ordered merge trees equals the Fréchet distance between their induced 1D curves.

\subparagraph{Curves.}
Let $(T, f, (\le_h))$ be an ordered merge tree.
We define a 1D curve $P$ by tracing $f$ along an \emph{in-order tree walk} of $T$.
For technical reasons, we require that such a walk starts and ends at the root (at infinity) of $T$.
We use $\mathrm{deg}(x)$ to refer to the (down)degree of a point $x \in T$.
Here, $\mathrm{deg}(x) = 1$ for points interior of edges of $T$, and $\mathrm{deg}(x) = 0$ for leaves of $T$.
For a curve $\sigma\colon [0, 1] \to T$ on a tree $T$ and a point $x\in T$, we say that the number of times that $\sigma$ \emph{visits} $x$ is the number of connected components of $\sigma^{-1}(x)$. If $\sigma^{-1}(x)$ is empty for some $x \in T$, then we say that $x$ is unvisited by $\sigma$.
For a vertex $v \in T$ and a strict ancestor\footnote{Possibly also a vertex $u \neq v$.} $x$ in the parent edge of $v$, we define the \emph{planted subtree} $T_{x, v} = T_v \cup [x, v]$.
We call $x$ the root of $T_{x, v}$.
We say that a planted subtree $T_{x, v}$ is \emph{$\sigma$-unvisited} if all points, except possibly the root, are unvisited by $\sigma$.
Lastly, we define the \emph{$\sigma$-unvisited degree} of any $x \in T$ as the number of $\sigma$-unvisited planted subtrees rooted at $x$.

\begin{definition2}
    Let $(T, f, (\le_h))$ be an ordered merge tree and let $\tau\colon [0,1]\to T$ be a curve that starts and ends at the root of $T$.
    We distinguish three properties for such a curve:
    \begin{enumerate}
        \item It respects the order of $T$, i.e. for all $t_1, t_2 \in [0, 1]$, $\tau(t_1) \le_h \tau(t_2)$ implies $t_1 \le t_2$.
        \item Each point $x \in T$ is visited exactly $\mathrm{deg}(x) + 1 - \kappa$ times, where $\kappa$ is the $\tau$-unvisited degree of $x$.
        \item Each point $x \in T$ is visited exactly $\mathrm{deg}(x) + 1$ times.
    \end{enumerate}
    We call a curve that satisfies
    \begin{description}
        \item \textit{1}: a \emph{weak in-order curve} on $T$,
        \item \textit{1-2} a \emph{partial in-order curve} on $T$,
        \item \textit{1-3:} an \emph{in-order curve} on $T$.
    \end{description}
\end{definition2}

\begin{figure}[b]
    \centering
    \includegraphics{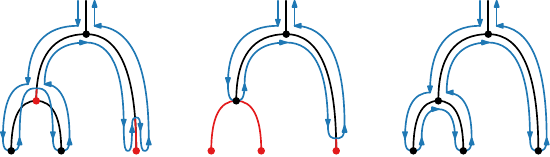}
    \caption{From left to right: a weak in-order curve, a partial in-order curve, and an in-order curve. The red points in the left tree violate property \enumit{2}; the red points in the middle tree violate property \enumit{3}.}
    \label{fig:in-order-curves}
\end{figure}
\noindent
See Figure~\ref{fig:in-order-curves} for examples of weak, partial, and regular in-order curves.

\begin{figure}[b]
    \centering
    \includegraphics{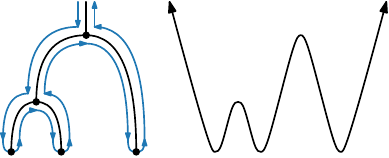}
    \caption{An in-order curve (in blue) on an ordered merge tree, and its corresponding induced 1D curve visualised in 2D; the $x$-coordinate indicates the parameter value.}
    \label{fig:induced-curve}
\end{figure}

\noindent
Given an in-order curve $\tau$ on $T$, we define the 1D curve $P_\tau = f \circ \tau$; that is, by tracing the height of $T$ along $\tau$.
Note that $P_\tau(0) = \infty = P_\tau(1)$.
See Figure~\ref{fig:induced-curve} for an example of an in-order curve and its corresponding 1D curve.
The 1D curve $P_\tau$ is unique up to reparameterisation of $\tau$, so we usually omit $\tau$ and refer to $P$ as the \emph{induced 1D curve} of the ordered merge tree $(T, f, (\le_h))$.
A (partial) in-order curve has the following useful property:

\begin{lemma}\label{lem:subcurves}
    Let $\tau \colon [0, 1] \to T$ be a (partial) in-order curve and take $t_1, t_2 \in [0, 1]$.
    The subcurve from $t_1$ to $t_2$ visits only points in the subtree $T_x$, where $x = \lca(\tau(t_1), \tau(t_2))$.
\end{lemma}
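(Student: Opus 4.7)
The plan is a proof by contradiction based on property 2. Suppose the subcurve exits $T_x$: some $t^* \in (t_1, t_2)$ satisfies $\tau(t^*) \notin T_x$, so that $\tau(t^*)$ is a strict ancestor of $x$. (If $x$ is the root then $T_x = T$ and the statement is trivial, so assume otherwise.) Using closedness of $T_x$ in $T$ and continuity of $\tau$, I would locate times $s_1 < s_2$ in $[t_1, t_2]$ with $s_1 \le t^* \le s_2$, $\tau(s_1) = \tau(s_2) = x$, and $\tau(s) \notin T_x$ for all $s \in (s_1, s_2)$: take $s_1$ to be the last time in $[t_1, t^*]$ that $\tau$ lies in $T_x$ and $s_2$ the first time in $[t^*, t_2]$ that $\tau$ returns to $T_x$; the boundary argument forces $\tau(s_i) = x$ because $x$ is the only boundary point of $T_x$ in $T$. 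The same boundary argument applied to $[0, s_1]$ and $[s_2, 1]$, together with the fact that $\tau(0)$ and $\tau(1)$ both equal the root of $T$ (which lies outside $T_x$), yields times $e_0 \in [0, s_1)$ and $e_3 \in (s_2, 1]$ with $\tau(e_0) = \tau(e_3) = x$ and $\tau$ outside $T_x$ for times arbitrarily close to $e_0$ from below and to $e_3$ from above.

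Next, I would pick a point $\tilde{x}$ on the parent edge of $x$ at height $f(x) + \eta$ for sufficiently small $\eta > 0$. By continuity, near each of the four transition times $e_0, s_1, s_2, e_3$ the curve $\tau$ lies in a small neighborhood of $x$; on the side of each transition where $\tau \notin T_x$, the curve is forced onto the parent edge of $x$ (the only direction leaving $x$ outside $T_x$). Applying the intermediate value theorem to $f \circ \tau$ then yields a time near each transition at which $\tau$ crosses height $f(x) + \eta$, and for $\eta$ chosen small enough this crossing occurs on the parent edge itself, so $\tau$ equals $\tilde{x}$ at that time. Since these four crossing times lie in pairwise-disjoint open subintervals of $[0, 1]$ surrounding $e_0, s_1, s_2, e_3$, the point $\tilde{x}$ is visited at least four times.

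This contradicts property 2: since $\tilde{x}$ is edge-interior we have $\mathrm{deg}(\tilde{x}) = 1$, and since $\tau(t_1) \in T_x$ lies in the unique planted subtree $T_{\tilde{x}, x} = T_x \cup (\tilde{x}, x)$ rooted at $\tilde{x}$, the $\tau$-unvisited degree of $\tilde{x}$ is zero, so property 2 forces $\tilde{x}$ to be visited exactly $\mathrm{deg}(\tilde{x}) + 1 - 0 = 2$ times. The main technical obstacle is the intermediate value step: I must choose $\eta$ small enough that at each of the four crossings the curve is still on the parent edge (rather than having wandered off into a cousin subtree at the same height), which is guaranteed by continuity of $\tau$ together with $\tau = x$ at each transition.
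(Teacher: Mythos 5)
Your proposal is correct and is essentially the paper's own argument: both proofs observe that the curve must traverse the parent edge of $x$ once before $t_1$ and once after $t_2$ (since it starts and ends at the root), so an excursion of the subcurve $[t_1,t_2]$ outside $T_x$ would force an edge-interior point just above $x$ to be visited more than the $\mathrm{deg}+1-\kappa = 2$ times allowed by property~2. The paper packages this as a short counting argument on the parent edge, whereas you carry out the boundary-crossing and $\eta$-selection details explicitly, but the underlying idea is the same.
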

\begin{proof}
    Let $e$ be the parent edge of $x$.
    Because $\tau$ starts and ends at the root, and $\tau(t_1)$ and $\tau(t_2)$ are descendants of $x$, the edge $e$ must be traversed at least once on the subcurve $[0,t_1]$, and once on the subcurve $[t_2,1]$.
    Points $x'$ interior of $e$ have $\mathrm{deg}(x')=1$, so if $\tau$ is partial in-order, no point interior of $e$ can be visited by the subcurve $[t_1,t_2]$.
    Hence, the subcurve $[t_1,t_2]$ visits only points in the subtree $T_x$.
\hfill\end{proof}

\noindent
Note that if $x \coloneqq \tau(t_1) = \tau(t_2)$, then $\lca(\tau(t_1), \tau(t_2)) = x$.
This gives the following corollary.
\begin{corollary}\label{cor:subcurves}
    Let $\tau \colon [0, 1] \to T$ be a (partial) in-order curve.
    For $x \in T$ and $t_1, t_2 \in \tau^{-1}(x)$, the subcurve $[t_1, t_2]$ visits only points in the subtree $T_x$.
\end{corollary}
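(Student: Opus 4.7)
\medskip

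\noindent\textbf{Proof plan.} This corollary is essentially an immediate specialisation of Lemma~\ref{lem:subcurves}, so the plan is very short. The only thing to observe is that when $t_1, t_2 \in \tau^{-1}(x)$ we have $\tau(t_1) = x = \tau(t_2)$, and therefore $\lca(\tau(t_1), \tau(t_2)) = \lca(x, x) = x$, since the ancestor relation $\preceq$ from Section~\ref{subsec:merge-trees-and-interleavings} is reflexive and $x$ is thus its own (and hence lowest) common ancestor with itself.

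Given that observation, the plan is simply to invoke Lemma~\ref{lem:subcurves} with these particular $t_1$ and $t_2$. The lemma guarantees that the subcurve $\tau|_{[t_1, t_2]}$ visits only points of $T_{\lca(\tau(t_1), \tau(t_2))}$, and substituting the equality above immediately yields that it visits only points of $T_x$, which is exactly the claim.

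There is no real obstacle here: the lemma does all of the work (it is the lemma that uses the partial in-order property to rule out revisits of points interior to the parent edge of $\lca$). The corollary is stated separately only because the case $\tau(t_1) = \tau(t_2)$ is the one that will be used repeatedly in Section~\ref{sec:frechet}, and it is convenient to have it in a form that does not require the reader to recompute the $\lca$. For this reason, I would not bother giving a separate proof in the body of the paper; a one-line justification (\emph{Apply Lemma~\ref{lem:subcurves} and note $\lca(x,x) = x$}) is sufficient.
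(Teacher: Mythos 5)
Your proposal is correct and matches the paper exactly: the paper derives the corollary from Lemma~\ref{lem:subcurves} by the same one-line observation that $\tau(t_1)=\tau(t_2)=x$ implies $\lca(\tau(t_1),\tau(t_2))=x$, stated as a remark immediately before the corollary rather than as a separate proof.
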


\subparagraph{Matched curves.}
Let $(T, f, (\le_h))$ and $(T', f', (\le'_h))$ be two ordered merge trees, and let $\tau \colon [0, 1] \to T$ and $\tau' \colon [0, 1] \to T'$ be two in-order curves on $T$ and $T'$ respectively.
The Fréchet distance between the induced 1D curves $P_\tau$ and $Q_{\tau'}$ is the infimum value $\delta$ for which a matching $\mu$ exists such that $|f(\tau(t)) - f'(\tau'(\mu(t)))| \le \delta$ for all $t \in [0, 1]$.
The composition $\tau' \circ \mu$ is a reparameterisation of $\tau'$, and therefore also an in-order curve of $T'$.
Instead of considering matchings $\mu$, it suffices to consider in-order curves $\tau$ and $\tau'$ on $T$ and $T'$.
A pair of in-order curves $(\tau, \tau')$ is \emph{$\delta$-matched} if $|f(\tau(t)) - f'(\tau'(t))| \le \delta$ for all $t \in [0, 1]$.
The Fréchet distance between induced 1D curves $P$ and $Q$ can equivalently be defined as the infimum value $\delta$ for which a pair of $\delta$-matched in-order curves on $T$ and $T'$ exists: 
\[
    \frechet(P, Q) = \inf_{(\tau, \tau')} \max_{t \in [0, 1]} |f(\tau(t)) - f'(\tau'(t))|.
\]
\noindent
Our main result is the following:

\begin{restatable}{theorem}{frechetequals}\label{thm:frechet}
	$\frechet(P, Q) = \orddist{}(T, T')$.
\end{restatable}

\noindent
To prove Theorem~\ref{thm:frechet}, we describe constructions from $\delta$-matched in-order curves to monotone $\delta$-interleavings, and vice versa.

\subsection{From matched in-order curves to monotone interleavings.}
Let $\tau \colon [0, 1] \to T$ and $\tau' \colon [0, 1] \to T'$ be a $\delta$-matched pair of in-order curves on $T$ and $T'$ respectively.
Then $|f(\tau(t)) - f'(\tau'(t))| \le \delta$ for all $t \in [0, 1]$.
We define a map $\alpha \colon T \to T'$ as follows.
For $x \in T$, fix $t \in \tau^{-1}(x)$ arbitrarily and set $\alpha(x) = \an{\tau'(t)}{f(x)+\delta}$.
To see that such an ancestor exists, it suffices to show that $f'(\tau'(t)) \le f(x) + \delta$, which is evident because $\tau$ and $\tau'$ are $\delta$-matched.
We now first show that the definition of $\alpha(x)$ is independent from the choice of $t \in \tau^{-1}(x)$.

\begin{lemma}\label{lem:alpha-well-defined}
   Fix $x \in T$ and let $h = f(x)$.
   For all $t_1, t_2 \in \tau^{-1}(x)$, we have $\an{\tau'(t_1)}{h+\delta} = \an{\tau'(t_2)}{h+\delta}$.
\end{lemma}
\begin{proof}
    Define $y \coloneqq \an{\tau'(t_1)}{h+\delta}$.
    By Lemma~\ref{lem:subcurves} we know that the subcurve $[t_1, t_2]$ lies in $T_x$.
    So, for all $t \in [t_1, t_2]$ we have $f(\tau(t)) \le h$, and thus also $f'(\tau'(t)) \le h + \delta = f'(y)$.
    By continuity of $\tau'$, it follows that $\tau'(t) \preceq y$ for all $t \in [t_1, t_2]$, and thus that $\an{\tau'(t_2)}{h+\delta} = y$.
\hfill\end{proof}

\noindent
This shows that the map $\alpha$ is well-defined, based on $\tau$ and $\tau'$.
We now show that $\alpha$ is a monotone $\delta$-shift map.
\begin{lemma}
    The map $\alpha$ is a monotone $\delta$-shift map.    
\end{lemma}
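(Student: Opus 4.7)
The proof has two pieces: showing that $\alpha$ is a continuous $\delta$-shift map, and showing that $\alpha$ is layer-monotone. The shift condition $f'(\alpha(x)) = f(x) + \delta$ is immediate from the construction, since $\alpha(x)$ is defined as the ancestor of $\tau'(t)$ at height exactly $f(x)+\delta$. For continuity at an arbitrary $x_0 \in T$, the plan is to take a sequence $x_n \to x_0$, use continuity of $\tau$ to realise times $t_n \to t_0$ with $\tau(t_n) = x_n$ and $\tau(t_0) = x_0$, and then combine continuity of $\tau'$ with continuity of the ancestor map $(y,h) \mapsto \an{y}{h}$ to conclude $\alpha(x_n) = \an{\tau'(t_n)}{f(x_n)+\delta} \to \an{\tau'(t_0)}{f(x_0)+\delta} = \alpha(x_0)$. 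The preceding lemma on independence from the choice of $t$ guarantees that this limit is unambiguous.

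The substantive step is monotonicity. My plan is to first prove a self-contained auxiliary principle: if $\tau'$ is any in-order curve on an ordered merge tree $T'$, $s_1 \le s_2$ are parameters in $[0,1]$, and $h^\ast \ge \max(f'(\tau'(s_1)), f'(\tau'(s_2)))$, then $\an{\tau'(s_1)}{h^\ast} \le'_{h^\ast} \an{\tau'(s_2)}{h^\ast}$. Once this is in hand, monotonicity of $\alpha$ follows immediately: for $x_1, x_2 \in \ls{h}$ with $x_1 \le_h x_2$, pick $t_i \in \tau^{-1}(x_i)$; property (1) of $\tau$ gives $t_1 \le t_2$; and the principle applied to $\tau'$ with $h^\ast = h + \delta$ yields exactly $\alpha(x_1) \le'_{h+\delta} \alpha(x_2)$.

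The main obstacle is proving the auxiliary principle, and I would do so by contradiction. Assume the strict reverse inequality $\an{\tau'(s_2)}{h^\ast} <'_{h^\ast} \an{\tau'(s_1)}{h^\ast}$ and set $\bar h \coloneqq \min(f'(\tau'(s_1)), f'(\tau'(s_2)))$. Taking without loss of generality $\bar h = f'(\tau'(s_1))$, the point $\tau'(s_1)$ itself lies in layer $\bar h$ while $\an{\tau'(s_2)}{\bar h}$ is a descendant of $\an{\tau'(s_2)}{h^\ast}$. Lemma~\ref{lem:total-order-prop} pulls the strict inequality down to layer $\bar h$, giving $\an{\tau'(s_2)}{\bar h} <'_{\bar h} \tau'(s_1)$. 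Because $\tau'$ ends at the root of $T'$, continuity of $\tau'$ produces a time $s' \in [s_2, 1]$ with $\tau'(s') = \an{\tau'(s_2)}{\bar h}$. The order-respecting property (property 1 of in-order curves), applied to $\tau'(s')$ and $\tau'(s_1)$ at the common layer $\bar h$, forces $s' \le s_1$, contradicting $s_1 \le s_2 \le s'$. The symmetric case $\bar h = f'(\tau'(s_2))$ is handled by applying the same reasoning to the subcurve $[0, s_1]$, using continuity of $\tau'$ together with $\tau'(0)$ at the root to realise an ancestor of $\tau'(s_1)$ as $\tau'(s'')$ for some $s'' \le s_1$.
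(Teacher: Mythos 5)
Your overall strategy is the same as the paper's: establish monotonicity by contradiction, pulling a strict reversal of the order at height $h+\delta$ down to the heights of $\tau'(s_1)$ and $\tau'(s_2)$ via Lemma~\ref{lem:total-order-prop}, and then invoking the order-respecting property of the in-order curve to contradict $s_1 \le s_2$. In fact your auxiliary principle is slightly more careful than the paper's proof, which silently applies the order-respecting property to $\tau'(t_1)$ and $\tau'(t_2)$ even though they may lie in different layers; your device of realising the common-height ancestor as a point $\tau'(s')$ on the curve (using that $\tau'$ starts and ends at the root) is exactly what is needed to make that step rigorous. The continuity argument differs in style from the paper's (which argues via monotone approach from below and above rather than via sequences), but both are adequate sketches.

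One concrete slip: you set $\bar h \coloneqq \min(f'(\tau'(s_1)), f'(\tau'(s_2)))$, but with this choice the point $\an{\tau'(s_2)}{\bar h}$ is in general undefined, since $\bar h \le f'(\tau'(s_2))$ and the notation $\an{y}{h}$ denotes the unique \emph{ancestor} of $y$ at height $h$, which requires $h \ge f'(y)$; a point need not have descendants at every lower height, and if it does they need not be unique. You need $\bar h \coloneqq \max(f'(\tau'(s_1)), f'(\tau'(s_2)))$, which is still at most $h^\ast$, so Lemma~\ref{lem:total-order-prop} applies and both $\an{\tau'(s_1)}{\bar h}$ and $\an{\tau'(s_2)}{\bar h}$ exist; the rest of your prose (realising an ancestor on the subcurve towards the root) is already consistent with this corrected choice. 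Finally, the concluding contradiction needs one more line: from $s' \le s_1$ and $s_1 \le s_2 \le s'$ you only get $s' = s_1 = s_2$, so you must add that this forces $\tau'(s') = \tau'(s_1)$, which contradicts the \emph{strict} inequality $\tau'(s') <'_{\bar h} \tau'(s_1)$.
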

\begin{proof}
    We first argue that $\alpha$ is continuous.
    For this, we show that $\alpha$ is continuous at every point $x \in T$.
    Fix $x \in T$ and set $h = f(x)$.
    Since all edges of a merge tree have positive length, we can consider a small neighbourhood $N$ of $x$, in which all points are reachable from $x$ by an $f$-monotone path.
    We define the distance between $x$ and a point $x' \in N$ as the length of such an $f$-monotone path.
    To show continuity of $\alpha$ at $x$, we show that for any $x' \in N$, the point $\alpha(x)$ is an ancestor of $\alpha(x')$, or vice versa.
    This implies the existence of an $f'$-monotone path in $T'$ between $\alpha(x)$ and $\alpha(x')$, whose length (induced by $f'$) is $|f'(\alpha(x))-f'(\alpha(x'))|=|f(x)-f(x')|$ and hence equal to the distance between $x$ and $x'$.
    
    A point $x'$ in a small neighbourhood of $x$ either has $f(x')<f(x)$ or $f(x')>f(x)$.
    If $f(x') < f(x)$, then $x'$ is a descendant of $x$.
    Let $t_1, t_2 \in \tau^{-1}(x)$ be the first and last time $x$ is visited by $\tau$.
    As $\tau$ is an in-order curve on $T$, we know that $x'$ is visited by the subcurve $[t_1, t_2]$.
    Let $t' \in \tau^{-1}(x')$, so that $t_1 \le t' \le t_2$, and set $y = \lca(\tau'(t_1), \tau'(t_2))$.
    By Lemma~\ref{lem:subcurves}, it then follows that the point $\tau'(t)$ lies in the subtree $T'_y$.
    Moreover, by Lemma~\ref{lem:alpha-well-defined}, we know that $\an{\tau'(t_1)}{h+\delta} = \an{\tau'(t_2)}{h+\delta}$.
    As a result, we know that $y$ is an descendant of $\an{\tau'(t_1)}{h+\delta} = \alpha(x)$, and thus that $\tau'(t) \preceq \alpha(x)$.
    Hence, as we assumed $f(x') < f(x)$, it follows that $\alpha(x') \preceq \alpha(x)$.
    If $f(x') > f(x)$, then $x$ is a descendant of $x'$.
    We can use the same arguments as before to reason that $\alpha(x) \preceq \alpha(x')$.
    This shows that $\alpha$ is indeed continuous.
    
    
    To see that $\alpha$ is monotone, let $x_1, x_2 \in T$ such that $x_1 \le_h x_2$, where $h = f(x_1) = f(x_2)$.
    Since $\tau$ is an in-order curve, we know that for any $t_1 \in \tau^{-1}(x_1)$ and $t_2 \in \tau^{-1}(x_2)$ we have $t_1 \le t_2$.
    Towards a contradiction, assume $\alpha(x_2) <'_h \alpha(x_1)$.
    Then $\an{\tau'(t_2)}{f(x)+\delta} <'_h \an{\tau'(t_1)}{f(x)+\delta}$, and by Lemma~\ref{lem:total-order-prop} it follows that $\tau'(t_2) <'_h \tau'(t_1)$.
    However, then it must be that $t_2 < t_1$, and we reach a contradiction.
    Lastly, by construction $\alpha$ maps points exactly $\delta$ higher, showing that it is a $\delta$-shift map.
\hfill\end{proof}

\noindent
We define a map $\beta \colon T' \to T$ symmetrically.
By exchanging the roles of $T$ and $T'$, and $\tau$ and $\tau'$ an analogous proof shows that $\beta$ is a monotone $\delta$-shift map.

\begin{lemma}
    The pair $(\alpha, \beta)$ is a monotone $\delta$-interleaving.
\end{lemma}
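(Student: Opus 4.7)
The plan is to verify the four defining conditions \enumit{(C1)}--\enumit{(C4)} of Definition~\ref{def:interleaving}. Conditions \enumit{(C1)} and \enumit{(C3)} are immediate from the facts that $\alpha$ and $\beta$ are $\delta$-shift maps, and monotonicity of both maps has already been established above. Hence the remaining task is to verify \enumit{(C2)}, that $\beta(\alpha(x)) = x^{2\delta}$ for all $x \in T$; condition \enumit{(C4)} then follows from the symmetric argument with the roles of $T, \tau$ and $T', \tau'$ swapped.

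To verify \enumit{(C2)}, I would fix $x \in T$ and any $t \in \tau^{-1}(x)$, and set $y = \alpha(x) = \an{\tau'(t)}{f(x) + \delta}$. Since $\tau'(t) \preceq y$ and $\tau'$ starts and ends at the root of $T'$, continuity forces $\tau'$ to pass through $y$ at some parameter at most $t$ and at some parameter at least $t$. Define $t^- = \max\{s \le t : \tau'(s) = y\}$ and $t^+ = \min\{s \ge t : \tau'(s) = y\}$. Applying Corollary~\ref{cor:subcurves} to $t^-$ and $t^+$, both of which map to $y$ under $\tau'$, gives $\tau'([t^-, t^+]) \subseteq T'_y$, so $f'(\tau'(s)) \le f'(y) = f(x) + \delta$ for every $s \in [t^-, t]$. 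The $\delta$-matching property then yields $f(\tau(s)) \le f(x) + 2\delta$ on the same interval. Because $T$ is a tree and $\tau([t^-, t])$ is a connected subset containing both $\tau(t^-)$ and $x = \tau(t)$, it must contain the entire path between them, and in particular their lowest common ancestor. Thus $f(\lca(\tau(t^-), x)) \le f(x) + 2\delta$, which forces $\tau(t^-) \preceq x^{2\delta}$. Evaluating $\beta$ at $y$ using the parameter $t^-$ then gives $\beta(y) = \an{\tau(t^-)}{f'(y) + \delta} = \an{\tau(t^-)}{f(x) + 2\delta} = x^{2\delta}$, as desired.

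The main obstacle is precisely this geometric step of passing from a height bound along $\tau$ to an ancestor relation in $T$: the tree structure is essential here, since the continuous image $\tau([t^-, t])$ must contain the unique tree-path between its endpoints, whose highest point is their lowest common ancestor, and only this forces the $\lca$ to respect the height bound. Well-definedness of $\beta$---so that the value above is independent of the particular visit of $\tau'$ to $y$ chosen---is a direct analogue of the well-definedness lemma proved for $\alpha$ earlier, obtained by exchanging the roles of $T$ and $T'$. Once \enumit{(C2)} is established, the verification of \enumit{(C4)} is immediate by symmetry, completing the proof that $(\alpha, \beta)$ is a monotone $\delta$-interleaving.
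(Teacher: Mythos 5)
Your proof is correct, and the key step for \enumit{(C2)} goes by a genuinely different route than the paper's. Both proofs set up identically (fix $t\in\tau^{-1}(x)$, let $y=\alpha(x)=\an{\tau'(t)}{f(x)+\delta}$, and locate parameters $t^-\le t\le t^+$ at which $\tau'$ visits $y$), but the paper then stays on the $T$ side: it argues combinatorially that $\tau$ must visit $\beta(y)$ at parameters bracketing $t$ --- with a two-level case analysis depending on whether $\tau(t_1)$ and $\tau(t_2)$ already equal $\beta(y)$ or one must pass to further parameters $t_3<t_1$ and $t_4>t_2$ --- and then applies Corollary~\ref{cor:subcurves} to $\tau$ on $T$ to conclude $x\in T_{\beta(y)}$. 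You instead apply Corollary~\ref{cor:subcurves} to $\tau'$ on $T'$ to get $f'(\tau'(s))\le f'(y)$ on $[t^-,t^+]$, transfer this to a height bound $f(\tau(s))\le f(x)+2\delta$ via the $\delta$-matching inequality, and then use connectivity of $\tau([t^-,t])$ in the tree to bound $f(\lca(\tau(t^-),x))$ and deduce $\tau(t^-)\preceq x^{2\delta}$ directly. Your metric argument avoids the paper's nested case distinction entirely (including the degenerate case $y=\tau'(t)$, which it handles uniformly with $t^-=t^+=t$), at the cost of invoking the $\delta$-matching property explicitly in the body of the proof rather than only implicitly through the well-definedness of $\beta$; the paper's visit-counting argument, in exchange, establishes the slightly stronger structural fact that $\tau$'s visits to $\beta(y)$ bracket $t$. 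The remaining parts of your writeup --- \enumit{(C1)}/\enumit{(C3)} from the $\delta$-shift property, monotonicity from the preceding lemmas, and \enumit{(C4)} by symmetry --- match the paper.
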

\begin{proof}
We need to show that for all $x \in T$ it holds that $\beta(\alpha(x)) = x^{2\delta}$.
Fix $x \in T$, set $h = f(x)$, and take $t \in \tau^{-1}(x)$.
By construction, we have $\alpha(x) = \an{\tau'(t)}{h+\delta}$.
Denote this point $y \coloneq \alpha(x)$.
We distinguish two cases.
If $y = \tau'(t)$, then since $t \in \tau'^{-1}(y)$ we obtain $\beta(\alpha(x)) = \beta(y) = \an{\tau(t)}{h+2\delta} = x^{2\delta}$.
Otherwise, if $y \neq \tau'(t)$, then $y$ must be a strict ancestor of $\tau'(t)$.
Therefore, we know that $\tau$ visits $y$ at least once on the path from root to $\tau'(t)$, and at least once on the path from $\tau'(t)$ to the root.
In other words, there are $t_1, t_2 \in \tau'^{-1}(y)$ such that $t_1 < t < t_2$.

By construction, we have that $\beta(y) = \an{\tau(t_1)}{h+2\delta} = \an{\tau(t_2)}{h+2\delta}$.
We see that $f(\beta(y)) = h + 2\delta$, so it suffices to show that $x \in T_{\beta(y)}$.
We again distinguish two cases.
First, consider the case that both $\tau(t_1) = \beta(y)$ and $\tau(t_2) = \beta(y)$.
Since $t_1 < t < t_2$ and $x = \tau(t)$, we can use Corollary~\ref{cor:subcurves} to argue that $x \in T_{\beta(y)}$.
Otherwise, if not both $\tau(t_1) = \beta(y)$ and $\tau(t_2) = \beta(y)$, it must be the case that $\tau$ visits $\beta(y)$ on the path from the root to $\tau(t_1)$, and on the path from $\tau(t_2)$ to the root.
This implies that there are $t_3, t_4 \in \tau^{-1}(\beta(y))$ such that $t_3 < t_1$ and $t_2 < t_4$.
So, we have $t_3 < t < t_4$.
Again, it follows by Corollary~\ref{cor:subcurves} that the point $\tau(t)$ lies in the subtree $T_{\beta(y)}$.
A symmetric argument shows that $\alpha(\beta(y)) = y^{2\delta}$ for all $y \in T'$.
\hfill\end{proof}

\noindent
It follows that we can construct a monotone $\delta$-interleaving from any pair of $\delta$-matched in-order curves.
\begin{corollary}\label{cor:dmi_leq_df}
    $\orddist{}(T, T') \le \frechet(f\circ\tau, f'\circ\tau')$.
\end{corollary}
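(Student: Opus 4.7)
The plan is to combine the preceding sequence of lemmas with the alternative definition of the Fréchet distance stated in the paragraph preceding Corollary~\ref{cor:dmi_leq_df}. Recall that the induced 1D curves satisfy $P = f \circ \tau$ and $Q = f' \circ \tau'$, and that the Fréchet distance between $P$ and $Q$ was shown to equal
\[
\inf_{\sigma,\,\sigma'} \max_{t \in [0,1]} |f(\sigma(t)) - f'(\sigma'(t))|,
\]
where the infimum ranges over all pairs of in-order curves on $T$ and $T'$. This is the only fact about Fréchet distance that I will use.

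First I would fix an arbitrary $\delta > \frechet(P, Q)$. By the equivalent definition above, there exists a pair of in-order curves $\sigma \colon [0,1] \to T$ and $\sigma' \colon [0,1] \to T'$ that is $\delta$-matched, i.e.\ $|f(\sigma(t)) - f'(\sigma'(t))| \le \delta$ for every $t \in [0,1]$. Next I would apply the construction developed in the three lemmas immediately before the corollary to $(\sigma, \sigma')$: this yields maps $\alpha \colon T \to T'$ and $\beta \colon T' \to T$, defined by taking $\alpha(x) = \an{\sigma'(t)}{f(x)+\delta}$ for any $t \in \sigma^{-1}(x)$ (well-defined by the first lemma), and symmetrically for $\beta$. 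The three lemmas together guarantee that $\alpha$ and $\beta$ are monotone $\delta$-shift maps and that $(\alpha, \beta)$ forms a monotone $\delta$-interleaving between $T$ and $T'$. Hence $\orddist{}(T, T') \le \delta$.

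Since this holds for every $\delta > \frechet(P, Q)$, I would conclude by taking the infimum over such $\delta$ to obtain $\orddist{}(T, T') \le \frechet(P, Q)$, which is the statement of the corollary.

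There is no real obstacle: all of the geometric and ordering work has already been done in the three preceding lemmas. The only minor subtlety is that the Fréchet distance is defined as an infimum and need not be attained by any specific pair $(\sigma, \sigma')$, which is why the argument must quantify over $\delta$ strictly larger than $\frechet(P, Q)$ and then pass to the limit, rather than directly instantiating a $\frechet(P, Q)$-matched pair. This limiting step is routine and does not require any further continuity or compactness argument beyond what is already available.
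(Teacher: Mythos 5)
Your proof is correct and takes essentially the same route as the paper, which derives the corollary directly from the preceding lemmas showing that any $\delta$-matched pair of in-order curves yields a monotone $\delta$-interleaving. Your explicit handling of the infimum (quantifying over $\delta > \frechet(P,Q)$ and passing to the limit) is a careful touch that the paper leaves implicit, but it does not change the substance of the argument.
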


\subsection{From monotone interleavings to matched in-order curves.}
Let $(\alpha, \beta)$ be a monotone $\delta$-interleaving between $T$ and $T'$.
Let $\tau$ be any in-order curve on $T$.
We use $\tau$ to construct an in-order curve $\tau'$ on $T'$.
First, initialise a map $\tau'_0 \colon [0, 1] \to T'$ by taking the image of $\tau$ on $T'$ under $\alpha$.
That is, we set $\tau'_0(t) = \alpha(\tau(t))$.
Because $\alpha$ is continuous, $\tau'_0$ is a curve, and moreover because $\alpha$ is monotone, $\tau'_0$ is a weak in-order curve on $T'$.
Furthermore, by construction, $\tau'_0$ satisfies $f'(\tau'_0(t)) - f(\tau(t)) = \delta$ for all $t \in [0, 1]$.

The map $\alpha$ is not necessarily injective or surjective on $T'$, so $\tau'_0$ might visit points $y \in T'$ more than $\mathrm{deg}(y) + 1$ times, and might not visit some points at all.
Hence, $\tau'_0$ is not necessarily a (partial) in-order curve.
We first transform $\tau'_0$ into a partial in-order curve $\tau'_1$ on $T'$ by contracting subcurves that are visited too often, and then transform $\tau'_1$ into an in-order curve $\tau'$ by extending the curve to visit subtrees that are not visited. For the intermediate curves of these transformations, we maintain that (under $f'$) they lie within distance $\delta$ of $f\circ\tau$.

\begin{figure}[b]
    \centering
    \includegraphics{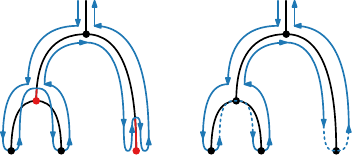}
    \caption{We resolve violating subcurves by moving them down.}
    \label{fig:violating-subcurve}
\end{figure}

\subparagraph{Violating subcurves.}
Let $\sigma \colon [0, 1] \to T$ be an arbitrary curve on an ordered merge tree $(T, f, \le_h)$.
Call a subcurve $[\ell, r]$ of $\sigma$ \emph{violating} if $\ell < r$, $\sigma(\ell) = \sigma(r)$, and for all $t \in (\ell, r)$ it holds that $f(\sigma(t)) > f(\sigma(\ell))$ (see Figure~\ref{fig:violating-subcurve}).
A violating subcurve is \emph{maximal} if it is not contained in any other violating subcurve of $\sigma$.
We show that two maximal violating subcurves cannot overlap.

\newpage
\begin{lemma}
    Any two maximal violating subcurves of a curve $\sigma$ on $T$ are interior-disjoint.
\end{lemma}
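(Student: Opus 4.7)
The plan is to argue by contradiction. Suppose, towards a contradiction, that there exist two \emph{distinct} maximal violating subcurves $[\ell_1, r_1]$ and $[\ell_2, r_2]$ of $\sigma$ that share a common interior point $t^*$. Without loss of generality assume $\ell_1 \le \ell_2$. The existence of $t^*$ then forces $\ell_2 < r_1$ and $\ell_2 < r_2$. I would split the argument into two cases according to the relative order of $r_1$ and $r_2$.

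In the first case, $r_2 \le r_1$, we have $[\ell_2, r_2] \subseteq [\ell_1, r_1]$. Equality would contradict distinctness, so the containment is strict. But then $[\ell_1, r_1]$ is itself a violating subcurve strictly containing $[\ell_2, r_2]$, contradicting the maximality of $[\ell_2, r_2]$.

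In the second case, $r_1 < r_2$, so the configuration is $\ell_1 \le \ell_2 < r_1 < r_2$, and I would chain together two height inequalities to derive a contradiction. First, since $\ell_2 \in [\ell_1, r_1)$, either $\ell_1 = \ell_2$ (in which case trivially $f(\sigma(\ell_2)) = f(\sigma(\ell_1))$), or $\ell_1 < \ell_2$ and the violating property of $[\ell_1, r_1]$ yields $f(\sigma(\ell_2)) > f(\sigma(\ell_1))$; either way, $f(\sigma(\ell_2)) \ge f(\sigma(\ell_1))$. Second, since $r_1 \in (\ell_2, r_2)$, the violating property of $[\ell_2, r_2]$ gives $f(\sigma(r_1)) > f(\sigma(\ell_2))$. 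Combining these with the equality $\sigma(r_1) = \sigma(\ell_1)$, one obtains
\[
    f(\sigma(\ell_1)) = f(\sigma(r_1)) > f(\sigma(\ell_2)) \ge f(\sigma(\ell_1)),
\]
which is the desired contradiction.

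The argument is essentially a clean case analysis, so there is no real obstacle; the only subtlety is remembering to handle the boundary case $\ell_1 = \ell_2$, which is absorbed by using a weak inequality in the first chain of heights.
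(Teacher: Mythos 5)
Your proposal is correct and follows essentially the same route as the paper: a proof by contradiction that pins down the overlap configuration $\ell_1 \le \ell_2 < r_1 < r_2$ and then derives the contradictory height chain $f(\sigma(\ell_1)) = f(\sigma(r_1)) > f(\sigma(\ell_2)) \ge f(\sigma(\ell_1))$. The only difference is bookkeeping: the paper invokes maximality to immediately force all inequalities strict, whereas you handle the containment case $r_2 \le r_1$ and the boundary case $\ell_1 = \ell_2$ explicitly, which is a slightly more careful but equivalent presentation.
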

\begin{proof}
    Suppose for a contradiction that two maximal violating subcurves $[\ell_1, r_1]$ and $[\ell_2, r_2]$ of $\sigma$ are not interior-disjoint, and assume without loss of generality that $\ell_1 \le \ell_2$.
    By maximality, we have $\ell_1 < \ell_2 < r_1 < r_2$.
    By definition of violating subcurves, we have $f(\sigma(\ell_2)) > f(\sigma(\ell_1)) = f(\sigma(r_1))$ and simultaneously $f(\sigma(r_1)) > f(\sigma(\ell_2))$, reaching a contradiction.
\hfill\end{proof}

\noindent
We will show in Lemma~\ref{lem:partialVsViolating} that a weak in-order curve $\sigma$ on $T$ is a partial in-order curve if and only if it does not have any violating subcurve.
For this, we use the following property.
\begin{lemma}\label{lem:weakAbovefx}
    For any weak in-order curve $\sigma \colon [0, 1] \to T$, if $\sigma(t_1)=\sigma(t_2)=x$, then for all $t\in[t_1,t_2]$ with $f(\sigma(t))=f(x)$, we have $\sigma(t)=x$.
\end{lemma}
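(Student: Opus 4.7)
The plan is to derive the conclusion directly from totality of the layer-order $\le_h$ at height $h = f(x)$ combined with the defining monotonicity property \enumit{1} of a weak in-order curve. Setting $y \coloneqq \sigma(t)$, the hypothesis $f(\sigma(t)) = f(x)$ places both $x$ and $y$ in the single layer $\ls{h}$. This is the crucial observation, since it makes the total order $\le_h$ directly applicable to the pair $(x, y)$, without any need to invoke the induced relation $\irel$ on the whole tree.

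By totality of $\le_h$, at least one of $y \le_h x$ or $x \le_h y$ holds. In the first case, $\sigma(t) = y \le_h x = \sigma(t_1)$, and the order-respecting property of $\sigma$ forces $t \le t_1$; combined with the hypothesis $t_1 \le t$, this yields $t = t_1$ and hence $\sigma(t) = \sigma(t_1) = x$. The second case is perfectly symmetric: $\sigma(t_2) = x \le_h y = \sigma(t)$ forces $t_2 \le t$, and combined with $t \le t_2$ gives $t = t_2$ and $\sigma(t) = \sigma(t_2) = x$. In both cases we conclude $\sigma(t) = x$, as required.

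I do not anticipate any real obstacle. The result is a direct consequence of the interplay between the total order within a single layer and the order-preserving property of $\sigma$; in particular, no appeal to the stronger visit-count properties \enumit{2} or \enumit{3} is needed, precisely because we only compare two points that already lie at the same height. The one thing worth flagging is that the property \enumit{1} is phrased using $\le_h$, so one should be careful to apply it only to pairs in a common layer — and that is exactly the situation here.
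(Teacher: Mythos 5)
Your proof is correct and uses essentially the same ingredients as the paper's: the order-respecting property \enumit{1} applied to the two same-height points together with the total-order structure of $\le_h$. The paper phrases it as a one-line contradiction ($x <_{f(x)} \sigma(t) <_{f(x)} x$ via the contrapositive of property \enumit{1}), whereas you case-split on totality and conclude $t \in \{t_1, t_2\}$ directly; the two arguments are interchangeable.
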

\begin{proof}
    If instead $\sigma(t)=x'\neq x$ for some $t$ with $f(\sigma(t))=f(x)$, then because $\sigma$ is weak in-order, we would have $x<_{f(x)} x'<_{f(x)} x$, which is impossible.
\hfill\end{proof}

\begin{lemma}\label{lem:partialVsViolating}
    A weak in-order curve $\sigma \colon [0, 1] \to T$ is partial if and only if it does not have any violating subcurve.
\end{lemma}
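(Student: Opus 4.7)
The plan is to prove both implications via a counting argument carried out at each point $x \in T$. For such $x$, let $E(x)$ denote the number of connected components of $\sigma^{-1}(T_x)$, and for each child $v_i$ of $x$ let $T_i(x)$ denote the number of connected components of $\sigma^{-1}(T_{x, v_i} \setminus \{x\})$; we view each $T_i(x)$ as the number of ``tours'' that $\sigma$ makes into the planted subtree $T_{x, v_i}$. Every component $[a, b]$ of $\sigma^{-1}(T_x)$ satisfies $\sigma(a) = \sigma(b) = x$, and outside its tours the curve dwells on $\{x\}$, so $[a, b]$ contributes exactly one more visit to $x$ than it contains tours. Summing over components yields
\[
    k(x) \;=\; E(x) + \textstyle\sum_i T_i(x),
\]
where $k(x)$ is the total number of visits to $x$.

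The key structural claim is that, under weak in-order, $E(x) \ge 2$ forces a violating subcurve at $x$. If $[a_1, b_1]$ and $[a_2, b_2]$ are two consecutive components of $\sigma^{-1}(T_x)$, then $\sigma(b_1) = \sigma(a_2) = x$ and $\sigma$ lies in $T \setminus T_x$ on $(b_1, a_2)$. Weak in-order forbids $\sigma$ from entering any sibling subtree of an ancestor of $x$ on this gap, since such a detour would produce a visit sequence in a shared layer that violates $\le_h$. Hence $\sigma$ is confined to the strict ancestors of $x$, so $f(\sigma(t)) > f(x)$ on $(b_1, a_2)$, and $[b_1, a_2]$ is a violating subcurve at $x$. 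Applying the same argument to an interior edge point $p \in (x, v_i)$ shows that $T_i(x) \ge 2$ implies $E(p) \ge 2$, and hence a violating subcurve at $p$.

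For the forward direction, suppose $\sigma$ is partial in-order. For any visited $x$, property~\enumit{2} gives $k(x) = \mathrm{deg}(x) + 1 - \kappa$, where $\kappa$ is the unvisited degree of $x$. Because $E(x) \ge 1$ and $T_i(x) \ge 1$ for each of the $\mathrm{deg}(x) - \kappa$ visited subtrees, the identity above forces $E(x) = 1$. The structural claim then rules out a violating subcurve at $x$, and since violating subcurves can only occur at visited points, $\sigma$ has none.

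Conversely, suppose $\sigma$ has no violating subcurves. The structural claim gives $E(x) \le 1$ everywhere, and its edge-point variant gives $T_i(x) \le 1$ for every $x$ and $i$. For each visited $x$ we then have $E(x) = 1$ and $T_i(x) = 1$ precisely on the visited subtrees, so $k(x) = 1 + (\mathrm{deg}(x) - \kappa) = \mathrm{deg}(x) + 1 - \kappa$, as required by property~\enumit{2}. The main obstacle is the structural claim, specifically ruling out that $\sigma$ descends into a sibling subtree lying at heights below $f(x)$ during the gap $(b_1, a_2)$; this requires invoking the consistency of $\le_h$ at the intermediate heights where both the sibling subtree and $T_x$ admit points, rather than only at the layer $f(x)$.
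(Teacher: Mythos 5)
Your strategy is sound and, once unpacked, is essentially the paper's argument in a more quantitative form: the paper likewise reduces everything to the fact that a weak in-order curve which leaves $T_x$ between two visits of $x$ must stay strictly above $f(x)$ in the meantime (hence produces a violating subcurve), and then counts visits --- by pigeonhole for the upper bound, where you prove the exact identity $k(x)=E(x)+\sum_i T_i(x)$. The step you flag as the ``main obstacle'' is indeed where all the content lives, but your diagnosis of what is needed to close it is off: you do \emph{not} need consistency of $\le_h$ across intermediate heights. If $\sigma(b_1)=\sigma(a_2)=x$ and some $t\in(b_1,a_2)$ has $\sigma(t)\notin T_x$ with $f(\sigma(t))\le f(x)$, then the image of $\sigma$ on $[b_1,t]$ contains the unique tree path from $x$ to $\sigma(t)$, which crosses the layer at height $f(x)$ at the point $\an{\sigma(t)}{f(x)}\neq x$ (it goes up to $\lca(x,\sigma(t))$, which lies strictly above $f(x)$, and back down). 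This contradicts Lemma~\ref{lem:weakAbovefx}, which is available and operates entirely within the single layer $f(x)$. Hence the gap $(b_1,a_2)$ stays strictly above $f(x)$ and $[b_1,a_2]$ is violating, which completes your structural claim.

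Two smaller repairs. First, in the forward direction you use the structural claim ($E(x)\ge 2$ implies a violating subcurve) to deduce from $E(x)=1$ that there is no violating subcurve at $x$; that is the converse implication. The converse happens to be the easy direction --- the interior of a violating subcurve $[\ell,r]$ at $x$ lies strictly above $f(x)$, hence outside $T_x$, so $\ell$ and $r$ fall in distinct components of $\sigma^{-1}(T_x)$ and $E(x)\ge 2$ --- so nothing breaks, but it must be stated; the paper instead gets this direction in one line from Lemma~\ref{lem:subcurves}. Second, in the edge-point variant you cannot pick $p\in(x,v_i)$ arbitrarily, since a tour need not descend past a given edge point. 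Choose $p$ on the edge at a height between $f(x)$ and the larger of the two tours' minimum heights; each tour is connected, starts and ends at $x$, and reaches below that height, so both tours cross $p$, and they are separated by a return to $x\notin T_p$, giving $E(p)\ge 2$ as you intend.
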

\begin{proof}
    If $\sigma$ has a violating subcurve, it does not satisfy Lemma~\ref{lem:subcurves}.
    Hence, it cannot be a partial in-order curve.

    \noindent
    To show the converse, suppose that $\sigma$ does not have violating subcurves.
    We first show that (for any $x\in T$) $\sigma$ must remain in the subtree of $x$ between consecutive visits of $x$.
    Let $t_1<t_2$ and assume that $\sigma(t_1)=\sigma(t_2)=x$, and $\sigma(t)\neq x$ for any $t\in(t_1,t_2)$.
    We show that the subcurve $[t_1,t_2]$ of $\sigma$ lies in $T_x$.
    If not, then there exists some $t\in(t_1,t_2)$ such that $f(\sigma(t))>f(x)$.
    Because the subcurve $(t_1,t_2)$ does not visit $x$, it follows from Lemma~\ref{lem:weakAbovefx} and continuity of $\sigma$ that $f(\sigma(t'))>f(x)$ for all $t'\in(t_1,t_2)$.
    Hence, $[t_1,t_2]$ is a violating subcurve, and we reach a contradiction.
    So indeed, the subcurve $[t_1, t_2]$ of $\sigma$ lies in $T_x$.
    
    We can now show that $\sigma$ visits each point $x \in T$ at most $\mathrm{deg}(x) + 1 - \kappa$ times, where $\kappa$ is the unvisited degree of $x$.
    Suppose for a contradiction that $\sigma$ visits $x$ at least $\mathrm{deg}(x) + 2 - \kappa$ times.
    Then by the pigeon hole principle, there are at least two subpaths of $\sigma$ that start and end at $x$, and both visit some strict descendant $x'$ of $x$.
    Hence, there exists a subpath of $\sigma$ that starts and ends at $x'$ and visits $x\notin T_{x'}$, which contradicts that any such subpath must remain in $T_{x'}$.
\hfill\end{proof}

\noindent 
We modify $\tau'_0$ by removing all violating subcurves.
Specifically, let $\mathcal{S} = \{[\ell_1, r_1], \ldots, [\ell_k, r_k]\}$ be the set of maximal violating subcurves of $\tau'_0$, and define
\[
    \tau'_1(t) = \begin{cases}
        \tau'_0(\ell_i) & \text{for } t \in [\ell_i, r_i] \in \mathcal{S},\\
        \tau'_0(t) & \text{otherwise.}
    \end{cases}
\]

\noindent
We argue that the resulting curve $\tau'_1$ is a partial in-order curve on $T'$.
\begin{lemma}
    The curve $\tau'_1$ is a partial in-order curve on $T'$.
\end{lemma}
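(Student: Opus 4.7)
The plan is to verify the two conditions for being a partial in-order curve separately: first that $\tau'_1$ is still a weak in-order curve, and second that $\tau'_1$ has no violating subcurve. Combined with Lemma~\ref{lem:partialVsViolating}, these two facts imply that $\tau'_1$ is a partial in-order curve on $T'$.

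For the first claim, I would factor $\tau'_1 = \tau'_0 \circ \phi$, where $\phi\colon [0, 1] \to [0, 1]$ sends each $t \in [\ell_i, r_i] \in \mathcal{S}$ to $\ell_i$ and is the identity elsewhere. Since the maximal violating subcurves are interior-disjoint, $\phi$ is well-defined and non-decreasing, and $\tau'_0 \circ \phi$ is continuous because $\tau'_0(\ell_i) = \tau'_0(r_i)$ on every collapsed subcurve. Any relation $\tau'_1(t_1) \le'_h \tau'_1(t_2)$ then becomes $\tau'_0(\phi(t_1)) \le'_h \tau'_0(\phi(t_2))$, so the weak in-order property of $\tau'_0$ gives $\phi(t_1) \le \phi(t_2)$, and monotonicity of $\phi$ together with the degenerate case $\phi(t_1) = \phi(t_2)$ (where $\tau'_1(t_1) = \tau'_1(t_2)$) yields $t_1 \le t_2$. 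The endpoints $\tau'_1(0), \tau'_1(1)$ remain at the root of $T'$, since the root has height $\infty$ and cannot lie in the interior of any violating subcurve.

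For the second claim, suppose towards a contradiction that $\tau'_1$ has a violating subcurve $[\ell, r]$ with $\tau'_1(\ell) = \tau'_1(r) = y$. I would lift it to a violating subcurve of $\tau'_0$ by setting $\ell' = r_i$ if $\ell \in [\ell_i, r_i]$ for some maximal violating subcurve of $\tau'_0$ and $\ell' = \ell$ otherwise, and symmetrically $r' = \ell_j$ if $r \in [\ell_j, r_j]$ and $r' = r$ otherwise. In every case $\tau'_0(\ell') = \tau'_0(r') = y$. If $\ell$ and $r$ lay in the same collapsed interval, $\tau'_1$ would be constantly $y$ on $[\ell, r]$, contradicting the strict height condition on the interior; together with interior-disjointness of the maximal violating subcurves this gives $\ell' < r'$. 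For $t \in (\ell', r')$ outside every collapsed interval, $\tau'_0(t) = \tau'_1(t)$ lies strictly above $y$; and for $t$ inside some collapsed sub-interval $[\ell_k, r_k] \subseteq (\ell', r')$, we have $\ell_k \in (\ell, r)$, so $f'(\tau'_0(t)) \ge f'(\tau'_0(\ell_k)) = f'(\tau'_1(\ell_k)) > f'(y)$. Hence $[\ell', r']$ is a violating subcurve of $\tau'_0$, so it is contained in some maximal violating subcurve $[\ell_m, r_m]$ on which $\tau'_1$ is constant with value $\tau'_0(\ell_m)$, and a short case analysis on whether $\ell_m < \ell'$, $\ell_m = \ell'$, $r_m > r'$, or $r_m = r'$ produces an interior point of $(\ell, r)$ at which $\tau'_1$ has height at most $f'(y)$, contradicting the violating condition.

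The main obstacle is the bookkeeping in this lifting step: when $\ell$ or $r$ coincides with an endpoint of a collapsed interval, one must check carefully that $\ell' < r'$ and that $[\ell', r']$ is genuinely violating in $\tau'_0$, relying on interior-disjointness of the collapsed intervals and the strict height condition on $(\ell, r)$. Once the lift is established, the contradiction is essentially automatic: we removed every maximal violating subcurve of $\tau'_0$, so $\tau'_1$ cannot leave any behind.
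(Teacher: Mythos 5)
Your proposal is correct and follows essentially the same route as the paper: show that $\tau'_1$ has no violating subcurve and invoke Lemma~\ref{lem:partialVsViolating}, with any putative violating subcurve of $\tau'_1$ being traced back to a violating subcurve of $\tau'_0$, which must then lie inside a collapsed interval where $\tau'_1$ is constant — a contradiction. Your version is more careful than the paper's three-line argument in two respects that are genuinely needed (explicitly verifying that $\tau'_1$ remains weak in-order, which is a hypothesis of Lemma~\ref{lem:partialVsViolating}, and lifting the endpoints $\ell, r$ out of the collapsed intervals before claiming the subcurve is violating in $\tau'_0$), but the underlying idea is identical.
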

\begin{proof}
    If $\tau'_1$ has a violating subcurve $[t_1,t_2]$, then that subcurve is also violating in $\tau'_0$, so $[t_1,t_2]\subseteq [\ell_i,r_i]$ for some $i$. But then the subcurve $[t_1,t_2]$ of $\tau'_1$ is constant, and hence not violating.
    So $\tau'_1$ does not have any violating subcurve, so by Lemma~\ref{lem:partialVsViolating} it is a partial in-order curve on $T'$.
\hfill\end{proof}

\newpage
\noindent
The curve $\tau'_1$ is constant on each interval in $\mathcal{S}$.
We refer to a maximal interval\footnote{Possibly a singleton interval.} on which $\tau'_1$ is constant as a \emph{paused interval}.
We argue that $\tau'_1$ has the following two properties:
\begin{enumerate}[(i)]
    \item for each paused interval $[t_1, t_2]$ of $\tau'_1$, it holds that $f'(\tau'_1(t_1)) - f(\tau(t_1)) = \delta$, and
    \item all values $t \in [0,1]$ satisfy $|f(\tau(t)) - f'(\tau'_1(t))| \le \delta$.
\end{enumerate}

\begin{lemma}\label{lem:property-1}
    The curve $\tau'_1$ satisfies property \enumit{(i)}.
\end{lemma}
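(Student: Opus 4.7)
The plan is to show that at the left endpoint $t_1$ of any paused interval, we have $\tau'_1(t_1) = \tau'_0(t_1) = \alpha(\tau(t_1))$, after which property \enumit{(i)} follows immediately: because $\alpha$ is a $\delta$-shift map, $f'(\alpha(\tau(t_1))) = f(\tau(t_1)) + \delta$.

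The key step is a case analysis on where $t_1$ lies relative to the collection $\{[\ell_i, r_i]\}_{i=1}^k$ of maximal violating subcurves of $\tau'_0$. If $t_1 \notin \bigcup_i [\ell_i, r_i]$, then the defining formula for $\tau'_1$ gives $\tau'_1(t_1) = \tau'_0(t_1)$ directly. If instead $t_1 \in [\ell_i, r_i]$ for some $i$, I would argue that we must in fact have $t_1 = \ell_i$. Suppose not; then $\ell_i < t_1$, and on the whole interval $[\ell_i, r_i]$ the curve $\tau'_1$ is constant and equal to $\tau'_0(\ell_i)$, which is the same value as $\tau'_1(t_1)$. Since by assumption $\tau'_1$ is also constant equal to $\tau'_1(t_1)$ on $[t_1, t_2]$, it would follow that $\tau'_1$ is constant on the strictly larger interval $[\ell_i, \max(r_i, t_2)]$, contradicting the maximality of the paused interval $[t_1, t_2]$. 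Hence $t_1 = \ell_i$, and so $\tau'_1(t_1) = \tau'_0(\ell_i) = \tau'_0(t_1)$, as claimed.

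I do not foresee a serious obstacle, as the statement is essentially a bookkeeping observation about the definition of $\tau'_1$. The one point worth checking carefully is that a paused interval might in principle be larger than any single violating subcurve $[\ell_i, r_i]$, for instance if $\tau'_0$ happens to be constant at the same value $\tau'_0(\ell_i)$ on an interval adjacent to $[\ell_i, r_i]$, or if two violating subcurves happen to share a common value at their endpoints. The maximality argument above is agnostic to the precise structure of the paused interval: it only uses that $t_1$ is the left endpoint of a maximal constant subcurve of $\tau'_1$, and this suffices to force $t_1$ either to avoid $\bigcup_i[\ell_i,r_i]$ entirely or to coincide with some $\ell_i$.
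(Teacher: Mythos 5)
Your proof is correct and follows essentially the same route as the paper's: both reduce property \enumit{(i)} to the observation that the left endpoint $t_1$ of a maximal paused interval cannot lie interior to any maximal violating subcurve, so $\tau'_1(t_1)=\tau'_0(t_1)$ and the $\delta$-shift property of $\alpha$ finishes the argument. You merely spell out the maximality argument that the paper leaves implicit.
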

\begin{proof}
    For all $t\in[0,1]$, we have $f'(\tau'_0(t)) - f(\tau(t)) = \delta$.
    If $[t_1,t_2]$ is a paused interval of~$\tau'_1$, then $t_1$ did not lie interior of any violating subcurve of $\tau'_0$, and hence $\tau'_1(t_1)=\tau'_0(t_1)$, completing the proof.
\hfill\end{proof}

\noindent
To prove the second property, we bound the height of violating subcurves.
\begin{lemma}\label{lem:property-2}
    The curve $\tau'_1$ satisfies property \enumit{(ii)}.
\end{lemma}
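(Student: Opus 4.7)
The plan is to split into two cases according to whether $t$ lies outside every maximal violating subcurve of $\tau'_0$ or inside one. In the first case, $\tau'_1(t)=\tau'_0(t)$, so by construction of $\tau'_0$ we have $f'(\tau'_1(t))-f(\tau(t))=\delta$, and property \enumit{(ii)} follows immediately.

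In the second case, $t$ lies in some maximal violating subcurve $[\ell_i, r_i]$, and $\tau'_1(t)=\tau'_0(\ell_i)$. Since $\tau'_0=\alpha\circ\tau$ and $\alpha$ is a $\delta$-shift map, we have $f'(\tau'_1(t))=f(\tau(\ell_i))+\delta$, so the inequality to prove reduces to $0\le f(\tau(t))-f(\tau(\ell_i))\le 2\delta$. The lower bound is immediate from the definition of a violating subcurve: for $t\in(\ell_i,r_i)$, $f'(\tau'_0(t))>f'(\tau'_0(\ell_i))$, which upon applying the $\delta$-shift identity gives $f(\tau(t))>f(\tau(\ell_i))$, and equality holds trivially at the endpoints.

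The upper bound is the main obstacle, and it is where the interleaving structure is used. First I would observe that $\alpha(\tau(\ell_i))=\tau'_0(\ell_i)=\tau'_0(r_i)=\alpha(\tau(r_i))$, so the $\delta$-shift property of $\alpha$ yields $f(\tau(\ell_i))=f(\tau(r_i))$. Then I would apply condition \enumit{(C2)} of the $\delta$-interleaving to both endpoints: setting $p\coloneqq\beta(\alpha(\tau(\ell_i)))$, we get $p=\tau(\ell_i)^{2\delta}=\tau(r_i)^{2\delta}$, so $p$ is a common ancestor of $\tau(\ell_i)$ and $\tau(r_i)$ at height $f(\tau(\ell_i))+2\delta$. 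Let $q=\lca(\tau(\ell_i),\tau(r_i))$; then $q\preceq p$, hence $f(q)\le f(\tau(\ell_i))+2\delta$. Finally, since $\tau$ is an in-order curve on $T$, Lemma~\ref{lem:subcurves} guarantees that the subcurve $\tau|_{[\ell_i,r_i]}$ stays within $T_q$, and therefore $f(\tau(t))\le f(q)\le f(\tau(\ell_i))+2\delta$, completing the upper bound and the proof.

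The only subtlety I anticipate is to handle the endpoints of the violating subcurve cleanly (where the strict inequality from the definition becomes equality), and to be careful that $p$ is well-defined as an ancestor rather than merely a point at the correct height; both follow directly from the shift and ancestor properties built into the interleaving conditions, so no additional machinery should be needed beyond Lemma~\ref{lem:subcurves} and conditions \enumit{(C1)}--\enumit{(C2)}.
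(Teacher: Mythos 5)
Your proof is correct and follows essentially the same route as the paper's: reduce to showing $0 \le f(\tau(t)) - f(\tau(\ell_i)) \le 2\delta$, use the interleaving condition \enumit{(C2)} to conclude that $\tau(\ell_i)^{2\delta} = \tau(r_i)^{2\delta}$ bounds the height of $\lca(\tau(\ell_i), \tau(r_i))$, and invoke Lemma~\ref{lem:subcurves} to confine the subcurve to that subtree. No gaps.
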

\begin{proof}
    Fix $t \in [0, 1]$.
    We have $f'(\tau'_0(t)) - f(\tau(t)) = \delta$.
    If $t$ is not contained in any violating subcurve of $\tau'_0$, then $\tau'_1(t) = \tau'_0(t)$ and we are done.
    Otherwise, $t$ is contained in some violating subcurve $[\ell,r]$ of $\tau'_0$, and we must show that $f'(\tau'_0(t))-f'(\tau'_1(t))\in[0,2\delta]$.
    
    The construction of $\tau'_1$ never moves points upward; that is, we have $f'(\tau'_0(t))-f'(\tau'_1(t))\geq 0$.
    Moreover, it moves points of the violating path to the endpoint $\tau'_0(\ell)$ of the violating path, so it suffices to show that 
    $f'(\tau'_0(t))\leq f'(\tau'_0(\ell))+2\delta$.
    By definition, $\tau'_0=\alpha\circ\tau$, and $f'\circ\alpha$ maps every point exactly $\delta$ up compared to $f$, so it suffices to show that $f(\tau(t))\leq f(\tau(\ell))+2\delta$.

    Let $x = \lca(\tau(\ell), \tau(r))$.
    By Lemma~\ref{lem:subcurves}, the subcurve $[\ell,r]$ of the in-order curve $\tau$ on $T$ is contained in the subtree $T_x$, so $f(\tau(t)) \le f(x)$.
    We now show that $f(x)\leq f(\tau(\ell))+2\delta$.
    Because $(\alpha,\beta)$ is a $\delta$-interleaving, the $2\delta$-ancestors of $\tau(\ell)$ and $\tau(r)$ are equal:
    \[
        \tau(\ell)^{2\delta} = \beta(\alpha(\tau(\ell))) = \beta(\tau'_0(\ell)) = \beta(\tau'_0(r)) = \beta(\alpha(\tau(r))) = \tau(r)^{2\delta},
    \]
    so indeed, the lowest common ancestor $x$ of $\tau(\ell)$ and $\tau(r)$ lies at most $2\delta$ above $\tau(\ell)$.
    In summary, $f(\tau(t)) \le f(x) \le f(\tau(\ell)) + 2\delta$, which is what we needed to show.
\hfill\end{proof}

\subparagraph{Unvisited planted subtrees.}
Lastly, we want to transform our partial in-order curve $\tau'_1$ into an in-order curve $\tau'$ on $T'$ by resolving the $\tau'_1$-unvisited planted subtrees of $T'$.
For this, let $U_1, \ldots, U_k$ be the set of maximal $\tau'_1$-unvisited planted subtrees of $T'$.
We match each unvisited planted subtree $U_i$ to a parameter value $u_i$ that corresponds to a point $\tau'_1(u_i)$.
Suppose that $U_i=T'_{y,v}$ is a maximal unvisited planted subtree.
If $y$ lies interior of an edge, then $\tau'_1$ visits $y$ in exactly one paused interval $[\ell,r]$, and we assign $T'_{y,v}$ to $\ell$.
If on the other hand $y$ is a vertex, then we order the planted subtrees rooted at $y$ based on $\le_h$.
Consider the first one (if any) that is entered by $\tau'_1$ and comes after $T'_{y,v}$, and let $[\ell,r]$ be the paused interval at $y$ just before $\tau'_1$ enters that planted subtree. 
If instead $\tau'_1$ does not enter any planted subtree rooted at $y$ that comes after $T_{y,v}$, let $[\ell,r]$ be the last paused interval at $y$.
In each case, we assign $U_i=T'_{y,v}$ to $u_i\coloneqq\ell$ (the first point of the paused interval).
Note that consecutive $U_i$ may be matched to the same point.
Recall from Lemma~\ref{lem:property-1} that whenever $\ell$ is the first point of a paused interval, we have $f'(\tau'_1(\ell))-f(\tau(\ell))=\delta$, so every unvisited planted subtree $U_i$ is matched to a value $u_i$ with $f'(\tau'_1(u_i))-f(\tau(u_i))=\delta$.

Without loss of generality assume that the unvisited planted subtrees $U_1, \ldots, U_k$ are ordered based on where they appear in an in-order tree walk of $T'$.
We now construct a pair of in-order curves on $T$ and $T'$ respectively.
Specifically, we split both $\tau$ and $\tau'_1$ at the parameters $u_i$, and at $u_i$ insert the constant curve at $\tau(u_i)$ into $\tau$, and insert into $\tau'_1$ an in-order curve on the unvisited tree $U_i$.
We parameterise the resulting curves identically, and call them $\hat\tau$ and $\hat\tau'$, respectively.
Clearly $\hat\tau$ is an in-order curve on $T$.
We show that $\hat\tau'$ is also an in-order curve on $T'$.

\begin{lemma}\label{lem:in-order}
    The curve $\hat\tau'$ is an in-order curve on $T'$.
\end{lemma}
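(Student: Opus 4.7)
The plan is to verify the three defining properties of an in-order curve for $\hat\tau'$, using that $\tau'_1$ already satisfies properties \enumit{1} and \enumit{2} and that each inserted piece is an in-order curve on a planted subtree $U_i$, starting and ending at its root $y_i = \tau'_1(u_i)$. Continuity of $\hat\tau'$, and the fact that it starts and ends at the root of $T'$, follow immediately from these observations.

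For property \enumit{3}, I would fix $y \in T'$ and split into three cases. If $y$ lies strictly below the root $y_i$ of some maximal $\tau'_1$-unvisited planted subtree $U_i$, then $\tau'_1$ does not visit $y$, and the inserted in-order curve on $U_i$ visits $y$ exactly $\mathrm{deg}(y) + 1$ times, since the downward structure of $y$ is the same in $U_i$ and in $T'$. If $y$ is visited by $\tau'_1$ and is not the root of any maximal unvisited planted subtree, then its $\tau'_1$-unvisited degree is zero, so $\tau'_1$ already visits it $\mathrm{deg}(y) + 1$ times and no insertion touches it. Finally, if $y$ is the root of $\kappa_y > 0$ maximal unvisited planted subtrees, then $\tau'_1$ visits $y$ at exactly $\mathrm{deg}(y) + 1 - \kappa_y$ paused intervals, and each of the $\kappa_y$ insertions rooted at $y$ adds precisely one new connected component to $\hat\tau'^{-1}(y)$: the inserted curve leaves $y$ into its planted subtree and returns, splitting the affected paused-interval component into two; consecutive insertions stacked at the same $u_i$ each contribute one extra component. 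Summing, $y$ is visited $\mathrm{deg}(y) + 1$ times.

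For order preservation (property \enumit{1}), take $t_1 < t_2$ and consider in which region each parameter lies. If both lie in the retained portion of $\tau'_1$, weak-in-orderness of $\tau'_1$ gives the claim. If both lie inside the same inserted curve on $U_i$, that curve being in-order on $U_i$ (with the restriction of $(\le'_h)$) yields the claim. If they lie in curves inserted for distinct $U_i$ and $U_j$, the assumed in-order-walk ordering of the $U_i$'s corresponds to the $(\le'_h)$-order of their roots, and Lemma~\ref{lem:total-order-prop} transports the comparison to their descendants. The mixed case, with one parameter in the $\tau'_1$-region and the other inside an inserted curve on $U_i = T'_{y_i, v_i}$, is resolved by the specific choice of $u_i$: placing it at the paused interval immediately before $\tau'_1$ enters the next $(\le'_h)$-larger visited planted subtree rooted at $y_i$, or at the last paused interval at $y_i$ if no such subtree exists, ensures that $U_i$ is spliced in between the correct surrounding pieces of $\tau'_1$, so layer-order comparisons between $\tau'_1$-values and $U_i$-values respect the parameter order.

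The main obstacle is this last mixed case. Its key ingredient is that a weak in-order $\tau'_1$ enters the visited planted subtrees rooted at a vertex $y_i$ in $(\le'_h)$-order along the parameter, so the paused intervals at $y_i$ are arranged in the parameter space in a way that matches the $(\le'_h)$-ordering of the $y_i$-rooted subtrees. Combined with Lemma~\ref{lem:total-order-prop} to push the comparison down to arbitrary descendants, this pins down the relative position of inserted $U_i$-points and surrounding $\tau'_1$-values. The sub-cases where $y_i$ is interior to an edge (a unique paused interval) or where several $U_i$'s share the same $u_i$ (handled by their in-order-walk ordering) both reduce to this core argument.
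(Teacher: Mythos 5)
Your proposal is correct and follows essentially the same route as the paper's proof: the same four-case analysis for order preservation (both in $\tau'_1$, both in one inserted subtree, in different inserted subtrees, or mixed, resolved by the choice of $u_i$), and the same visit count $\mathrm{deg}(y)+1-\kappa+\kappa$ obtained by adding one visit per inserted planted subtree rooted at $y$. The extra detail you supply for the mixed case and for stacked insertions at a shared $u_i$ is consistent with, and slightly more explicit than, the paper's argument.
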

\begin{proof}
    To show that the curve $\hat\tau'$ respects the order of $T'$, we need to show that any $t_1$ and $t_2$ are in the correct order.
    We distinguish four cases based on whether $\hat\tau'(t_1)$ and $\hat\tau'(t_2)$ lie in unvisited subtrees.
    If neither lies in an unvisited subtree, then their order is inherited from $\tau'_1$, which already respected the order of $T'$.
    If $t_1$ lies in an unvisited subtree $U_i$ but $t_2$ does not, then the order is respected by construction of the parameter $u_i$.
    If both lie in the same unvisited subtree, then the order is respected because that subtree is traversed in-order.
    If they lie in different unvisited subtrees $U_i$ and $U_j$, then their order is respected by construction of the parameters $u_i$ and $u_j$.

    Observe that since $\hat\tau'$ visits the entire tree, the unvisited degree of each point $y \in T'$ is $0$.
    Next, fix $y \in T'$.
    We show that $y$ is still visited $\mathrm{deg}(y)+1-0$ times by $\hat\tau'$.
    If $y$ lies interior of a $\tau'_1$-unvisited planted subtree $U_i$, it is visited exactly $\mathrm{deg}(y) + 1$ times by the in-order curve on $U_i$.
    If instead $y$ does not lie interior of a $\tau'_1$-unvisited planted subtree, it was visited $\mathrm{deg}(y)+1-\kappa$ times by $\tau'_1$, where $\kappa$ is the number of $\tau'_1$-unvisited planted subtrees rooted at $y$.
    By construction of $\hat\tau'$, we have increased the number of visits by one for each subtree, resulting in a total of $\mathrm{deg}(y)+1-\kappa+\kappa=\mathrm{deg}(y)+1$ visits of $y$ in $\hat\tau'$.
\hfill\end{proof}

\begin{lemma}\label{lem:delta-matching}
    For all points $t \in [0,1]$, the curve $\hat\tau'$ satisfies $|f(\hat\tau(t)) - f'(\hat\tau'(t))| \le \delta$.
\end{lemma}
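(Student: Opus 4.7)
My plan is to case-split on whether $t$ corresponds to a parameter preserved from the original curves $\tau$ and $\tau'_1$, or lies in one of the intervals inserted to traverse a $\tau'_1$-unvisited planted subtree. In the first case, $\hat\tau(t) = \tau(t')$ and $\hat\tau'(t) = \tau'_1(t')$ for the associated $t' \in [0,1]$, so the bound follows immediately from property \enumit{(ii)} (Lemma~\ref{lem:property-2}).

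For the second case, suppose the inserted interval corresponds to the unvisited planted subtree $U_i = T'_{y_i, v_i}$. Then $\hat\tau(t) = \tau(u_i)$ is constant across the interval, while $\hat\tau'(t)$ traces some point of $U_i$ along the inserted in-order curve. By property \enumit{(i)} (Lemma~\ref{lem:property-1}), $f(\tau(u_i)) = f'(y_i) - \delta$. Every point of $U_i$ lies at height at most $f'(y_i)$, so $f'(\hat\tau'(t)) - f(\hat\tau(t)) \le f'(y_i) - (f'(y_i) - \delta) = \delta$. The remaining half of the inequality, $f(\hat\tau(t)) - f'(\hat\tau'(t)) \le \delta$, is equivalent to showing $f'(z) \ge f'(y_i) - 2\delta$ for every $z \in U_i$; that is, the depth of $U_i$ below $y_i$ is at most $2\delta$.

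The main obstacle is establishing this depth bound. I plan to derive it from the $\delta$-interleaving property, using in particular condition \enumit{(G3)} on the $\delta$-good map $\alpha$ (Definition~\ref{def:delta-good}): for any $z \in T' \setminus \im(\alpha)$, its lowest ancestor $z^F$ in $\im(\alpha)$ satisfies $f'(z^F) - f'(z) \le 2\delta$. The key sub-claim is that for every $z \in U_i$, the ancestor $z^F$ coincides with $y_i$ or lies strictly above it, so $f'(z^F) \ge f'(y_i)$ and hence $f'(z) \ge f'(y_i) - 2\delta$ as required. This in turn reduces to ruling out any strict descendant of $y_i$ inside $U_i$ from lying in $\im(\alpha)$: such a point would be visited by $\tau'_0 = \alpha \circ \tau$, and since $U_i$ is $\tau'_1$-unvisited, each such visit would necessarily lie in the interior of a maximal violating subcurve contracted during the construction of $\tau'_1$. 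A case analysis on the shared endpoint $x_0$ of such a violating subcurve---either $x_0$ lies in $U_i$ interior, immediately contradicting that $U_i$ is $\tau'_1$-unvisited because endpoints of violating subcurves survive in $\tau'_1$, or $x_0 \notin U_i$, in which case the excursion must re-enter $U_i$ through $y_i$ and a careful use of the maximality of $U_i$ as a $\tau'_1$-unvisited planted subtree yields a contradiction---should complete the argument.
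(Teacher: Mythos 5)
Your overall structure is the same as the paper's: the non-inserted parameters are handled by Lemma~\ref{lem:property-2}, and on an inserted interval the constant value $f'(y_i)-\delta$ of $f\circ\hat\tau$ (Lemma~\ref{lem:property-1}) reduces everything to showing that $U_i$ has depth at most $2\delta$. Your route to that depth bound via \enumit{(G3)} is a cosmetic variant of the paper's: the paper instead applies the interleaving identity to the deepest point $c_i$ of $U_i$, observing that if the depth exceeded $2\delta$ then $\alpha(\beta(c_i))=c_i^{2\delta}$ would be a point of $\im(\alpha)$ strictly below $y_i$ inside $U_i$. Both arguments therefore bottom out in the same sub-claim you isolate: no strict descendant of $y_i$ in $U_i$ lies in $\im(\alpha)$, because such a point would be visited by $\tau'_0$ and hence by $\tau'_1$, contradicting that $U_i$ is $\tau'_1$-unvisited.

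The gap is in your second case of that sub-claim. When the visit of $z$ by $\tau'_0$ lies inside a contracted maximal violating subcurve $[\ell,r]$ whose endpoint $x_0=\tau'_0(\ell)$ is \emph{not} interior to $U_i$, the entire excursion into $U_i$ (entering and leaving through $y_i$) collapses to the single point $x_0$ in $\tau'_1$, so $\tau'_1$ genuinely leaves the interior of $U_i$ unvisited via this excursion; there is no contradiction to be had there. Maximality of $U_i$ as a $\tau'_1$-unvisited planted subtree is the wrong tool: it only says $U_i$ cannot be extended upward past its root, and says nothing about excursions of $\tau'_0$ into $U_i$ that are erased by contraction. The contradiction has to come from a different source, e.g.\ by arguing that \emph{some} visit of $z$ (or of a suitable point of $\alpha^{-1}(z)$ under $\tau$) survives the contraction, using that $\tau$ visits every preimage point $\deg+1$ times and that violating subcurves of $\tau'_0$ are confined (via Lemma~\ref{lem:subcurves} applied to $\tau$) to subtrees of height at most $2\delta$ above their endpoints. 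As written, "a careful use of the maximality of $U_i$ yields a contradiction" is a placeholder for exactly the step that needs an argument, so the proof is not complete.
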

\begin{proof}
    Fix a parameter value $t \in [0, 1]$.
    If $t$ is not in one of the newly inserted subcurves of $\hat\tau'$, we know that both $\hat\tau(t) = \tau(t)$ and $\hat\tau'(t) = \tau'(t)$.
    The inequality follows immediately from Lemma~\ref{lem:property-2}.

    Otherwise, $t$ is in one of the newly inserted subcurves of $\hat\tau'$, corresponding to some $\tau'_1$-unvisited planted subtree $U_i$ rooted at $\tau'_1(u_i)$.
    By Lemma~\ref{lem:property-1}, the corresponding curve of $\hat\tau$ is a constant curve at height $f'(\tau'_1(u_i))-\delta$.
    To establish that $|f(\hat\tau(t)) - f'(\hat\tau'(t))| \le \delta$, it is therefore sufficient to show that the depth of $U_i$ is at most $2\delta$.

    Let $c_i$ be a deepest child of $U_i$, and for a contradiction assume that $f'(c_i)<f'(\tau'_1(u_i))-2\delta$.
    Then $(\alpha \circ \beta)(c_i)$ is a strict descendant of $\tau'_1(u_i)$, but in the image of $\alpha$, which means that $U_i$ contains a strict descendant of $u_i$ that is visited by $\tau'_1$, which is a contradiction.
\hfill\end{proof}

\noindent
It follows that the pair $(\hat\tau, \hat\tau')$ is $\delta$-matched.

\begin{lemma}
    $(\hat\tau, \hat\tau')$ is a $\delta$-matched pair of in-order curves for $T$ and $T'$ respectively.
\end{lemma}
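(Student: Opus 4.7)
The plan is to stitch together the three ingredients already proved in the preceding lemmas, plus a small bookkeeping check that the insertions into $\tau$ preserve the in-order property. Concretely, three things must be verified: that $\hat\tau$ is an in-order curve on $T$, that $\hat\tau'$ is an in-order curve on $T'$, and that $|f(\hat\tau(t)) - f'(\hat\tau'(t))| \le \delta$ for all $t \in [0,1]$.

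First, I would handle $\hat\tau$. By construction, $\hat\tau$ is obtained from the in-order curve $\tau$ on $T$ by splitting at each parameter $u_i$ and inserting a constant subcurve at the point $\tau(u_i)$. Such an insertion does not change the number of connected components of $\tau^{-1}(x)$ for any $x \in T$: for $x \neq \tau(u_i)$ the preimage is untouched, and for $x = \tau(u_i)$ the component already containing $u_i$ merely expands into a longer interval. Consequently $\hat\tau$ still visits every $x \in T$ exactly $\mathrm{deg}(x)+1$ times. The constant insertions also cannot introduce any violation of the order on $T$, since they are confined to a single point of each layer. Hence $\hat\tau$ is an in-order curve on $T$.

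Second, Lemma~\ref{lem:in-order} already gives that $\hat\tau'$ is an in-order curve on $T'$. Third, Lemma~\ref{lem:delta-matching} gives exactly the inequality $|f(\hat\tau(t)) - f'(\hat\tau'(t))| \le \delta$ required by the definition of a $\delta$-matched pair under the common parameterisation used to define $\hat\tau$ and $\hat\tau'$.

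There is no substantive obstacle here: the work was done in the three preceding lemmas and the only new observation is the trivial fact that inserting constant subcurves into an in-order curve leaves it in-order. Combining these three points yields the claim.
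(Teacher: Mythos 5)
Your proposal is correct and takes essentially the same route as the paper: the paper's proof likewise just invokes Lemma~\ref{lem:in-order} for $\hat\tau'$, Lemma~\ref{lem:delta-matching} for the distance bound, and appeals to the construction for $\hat\tau$. The only difference is that you spell out why the constant insertions leave $\tau$ in-order, which the paper leaves implicit under ``by construction.''
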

\begin{proof}
    By construction and by Lemma~\ref{lem:in-order} respectively, $\hat\tau$ is an in-order curve of $T$ and $\hat\tau'$ is an in-order curve of $T'$.
    Moreover, they are $\delta$-matched by Lemma~\ref{lem:delta-matching}.
\hfill\end{proof}

\begin{corollary}\label{cor:df_leq_dmi}
    $\frechet(f \circ \hat\tau, f' \circ \hat\tau') \le \orddist{}(T, T')$.
\end{corollary}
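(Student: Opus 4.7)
The plan is essentially to wrap up the construction that has been built across the preceding several lemmas. Let $\delta = \orddist{}(T, T')$. By definition of the monotone interleaving distance, there exists a monotone $\delta$-interleaving $(\alpha, \beta)$ between $T$ and $T'$, and this is precisely the interleaving underlying the construction of $\hat\tau$ and $\hat\tau'$ above.

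Concretely, I would fix any in-order curve $\tau$ on $T$, run the construction to obtain the weak in-order curve $\tau'_0 = \alpha \circ \tau$ on $T'$, contract its maximal violating subcurves to produce the partial in-order curve $\tau'_1$, and then resolve the $\tau'_1$-unvisited planted subtrees, while simultaneously padding $\tau$ with matching constant segments, to obtain the in-order curves $\hat\tau$ on $T$ and $\hat\tau'$ on $T'$. The previous lemma already states that $(\hat\tau, \hat\tau')$ is a $\delta$-matched pair of in-order curves, so that $|f(\hat\tau(t)) - f'(\hat\tau'(t))| \le \delta$ for every $t \in [0, 1]$.

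Finally, I would invoke the reformulation of the Fréchet distance in terms of pairs of in-order curves given at the start of this section. Since $\hat\tau$ and $\hat\tau'$ share a common parameterisation, the identity reparameterisation is already a $\delta$-matching between the induced 1D curves $f \circ \hat\tau$ and $f' \circ \hat\tau'$, yielding
\[
  \frechet(f \circ \hat\tau, f' \circ \hat\tau') \le \max_{t \in [0, 1]} |f(\hat\tau(t)) - f'(\hat\tau'(t))| \le \delta = \orddist{}(T, T').
\]
There is no substantive technical obstacle in this final step: all the real work has already been carried out in Lemmas~\ref{lem:property-1}--\ref{lem:delta-matching}, and the corollary merely translates the $\delta$-matched property of $(\hat\tau, \hat\tau')$ into a Fréchet-distance upper bound. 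Combined with Corollary~\ref{cor:dmi_leq_df}, this will establish the desired equality $\frechet(f \circ \hat\tau, f' \circ \hat\tau') = \orddist{}(T, T')$ for the particular pair of in-order curves produced by the construction.
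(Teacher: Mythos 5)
Your proposal is correct and matches the paper's (implicit) argument exactly: the corollary is simply the translation of the preceding lemma, which establishes that $(\hat\tau, \hat\tau')$ is a $\delta$-matched pair of in-order curves for a monotone $\delta$-interleaving with $\delta = \orddist{}(T, T')$, into an upper bound on the Fréchet distance via the in-order-curve reformulation given at the start of the section. No further work is needed.
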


\noindent
Recall that, for any two in-order curves $\tau$ and $\hat\tau$ of $T$, the 1D curves $f \circ \tau$ and $f \circ \hat\tau$ are equivalent up to reparameterisation (similarly, the 1D curves $f' \circ \tau'$ and $f' \circ \hat\tau'$ are equivalent up to reparameterisation). 
Combining Corollaries~\ref{cor:dmi_leq_df} and \ref{cor:df_leq_dmi} concludes the proof of Theorem~\ref{thm:frechet}.

\subparagraph{Computing the monotone interleaving distance.}
To compute the monotone interleaving distance between two ordered merge trees $(T, f, (\le_h))$ and $(T', f', (\le'_h))$, we can simply do the following.
In linear time, construct a 1D curve $P=f\circ\tau$ based on an arbitrary in-order tree walk $\tau$ of $T$, and a 1D curve $Q=f'\circ\tau'$ based on an arbitrary in-order tree walk $\tau'$ of $T'$.
Now, use any algorithm~\cite{alt95computing, buchin17four} for computing the Fr\'echet distance between 1D curves $P$ and $Q$ and return the result.
\begin{theorem}
    Given two ordered merge trees $T$ and $T'$, there exists an $\tilde\bigO(n^2)$ time algorithm that computes the monotone interleaving distance $\orddist{}(T, T')$ exactly.
\end{theorem}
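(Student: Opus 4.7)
The plan is to use Theorem~\ref{thm:df_eq_dmi} to reduce the problem to computing the Fr\'echet distance between two 1D curves, whose complexity is linear in $n$, and then invoke a known algorithm for the Fr\'echet distance on 1D curves.

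First, I would fix an arbitrary in-order tree walk $\tau$ on $T$ and an arbitrary in-order tree walk $\tau'$ on $T'$, and construct the induced 1D curves $P=f\circ\tau$ and $Q=f'\circ\tau'$ as polygonal curves. Each in-order walk can be produced by a standard DFS traversal in time linear in the size of the tree. The key observation about complexity is that an in-order curve on a tree with $n$ vertices visits each vertex $x$ exactly $\mathrm{deg}(x)+1$ times, so the total number of vertex visits is $\sum_{x\in V(T)}(\mathrm{deg}(x)+1) = (|V(T)|-1) + |V(T)| = \bigO(n)$. Consequently each of $P$ and $Q$ is a polygonal 1D curve with $\bigO(n)$ vertices (its bends being the heights $f(v)$ or $f'(v)$ at the visited vertices). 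Note also that the first and last points of $P$ and $Q$ must be handled correctly, since $\tau$ and $\tau'$ start and end at the root at height $\infty$; in an implementation one truncates these endpoints at a large enough finite height (e.g., above every internal vertex) without affecting the Fr\'echet distance, because the two curves start and end together at infinity.

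Next, by Theorem~\ref{thm:df_eq_dmi} we have $\frechet(P,Q) = \orddist{}(T,T')$, so it suffices to compute $\frechet(P,Q)$ exactly. For this I would simply invoke an existing algorithm for the continuous Fr\'echet distance between polygonal curves, for instance the classical $\bigO(n^2\log n)$ algorithm of Alt and Godau~\cite{alt95continuous_frechet}, or, in the word RAM model, the $\bigO(n^2(\log\log n)^2)$ algorithm of Buchin et al.~\cite{buchin17continuous_frechet}; either yields $\tilde\bigO(n^2)$ running time. Both algorithms in fact apply without change when the ambient space is $\R$ rather than $\R^d$, and the reachability analysis on the free-space diagram is particularly simple for 1D curves.

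The only real step is the complexity bookkeeping of the in-order walk, which I outlined above; correctness is an immediate corollary of Theorem~\ref{thm:df_eq_dmi} combined with correctness of the chosen Fr\'echet algorithm. I do not foresee a serious obstacle here: all the technical content sits in establishing $\orddist{}=\frechet$ on the induced 1D curves, which has already been done in Section~\ref{sec:frechet}; the present statement is essentially an algorithmic corollary.
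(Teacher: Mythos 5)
Your proposal is correct and follows exactly the paper's own argument: construct the induced 1D curves from arbitrary in-order walks in linear time, apply Theorem~\ref{thm:df_eq_dmi} to reduce the problem to the Fr\'echet distance, and invoke a standard $\tilde\bigO(n^2)$ Fr\'echet algorithm \cite{alt95continuous_frechet,buchin17continuous_frechet}. Your additional bookkeeping (the $\sum_x(\mathrm{deg}(x)+1)=\bigO(n)$ count of vertex visits and the truncation of the root at a large finite height) is a correct and welcome elaboration of details the paper leaves implicit.
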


\section{Categorical Point of View}\label{sec:categorical}
We now show that the Fréchet distance between 1D curves, and therefore also the monotone interleaving distance between ordered merge trees, is an interleaving distance for a specific category.
We first recall the categorical definition of an interleaving~\cite{bubenik2015metrics}, and define the interleaving distance for 1D curves.
Then, we give constructions of a $\delta$-interleaving from a $\delta$-matching, and vice-versa.

\subparagraph{Interleaving distance for a category.}
Denote by $\Rpos$ the poset category based on the standard order on real numbers.
That is, the objects of $\Rpos$ are the real numbers, and there is a morphism from $y$ to $y'$ if $y\leq y'$.
We denote this morphism simply by $(y\leq y')$.
Let $\C$ be a category and $F\from\Rpos\to\C$ be a functor.
For $\delta\geq 0$, the \emph{$\delta$-shift} of $F$ is the functor $F[\delta]\from\Rpos\to\C$ given by $F[\delta](y)=F(y+\delta)$ and $F[\delta](y\leq y')=F(y+\delta\leq y'+\delta)$.
For $\delta\geq 0$, let $\eta^{F,\delta}\from F\To F[\delta]$ be the natural transformation whose component at $y\in\R$ is $F(y\leq y+\delta)$.
For a natural transformation $\varphi\from F\To G$ with components $\varphi_y\from F(y)\to G(y)$, denote by $\varphi[\delta]\from F[\delta]\To G[\delta]$ the natural transformation whose component at $y$ is $\varphi_{y+\delta}\from F[\delta](y)\to G[\delta](y)$.
These natural transformations together with the $\delta$-shift functors, define a functor  $(-)[\delta]\from (\Rpos\to\C)\to(\Rpos\to\C)$ for any $\delta\geq 0$.
Note that $F[\delta+\varepsilon]=F[\delta][\varepsilon]$.
For two functors $F,G\from\Rpos\to\C$, an \emph{$\delta$-interleaving} is a pair of natural transformations $\varphi\from F\To G[\delta]$ and $\psi\from G\To F[\delta]$ such that the following diagrams commute:
\begin{equation}\label{eq:interleaving}\begin{tikzcd}
    F && {F[2\delta]} && {F[\delta]} \\
    & {G[\delta]} && G && {G[2\delta].}
    \arrow["\varphi"', from=1-1, to=2-2]
    \arrow["{\psi[\delta]}"', from=2-2, to=1-3]
    \arrow["{\eta^{F,2\delta}}", from=1-1, to=1-3]
    \arrow["\psi", from=2-4, to=1-5]
    \arrow["{\varphi[\delta]}", from=1-5, to=2-6]
    \arrow["{\eta^{G,2\delta}}"', from=2-4, to=2-6]
\end{tikzcd}\end{equation}
The \emph{interleaving distance} $d_I(F,G)$ is the infimum value $\delta$ for which a $\delta$-interleaving exists.

\subparagraph{Interleaving distance for 1D curves.}
For a function $f\from X\to\R$, the \emph{sublevelset} of $f$ at $y\in\R$ is the set $f\inv((-\infty,y])$.
When~$X$ is a subset of $X'$, we denote the canonical inclusion by $\iota^{X\subseteq X'}\from X\into X'$, or~$\iota$ when it is not ambiguous.
A \emph{1D curve} is a continuous function $P\from[0,1]\to\R$.
Let $\U$ be the category whose objects are closed subsets of $[0,1]$, and whose morphisms are continuous increasing functions.
For a 1D curve $P$, we define a functor $\subinv{P}\from\Rpos\to\U$ given by the sublevelsets $\subinv{P}(y):=P\inv((-\infty,y])$ and canonical inclusions $\subinv{P}(y \leq y'):=\iota$.
We define the \emph{interleaving distance for 1D curves} $P$ and $Q$ as $d_I(\subinv{P},\subinv{Q})$.
In this context, \cref{eq:interleaving} becomes
\begin{equation}\label{eq:interleavingCurves}\begin{tikzcd}
    {\subinv P} && {\subinv P[2\delta]} && {\subinv P[\delta]} \\
    & {\subinv Q[\delta]} && {\subinv Q} && {\subinv Q[2\delta].}
    \arrow["\varphi"', from=1-1, to=2-2]
    \arrow["{\psi[\delta]}"', from=2-2, to=1-3]
    \arrow["\iota", hook, from=1-1, to=1-3]
    \arrow["\psi", from=2-4, to=1-5]
    \arrow["{\varphi[\delta]}", from=1-5, to=2-6]
    \arrow["\iota", hook, from=2-4, to=2-6]
\end{tikzcd}\end{equation}

\subsection{From matchings to interleavings.}
Let $\mu$ be a $\delta$-matching between $P$ and $Q$.
Let $\mu|_X$ be the restriction of $\mu$ to the domain $X$.
We will show that $\varphi_y:=\mu|_{\subinv{P}(y)}$ is a morphism$\from \subinv{P}(y)\to \subinv{Q}[\delta](y)$ in $\U$, and that the morphisms $\varphi_y$ define a natural transformation $\varphi\from\subinv{P}\To\subinv Q[\delta]$.
Symmetric properties apply to $\psi$ resulting from $\psi_y:=\mu\inv|_{\subinv{Q}(y)}$.
Finally, we show that $(\varphi,\psi)$ is a $\delta$-interleaving between $\subinv{P}$ and $\subinv{Q}$.

\begin{lemma}
    $\varphi_y=\mu|_{\subinv{P}(y)}$ is a morphism$\from \subinv P(y)\to \subinv Q[\delta](y)$ in $\U$.
\end{lemma}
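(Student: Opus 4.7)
The plan is to verify the three things required for $\varphi_y$ to be a morphism in $\U$: (i) its domain $\subinv{P}(y)$ is a closed subset of $[0,1]$, (ii) the image of $\varphi_y$ is contained in $\subinv{Q}[\delta](y)$, so that it is well-typed as a map $\subinv{P}(y) \to \subinv{Q}[\delta](y)$, and (iii) $\varphi_y$ is continuous and increasing. Of these, (i) and (iii) are essentially immediate, while (ii) is the only step that uses the $\delta$-matching hypothesis.

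For (i), since $P\from[0,1]\to\R$ is continuous, $\subinv{P}(y)=P\inv((-\infty,y])$ is the preimage of a closed set under a continuous map defined on a closed set, and is therefore closed in $[0,1]$. Hence $\subinv{P}(y)$ is a valid object of $\U$. For (iii), the map $\mu\from[0,1]\to[0,1]$ is continuous and increasing by the definition of a matching, and restriction to a subset preserves both properties, so $\varphi_y=\mu|_{\subinv{P}(y)}$ is continuous and increasing.

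The main step is (ii). I would pick an arbitrary $t\in\subinv{P}(y)$, so that $P(t)\leq y$ by definition of the sublevelset. The $\delta$-matching property yields $|P(t)-Q(\mu(t))|\leq\delta$, and in particular
\[
Q(\mu(t))\leq P(t)+\delta\leq y+\delta.
\]
Hence $\mu(t)\in Q\inv((-\infty,y+\delta])=\subinv{Q}(y+\delta)=\subinv{Q}[\delta](y)$, which shows that $\varphi_y$ maps into $\subinv{Q}[\delta](y)$ as required.

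There is no real obstacle here; the only subtlety is that $\subinv{Q}[\delta](y)$ must be unpacked using the definition of the $\delta$-shift functor, $\subinv{Q}[\delta](y)=\subinv{Q}(y+\delta)$, so that the $\delta$-matching inequality translates directly into the codomain condition. Combining (i), (ii), and (iii) yields the claim.
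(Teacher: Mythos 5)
Your proof is correct and follows essentially the same approach as the paper: closedness of the sublevelsets, continuity and monotonicity inherited by restriction, and the codomain containment derived from the $\delta$-matching inequality $Q(\mu(t))\le P(t)+\delta\le y+\delta$.
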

\begin{proof}
    Both $\subinv P(y)$ and $\subinv Q[\delta](y)$ are closed and therefore objects of $\U$.
    Because $\mu$ is increasing and continuous, so is the restriction $\mu|_{\subinv P(y)}$.
    By definition $\mu|_{\subinv P(y)}$ has domain $\subinv P(y)$.
    For the codomain, it suffices to show for any $x\in\subinv P(y)=P\inv((-\infty,y])$, the value $\mu(x)$ lies in $\subinv Q[\delta](y)=\subinv Q(y+\delta)=Q\inv((-\infty,y+\delta])$; i.e. $Q(\mu(x))\leq y+\delta$.
    Because $x\in P\inv((-\infty,y])$, we have $P(x)\leq y$.
    Because $\mu$ is a $\delta$-matching we have $\|P(x)-Q(\mu(x))\|\leq\delta$, so $Q(\mu(x))\leq y+\delta$.
\hfill\end{proof}

\begin{lemma}
    The components $\varphi_y$ ($y\in\R$) define a natural transformation $\varphi\from\subinv P\To\subinv Q[\delta]$.
\end{lemma}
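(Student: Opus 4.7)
The plan is to unpack what naturality requires in this setting and observe that it reduces to a trivial compatibility of restrictions. A natural transformation $\varphi\from\subinv P\To\subinv Q[\delta]$ with the prescribed components is the assertion that, for every morphism $(y\le y')$ of $\Rpos$, the square
\[
\begin{tikzcd}
\subinv P(y) \arrow[r,"\varphi_y"] \arrow[d,hook,"\iota"'] & \subinv Q[\delta](y) \arrow[d,hook,"\iota"] \\
\subinv P(y') \arrow[r,"\varphi_{y'}"'] & \subinv Q[\delta](y')
\end{tikzcd}
\]
commutes in $\U$, where the vertical inclusions are $\subinv P(y\le y')$ and $\subinv Q[\delta](y\le y')=\subinv Q(y+\delta\le y'+\delta)$ respectively, both of which are canonical inclusions on sublevelsets.

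First I would fix $y\le y'$ and note that $\subinv P(y)=P\inv((-\infty,y])\subseteq P\inv((-\infty,y'])=\subinv P(y')$, so the left vertical arrow is a well-defined inclusion in $\U$. The same holds on the right because $\subinv Q[\delta](y)=Q\inv((-\infty,y+\delta])\subseteq Q\inv((-\infty,y'+\delta])=\subinv Q[\delta](y')$. The previous lemma already supplies that $\varphi_y$ and $\varphi_{y'}$ are morphisms in $\U$.

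Next I would check commutativity pointwise. Both $\varphi_y$ and $\varphi_{y'}$ are restrictions of the single function $\mu$ to $\subinv P(y)$ and $\subinv P(y')$ respectively. Hence for every $x\in\subinv P(y)$,
\[
(\iota\circ\varphi_y)(x)=\mu(x)=\varphi_{y'}(x)=(\varphi_{y'}\circ\iota)(x),
\]
so the square commutes. This is the only nontrivial verification, and it is essentially immediate from the definitions; the ``hard'' part is really only recognising that $\varphi_y$ on the overlap $\subinv P(y)\subseteq\subinv P(y')$ agrees with $\varphi_{y'}$ because both are obtained from the same global map $\mu$. Since $y\le y'$ was arbitrary, the family $(\varphi_y)_{y\in\R}$ defines a natural transformation $\varphi\from\subinv P\To\subinv Q[\delta]$, completing the proof.
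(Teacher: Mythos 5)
Your proposal is correct and follows essentially the same route as the paper: it reduces naturality to commutativity of the square for each $y\le y'$, and verifies it pointwise by observing that $\varphi_y$ and $\varphi_{y'}$ are both restrictions of the single map $\mu$, so they agree on $\subinv P(y)$. The additional check that the vertical arrows are well-defined inclusions is a harmless elaboration the paper leaves implicit.
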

\begin{proof}
    It suffices to show that the following square commutes for all $y\leq y'$:
    \begin{equation}\begin{tikzcd}
        {\subinv P(y)} && {\subinv P(y')} \\
        & {\subinv Q[\delta](y)} && {\subinv Q[\delta](y').}
        \arrow["{\varphi_y}", hook, from=1-1, to=2-2]
        \arrow["\iota", hook, from=2-2, to=2-4]
        \arrow["\iota", hook, from=1-1, to=1-3]
        \arrow["{\varphi_{y'}}", hook, from=1-3, to=2-4]
    \end{tikzcd}\end{equation}
    Because the $\iota$'s are canonical inclusions, it is sufficient to show that for any $x\in\subinv P(y)$, we have $\varphi_y(x)=\varphi_{y'}(x)$.
    Indeed, $\varphi_y(x)=\mu|_{\subinv P(y)}(x)=\mu(x)=\mu|_{\subinv P(y')}=\varphi_{y'}(x)$.
\hfill\end{proof}
By exchanging the roles of $P$ and $Q$ and of $\mu$ and $\mu\inv$, symmetric proofs establish that the components $\psi_y=\mu\inv|_{\subinv Q(y)}$ define a natural transformation $\psi\from\subinv Q\To\subinv P[\delta]$.

\begin{lemma}
    $(\varphi,\psi)$ is a $\delta$-interleaving between $\subinv P$ and $\subinv Q$.
\end{lemma}
\begin{proof}
    We first show commutativity for the first triangle of \cref{eq:interleavingCurves}:
    \begin{equation}\begin{tikzcd}
        {\subinv P} && {\subinv P[2\delta].} \\
        & {\subinv Q[\delta]}
        \arrow["\varphi"', from=1-1, to=2-2]
        \arrow["{\psi[\delta]}"', from=2-2, to=1-3]
        \arrow["{\eta^{\subinv P,2\delta}}", from=1-1, to=1-3]
    \end{tikzcd}\end{equation}
    This amounts to showing that $\iota^{\subinv P(y)\subseteq\subinv P(y+2\delta)}=\psi[\delta]_y\circ\varphi_y$ for all $y\in\R$.
    Since $\mu$ is a homeomorphism, $\mu\inv\circ\mu$ is the identity function $\id_{[0,1]}$.
    Therefore, $\psi[\delta]_y\circ\varphi_y$ and $\iota^{\subinv P(y)\subseteq\subinv P(y+2\delta)}$ both are restrictions of the identity function.
    As they have the same domain and codomain, they are equal.
    By a symmetric argument, the second triangle of \cref{eq:interleavingCurves} commutes.
    So, $(\varphi,\psi)$ is a $\delta$-interleaving and $d_I(\subinv P,\subinv Q)\leq d_F(P,Q)$.
\hfill\end{proof}
\begin{corollary}\label{cor:di_leq_df}
    $d_I(\subinv P,\subinv Q)\leq d_F(P,Q)$.
\end{corollary}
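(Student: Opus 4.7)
The plan is to observe that this corollary follows almost immediately from the preceding chain of lemmas, which collectively provide an explicit construction sending a $\delta$-matching to a $\delta$-interleaving. The only remaining subtlety is handling the infimum that defines the Fr\'echet distance, since a $\delta$-matching realising $\delta = d_F(P,Q)$ exactly may not exist.

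Concretely, I would fix an arbitrary $\delta > d_F(P,Q)$. By the definition of $d_F$ as an infimum, there exists a $\delta$-matching $\mu\from[0,1]\to[0,1]$ between $P$ and $Q$. I would then invoke the three preceding lemmas in sequence: first, that each $\varphi_y=\mu|_{\subinv P(y)}$ is a well-defined morphism $\subinv P(y)\to \subinv Q[\delta](y)$ in $\U$; second, that the family $(\varphi_y)_{y\in\R}$ assembles into a natural transformation $\varphi\from\subinv P\To\subinv Q[\delta]$; and third, by the symmetric statements applied to $\mu\inv$, that $\psi_y=\mu\inv|_{\subinv Q(y)}$ defines $\psi\from\subinv Q\To\subinv P[\delta]$ and that $(\varphi,\psi)$ is a $\delta$-interleaving between $\subinv P$ and $\subinv Q$. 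This directly gives $d_I(\subinv P,\subinv Q)\leq \delta$.

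Finally, I would take the infimum over all $\delta > d_F(P,Q)$, concluding $d_I(\subinv P,\subinv Q)\leq d_F(P,Q)$. There is no real obstacle here: the technical content has been absorbed into the preceding lemmas, and the only thing to be careful about is the quantifier structure surrounding the infimum (so we do not need to assume a minimiser of $d_F$ exists). The proof is therefore essentially a one-line consequence of the lemma that $(\varphi,\psi)$ is a $\delta$-interleaving, combined with taking the infimum over $\delta$.
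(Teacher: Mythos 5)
Your proposal is correct and follows essentially the same route as the paper: the corollary is a direct consequence of the preceding lemmas, which construct a $\delta$-interleaving $(\varphi,\psi)$ from a $\delta$-matching $\mu$. Your extra care in taking an arbitrary $\delta > d_F(P,Q)$ and then passing to the infimum is a welcome refinement of a point the paper leaves implicit, but it does not change the argument.
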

    
\subsection{From interleavings to matchings.}
Consider a $\delta$-interleaving $(\varphi,\psi)$ between $\subinv P$ and $\subinv Q$.
Define $y^*:=\max(\Img(P)\cup\Img(Q))+2\delta$ and $\mu:=\varphi_{y^*}\from\subinv P(y^*)\to\subinv Q[\delta](y^*)$.
We show that $\mu$ is a $\delta$-matching between $P$ and $Q$.
\begin{lemma}\label{lem:matching}
    $\mu$ is a matching.
\end{lemma}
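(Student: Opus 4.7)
The plan is to show that $\mu = \varphi_{y^*}$ is a continuous, increasing bijection from $[0,1]$ onto itself. I would begin by unpacking the domains: by choice of $y^*$ we have $\subinv{P}(y^*) = P\inv((-\infty, y^*]) = [0,1]$ since $y^* \geq \max\Img(P)$, and $\subinv{Q}[\delta](y^*) = \subinv{Q}(y^*+\delta) = [0,1]$ since $y^*+\delta \geq \max\Img(Q)$. So $\mu$ is a morphism in $\U$ from $[0,1]$ to $[0,1]$, hence automatically continuous and (weakly) increasing. The substantive step is to produce a two-sided inverse.

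The key trick is to exploit that at the ``saturated'' parameter $y^*$, the canonical inclusions between sublevelsets become identities on $[0,1]$. Concretely, all of $\subinv{P}(y^*)$, $\subinv{P}(y^*+\delta)$, $\subinv{P}(y^*+2\delta)$ (and the corresponding $\subinv{Q}$ sublevelsets) equal $[0,1]$, so the inclusions $\iota$ in the naturality squares for $\varphi$ and $\psi$ are the identity. Naturality of $\varphi$ therefore collapses $\varphi_{y^*+\delta}\circ \iota = \iota\circ\varphi_{y^*}$ into the equality $\varphi_{y^*+\delta} = \varphi_{y^*}$ as functions $[0,1]\to[0,1]$, and symmetrically $\psi_{y^*+\delta} = \psi_{y^*}$.

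Now I would read off the two triangles of \cref{eq:interleavingCurves} at $y^*$: the first gives $\psi[\delta]_{y^*}\circ\varphi_{y^*} = \iota^{\subinv P(y^*)\subseteq\subinv P(y^*+2\delta)} = \id_{[0,1]}$, and the second gives $\varphi[\delta]_{y^*}\circ\psi_{y^*} = \id_{[0,1]}$. Unfolding the shift, $\psi[\delta]_{y^*} = \psi_{y^*+\delta}$ and $\varphi[\delta]_{y^*} = \varphi_{y^*+\delta}$, which by the previous paragraph equal $\psi_{y^*}$ and $\varphi_{y^*}$ respectively. Hence $\psi_{y^*}\circ\mu = \id$ and $\mu\circ\psi_{y^*} = \id$, so $\mu$ is a bijection with inverse $\psi_{y^*}$. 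A continuous increasing bijection of $[0,1]$ is automatically strictly increasing and has a continuous inverse (compactness), so $\mu$ is an orientation-preserving homeomorphism, i.e.\ a matching.

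There is no real obstacle here, but the subtle point worth flagging is that the entire argument rests on choosing $y^*$ large enough that the inclusions $\iota$ become identity maps; without this ``saturation'' one only gets that $\varphi_{y^*}$ has a one-sided inverse up to inclusion. Note that the bound $\|P(t)-Q(\mu(t))\|\leq\delta$ required for $\mu$ to be an $\delta$-matching is not part of this lemma and would be established separately in a subsequent lemma.
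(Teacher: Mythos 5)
Your proof is correct, and it takes a slightly different (and in one respect cleaner) route than the paper's. Both arguments start the same way: by saturation of $y^*$ all the relevant sublevelsets $\subinv P(y^*), \subinv P(y^*+\delta), \subinv P(y^*+2\delta)$ and their $Q$-counterparts equal $[0,1]$, so $\mu\from[0,1]\to[0,1]$ is continuous and increasing because it is a morphism of $\U$. The paper then only argues \emph{surjectivity}, by contradiction: if $z\notin\Img(\mu)$, set $x=\psi_{y^*+\delta}(z)$ and $z'=\mu(x)$; the first triangle of \cref{eq:interleavingCurves} gives $\psi_{y^*+\delta}(z')=x=\psi_{y^*+\delta}(z)$ with $z'\neq z$, contradicting that $\psi_{y^*+\delta}$ is (strictly) increasing. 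Injectivity of $\mu$ is then implicitly absorbed into ``increasing''. You instead use naturality at the saturated level to identify $\varphi_{y^*+\delta}=\varphi_{y^*}$ and $\psi_{y^*+\delta}=\psi_{y^*}$, and read both triangles as the two composition identities $\psi_{y^*}\circ\mu=\id$ and $\mu\circ\psi_{y^*}=\id$, exhibiting $\psi_{y^*}$ as an explicit two-sided inverse. This buys you bijectivity in one stroke without ever invoking injectivity of $\psi_{y^*+\delta}$ as a separate fact, and it makes the homeomorphism claim immediate (continuous bijection of a compact interval). The paper's version is shorter and uses only one triangle, but leans on the reading of ``increasing'' as strict; yours is marginally longer but self-contained on that point. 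Your closing remark that the $\delta$-bound is deferred to the next lemma matches the paper's organisation.
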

\begin{proof}
    $\mu$ is a morphism in $\U$, so it is continuous and increasing.
    We have $\subinv P(y^*)=P\inv((-\infty,y^*])=[0,1]$ and $\subinv Q[\delta](y^*)=Q\inv((-\infty,y^*+\delta])=[0,1]$, so $\mu\from[0,1]\to[0,1]$.
    It remains to show surjectivity.
    Suppose for a contradiction that $\Img(\mu)\neq[0,1]$.
    Then there exists some $z\in[0,1]=\subinv Q(y^*+\delta)$ such that $z\notin\Img(\mu)$.
    Let $x:=\psi_{y^*+\delta}(z)\in[0,1]=\subinv P(y^*)$ and let $z':=\mu(x)\in[0,1]=\subinv Q(y^*+\delta)$. 
    Because $z\notin\Img(\mu)$, we have $z'\neq z$.
    By \cref{eq:interleavingCurves}, we have $\psi_{y^*+\delta}(z')=x$.
    But then $\psi_{y^*+\delta}(z)=\psi_{y^*+\delta}(z')$ for $z\neq z'$, contradicting that $\psi_{y^*+\delta}$ is increasing.
    Hence, $\mu$ is a continuous increasing surjection$\from[0,1]\to[0,1]$.
\hfill\end{proof}

\begin{lemma}
    $\mu$ is a $\delta$-matching between $P$ and $Q$.
\end{lemma}
\begin{proof}
    By Lemma~\ref{lem:matching}, $\mu$ is a matching.
    It remains to show that $\|P(x)-Q(\mu(x))\|\leq\delta$ for all $x\in[0,1]$; that is, $P(x)\leq Q(\mu(x))+\delta$ and $Q(\mu(x))\leq P(x)+\delta$.
    We first show that $P(x)\leq Q(\mu(x))+\delta$.
    Let $y:=P(x)$, then $x\in\subinv P(y)$, so $\mu(x)=\varphi_{y^*}(x)=\varphi_y(x)\in\subinv Q(y+\delta)=Q\inv((-\infty,y+\delta])$, so $Q(\mu(x))\leq y+\delta$.
    Next, we show that $Q(\mu(x))\leq P(x)+\delta$.
    Let $y:=Q(\mu(x))$, then $\mu(x)\in Q\inv((-\infty,y])=\subinv Q(y)$.
    Let $x':=\psi_y(\mu(x))$, then $x'\in\subinv P(y+\delta)=P\inv((-\infty,y+\delta])$ and hence $P(x')\leq y+\delta$.
    By the second triangle of \cref{eq:interleaving}, we have $\varphi_{y+\delta}(x')=\mu(x)$.
    Because $y+\delta\leq y^*$, by naturality of $\varphi$, $\varphi_{y+\delta}$ is a restriction of $\varphi_{y^*}=\mu$.
    Because $\mu$ is increasing, it is injective, so $x=x'$ and hence $P(x)\leq y+\delta=Q(\mu(x))+\delta$.
    Thus,~$\mu$ is a $\delta$-matching between $P$ and $Q$.
\hfill\end{proof}
\begin{corollary}\label{cor:df_leq_di}
    $d_F(P,Q)\leq d_I(\subinv P,\subinv Q)$.
\end{corollary}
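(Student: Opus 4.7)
My plan is to obtain \cref{cor:df_leq_di} as an immediate consequence of the preceding lemma, which has already done all of the substantial work. That lemma shows that for any $\delta$-interleaving $(\varphi,\psi)$ between $\subinv P$ and $\subinv Q$, the map $\mu=\varphi_{y^*}$ is a genuine $\delta$-matching between the 1D curves $P$ and $Q$. Hence the statement is purely a matter of comparing two infima: $d_I(\subinv P,\subinv Q)$ is the infimum of the $\delta$ for which a $\delta$-interleaving exists, while $d_F(P,Q)$ is the infimum of the $\delta$ for which a $\delta$-matching exists, and the lemma supplies an arrow from the former family to the latter.

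Concretely, I would proceed as follows. Let $\delta^* := d_I(\subinv P,\subinv Q)$ and fix any $\e > 0$. By definition of the infimum there exists some $\delta \le \delta^* + \e$ for which a $\delta$-interleaving $(\varphi,\psi)$ between $\subinv P$ and $\subinv Q$ exists. Applying the preceding lemma to this interleaving yields a $\delta$-matching $\mu$ between $P$ and $Q$, and therefore $d_F(P,Q) \le \delta \le \delta^* + \e$. Since $\e > 0$ was arbitrary, letting $\e \to 0$ gives $d_F(P,Q) \le \delta^* = d_I(\subinv P,\subinv Q)$.

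There is essentially no obstacle here, since the single-line infimum argument is entirely routine once the lemma is in place; the only mild care needed is the standard $\e$-slack in case the infimum defining $d_I$ is not attained. Together with \cref{cor:di_leq_df}, the corollary completes the proof that the Fr\'echet distance between two 1D curves $P$ and $Q$ coincides with the categorical interleaving distance $d_I(\subinv P,\subinv Q)$ of their sublevelset functors, which in combination with \cref{thm:df_eq_dmi} yields that the monotone interleaving distance on ordered merge trees is indeed an interleaving distance in the categorical sense.
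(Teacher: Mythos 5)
Your proposal is correct and follows exactly the route the paper takes: the corollary is an immediate consequence of the preceding lemmas showing that $\mu=\varphi_{y^*}$ extracted from any $\delta$-interleaving is a $\delta$-matching, combined with the routine infimum comparison (including the standard $\e$-slack in case the infimum is not attained). No gaps.
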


\noindent
We combine Corollaries~\ref{cor:di_leq_df} and \ref{cor:df_leq_di} to obtain Theorem~\ref{thm:df_eq_di}.
\begin{theorem}\label{thm:df_eq_di}
    $d_F(P,Q)=d_I(\subinv P,\subinv Q)$.
\end{theorem}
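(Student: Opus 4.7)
The plan is essentially to observe that the theorem is just the conjunction of the two inequalities already established in Corollaries~\ref{cor:df_leq_di} and \ref{cor:di_leq_df}, so the proof amounts to combining them. In particular, no new construction is needed: both directions of the inequality have been produced by explicit passages between matchings and interleavings.

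Concretely, I would argue as follows. For the direction $d_I(\subinv P,\subinv Q)\leq d_F(P,Q)$, I would invoke Corollary~\ref{cor:di_leq_df}, which is obtained by taking any $\delta$-matching $\mu$ and setting $\varphi_y:=\mu|_{\subinv P(y)}$ and $\psi_y:=\mu^{-1}|_{\subinv Q(y)}$; these morphisms in $\U$ assemble into natural transformations whose composition along the two triangles of \cref{eq:interleavingCurves} reduces to restrictions of the identity, so $(\varphi,\psi)$ is a $\delta$-interleaving. For the direction $d_F(P,Q)\leq d_I(\subinv P,\subinv Q)$, I would invoke Corollary~\ref{cor:df_leq_di}, which is obtained by choosing $y^*$ larger than all values of $P$ and $Q$ plus $2\delta$, noting that $\varphi_{y^*}$ is then a morphism $[0,1]\to[0,1]$, checking surjectivity via the interleaving diagrams, and verifying the pointwise bound $|P(x)-Q(\mu(x))|\leq\delta$ using naturality together with the two triangles.

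Taking the infimum over $\delta$ in each direction and combining the resulting inequalities yields $d_F(P,Q)=d_I(\subinv P,\subinv Q)$, which is the statement. There is no real obstacle here: the technical work (showing the constructed maps are objects/morphisms of $\U$, that the requisite diagrams commute, and that $\mu$ obtained from an interleaving is surjective and $\delta$-close) has already been carried out in the lemmas preceding the two corollaries. Thus the proof of Theorem~\ref{thm:df_eq_di} itself is a one-line combination of Corollaries~\ref{cor:di_leq_df} and \ref{cor:df_leq_di}.
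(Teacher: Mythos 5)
Your proposal is correct and matches the paper exactly: Theorem~\ref{thm:df_eq_di} is obtained by simply combining Corollaries~\ref{cor:di_leq_df} and~\ref{cor:df_leq_di}, with all the technical work already done in the preceding lemmas.
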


\noindent 
So indeed, the Fréchet distance is an interleaving distance for the category of 1D curves.
As a result, the monotone interleaving distance for ordered merge trees is an interleaving distance for the category of 1D curves that have their endpoints at infinity.

\section{Good Maps and Labellings}\label{sec:variants}
In this section, we describe two alternative ways to define the monotone interleaving distance.
This result is analogous to the regular setting, where the original interleaving distance by Morozov et al.~\cite{morozov2013interleaving} was first redefined by Touli and Wang~\cite{touli2022fpt} in terms of a single $\delta$-shift map with additional properties, and later by Gasparovich et al.~\cite{gasparovich2019intrinsic} in terms of labelled merge trees.
Specifically, Touli and Wang~\cite{touli2022fpt} gave an alternative definition of the interleaving distance in terms of a single $\delta$-shift map with additional requirements.
They call this a \emph{$\delta$-good map}.

\begin{definition2}[Touli and Wang~\cite{touli2022fpt}]\label{def:delta-good}
	Given two merge trees $(T, f)$ and $(T', f')$, a map $\alpha \colon \geom \to \geom'$ is called \emph{$\delta$-good} if and only if
	\begin{enumerate}[({G}1)]
		\item
		$f'(\alpha(x)) = f(x) + \delta$ for all $x \in \geom$,
		
		\item
		if $\alpha(x_1) \succeq \alpha(x_2)$, then $x_1^{2\delta} \succeq x_2^{2\delta}$,
		
		\item
		$|f'(y^F) - f'(y)| \le 2\delta$ for $y \in \geom' \setminus \im(\alpha)$, where $y^F$ is the lowest ancestor of $y$ in~$\im(\alpha)$.
	\end{enumerate}
	The \emph{$\delta$-good interleaving distance} $\intdist{G}((T, f), (T', f'))$ is defined as the smallest $\delta$ such that there exists a $\delta$-good map.
\end{definition2}

\subparagraph{Labelled merge trees.}
For the second alternative definition, we first need some more preliminaries.
A labelled merge tree is a merge tree equipped with a \emph{label map} $\pi$ that maps labels to points in the tree such that each leaf is assigned at least one label.
Specifically, for a given integer $n \ge 0$, we denote by $[n]$ the set $\{1, \ldots, n\}$.
We refer to a map $\pi \colon [n] \to \geom$ as an \emph{$n$-label map} if each leaf in $L(T)$ is assigned at least one label.
Note that $\pi$ may map labels to arbitrary points in~$\geom$ and is not restricted to vertices.
Furthermore, $\pi$ may map different labels to the same point in~$\geom$.
For two points $x_1, x_2 \in \geom$, we write $\fl(x_1, x_2) \coloneqq f(\lca(x_1, x_2))$.
The \emph{induced matrix} of a labelled merge tree captures relevant information about all lowest common ancestors of pairs of labels.
See Figure~\ref{fig:labelled-merge-tree} for an example of a labelled merge tree and its induced matrix.

\begin{figure}[t]
    \centering
	\includegraphics{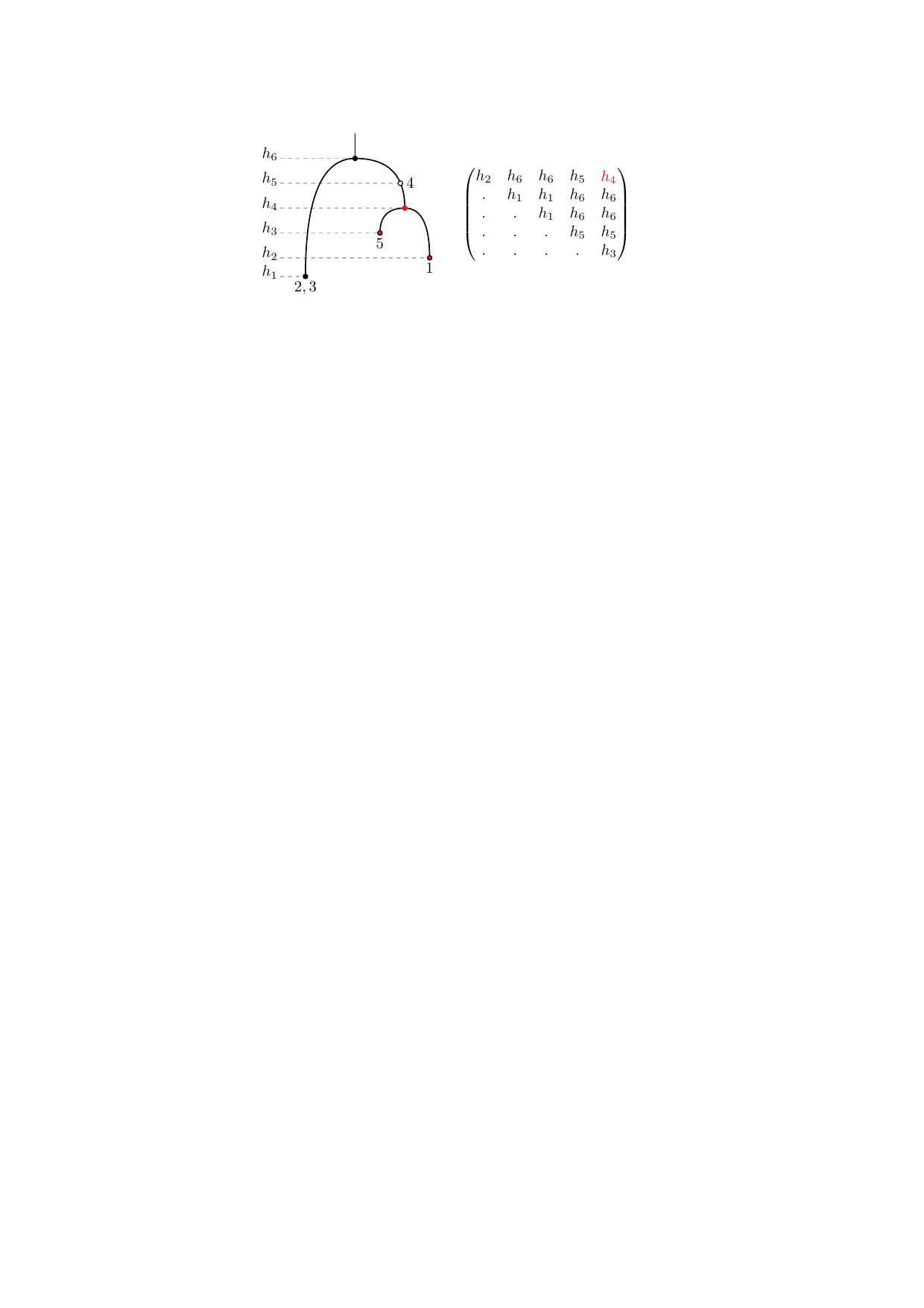}
	\caption{Example of a labelled merge tree and its induced matrix. The red vertices are labelled $1$ and $5$, so the entry $M_{1, 5}$ contains the height of their lowest common ancestor.}
	\label{fig:labelled-merge-tree}
\end{figure}

\begin{definition2}\label{def:labelled-merge-tree}
	An $n$-\emph{labelled merge tree} is a triple $(T, f, \pi)$, with $(T, f)$ a merge tree, and~$\pi \colon [n] \to \geom$ an $n$-label map.
	Its \emph{induced matrix} $M = \imat(T, f, \pi)$ is defined by
	\begin{equation*}
		M_{i, j} = \fl(\pi(i), \pi(j)).
	\end{equation*}
\end{definition2}

\noindent 
We refer to $n$ as the \emph{size} of an $n$-label map or $n$-labelled merge tree.
For simplicity, we usually omit the size, and write labelled merge tree and label map.
When clear from context, we use $T$ to refer to a labelled merge tree.

The field of computational biology uses so-called \emph{cophenetic metrics} to compute distances between \emph{phylogenetic trees}~\cite{cardona2013cophenetic}.
Inspired by these cophenetic metrics, Munch and Stefanou~\cite{munch2019the} define a distance for labelled merge trees of the same size, in terms of their induced matrices.
For a matrix $M$, the $\ell^\infty$-norm is defined as $\norm{M} = \max_{i, j} |M_{i, j}|$.

\begin{definition2}\label{def:label-distance}
	Given two labelled merge trees $(T, f, \pi)$ and $(T', f', \pi')$ with $\pi$ and $\pi'$ two label maps of the same size, and with induced matrices $\imat$ and $\imat'$, the \emph{label distance} is defined as $\labdist((T, f, \pi), (T', f', \pi')) \coloneqq \norm{\imat - \imat'}$.
\end{definition2}

\noindent
As a shorthand we write the label distance as $\labdist(\pi, \pi')$ when the labelled trees $T$ and $T'$ are clear from context.\footnote{Munch and Stefanou~\cite{munch2019the} show that the label distance is an interleaving distance for a specific category. In a follow-up paper~\cite{gasparovich2019intrinsic}, they call the distance the \emph{labelled interleaving distance}. To avoid confusion with the other interleaving distances in our paper, we refer to it as the label distance.}
Given two \emph{unlabelled merge trees}, we can equip them with a pair of equally-sized label maps, which we call a \emph{labelling}.
We can then compare the resulting labelled merge trees using the label distance.

\begin{definition2}\label{def:label-interleaving-distance}
	Given two merge trees $(T, f)$ and $(T', f')$ a pair of $n$-label maps $\pi \colon [n] \to \geom$ and $\pi' \colon [n] \to \geom'$ is called a \emph{$\delta$-labelling}, if
	\begin{equation*}
		\labdist((T, f, \pi), (T', f', \pi')) = \delta.
	\end{equation*}
	The \emph{label interleaving distance} $\intdist{L}((T, f, \pi), (T', f', \pi'))$ is defined as the smallest $\delta$ for which there exists a $\delta$-labelling of size $n = |L(T)| + |L(T')|$. 
\end{definition2}

\noindent
Gasparovich et al.~\cite{gasparovich2019intrinsic} show that the label interleaving distance is equal to the original interleaving distance.
To do so, they first show that for two merge trees $(T, f)$ and $(T', f')$ with interleaving distance $\delta$, there always exists a labelling $(\pi, \pi')$ of size $n = |L(T)| + |L(T')|$ such that the label distance of the labelled merge trees $(T, f, \pi)$ and $(T', f', \pi')$ is $\delta$.
Secondly, they show how to construct a $\delta$-good map from a $\delta$-labelling.
In summary, we have:

\begin{theorem}[\cite{gasparovich2019intrinsic,touli2022fpt}]
	\label{thm:equivalences}
        The distances $\intdist{}$, $\intdist{G}$, and $\intdist{L}$ are equal.
\end{theorem}

\noindent
We say a $\delta$-interleaving, $\delta$-good map, or $\delta$-labelling is \emph{optimal} for two merge trees $T$ and $T'$ if the interleaving distance $\intdist{}(T, T')$ is $\delta$.

\subsection{Monotone variants.}
We now extend the distances $\intdist{G}$ and $\intdist{L}$ to the monotone setting.
Consider ordered merge trees $(T, f, (\le_h))$ and $(T', f', (\le'_h))$.
Recall that $\delta$-shift map is monotone if for all $h \ge 0$, it holds that $x_1 \le_h x_2$ implies $\alpha(x_1) \le_h \alpha(x_2)$ for all $x_1, x_2 \in \ls{h}$.
We now define a \emph{monotone $\delta$-good map} as a $\delta$-good map $\alpha$ that is also monotone.
Moreover, we define a $\delta$-labelling $(\pi, \pi')$ of size $n$ to be \emph{layer-monotone}, or simply \emph{monotone} if for all labels $\ell_1, \ell_2 \in [n]$ it holds that $\pi(\ell_1) \sirel \pi(\ell_2)$ implies $\pi'(\ell_1) \irel' \pi'(\ell_2)$.

\begin{definition2}
    The \emph{monotone $\delta$-good interleaving distance} $\orddist{G}$, and the \emph{monotone label interleaving distance} $\orddist{L}$ is the minimum~$\delta$ that admits a monotone $\delta$-good map, and a monotone $\delta$-labelling, respectively.
\end{definition2}

\noindent
Analogously to Theorem~\ref{thm:equivalences}, we show:
\begin{restatable}{theorem}{monotone}\label{thm:monotone-equivalences}
	The distances $\orddist{}$, $\orddist{G}$, and $\orddist{L}$ are equal.
\end{restatable}

\noindent To prove Theorem~\ref{thm:monotone-equivalences}, we describe three constructions: we construct a monotone $\delta$-good map from a monotone $\delta$-interleaving, a monotone $\delta$-labelling from a monotone $\delta$-good map, and finally a monotone $\delta$-interleaving from a monotone $\delta$-labelling.

\begin{lemma}\label{crl:compatible-good}
	If there exists a monotone $\delta$-interleaving, there exists a monotone $\delta$-good~map.
\end{lemma}
\begin{proof}
    The first construction follows directly from the regular setting considered by Touli and Wang~\cite[Theorem~1]{touli2022fpt}.
    Specifically, they show that the map $\alpha$ in a $\delta$-interleaving $(\alpha, \beta)$ is also a $\delta$-good map.
    If, in addition, $(\alpha, \beta)$ is monotone, this means that the individual map $\alpha$ is also monotone, completing the proof.
\hfill\end{proof}

\subsection{From good maps to labellings.}
Secondly, we construct a monotone $\delta$-labelling from a monotone $\delta$-good map $\alpha$.
To do so, we refine the construction of a $\delta$-labelling from a $\delta$-good map\footnote{We remark that they use a slightly different definition for a $\delta$-good map (see Appendix~\ref{app:good-equivalence}).} by Gasparovich et al.~\cite[Theorem 4.1]{gasparovich2019intrinsic}.
For a point $y \in \geom'$, we use $\lowa{y}$ to denote its lowest ancestor in the image of $\alpha$.
Moreover, for two points $x_1, x_2 \in \geom$, we say $x_1$ is \emph{smaller} than $x_2$, or $x_2$ is \emph{larger} than $x_1$, if $x_1 \irel x_2$.
A point $x$ in a set $X \subseteq T$ is \emph{smallest} (\emph{largest}) if $x$ is smaller (larger) than all other points in $X$.
Similarly, we say a leaf $u_1$ is \emph{smaller} than $u_2$ if $u_1 \ileaf u_2$.
The existing construction is as follows.

\begin{enumerate}[({S}1)]
	\item
	 	For every leaf $u \in L(T)$, add $(u, \alpha(u))$ to an initially empty set $\Pi$.
	\item 
		For every leaf $w \in L(T')$, take an arbitrary point $x \in \alpha^{-1}(w^F)$. Add $(x, w)$ to $\Pi$.
	\item
		Consider an arbitrary ordering $\Pi = \{(x_\ell, y_\ell) \mid \ell \in [n]\}$, and set $\pi(\ell) = x_\ell$, $\pi'(\ell) = y_\ell$.
\end{enumerate}

\noindent The authors show that the resulting labelling is a $\delta$-labelling.
To make sure it is also monotone, we refine (S2) by choosing a specific $x \in \alpha^{-1}(w^F)$.
Intuitively, we first identify all points $\bar{x}$ that lead to a violation of the monotonicity property if we choose $x$ smaller than $\bar{x}$.
Such a violation occurs if $\bar{x}$ is an ancestor of a labelled point whose corresponding labelled point in $\geom'$ is smaller than $w$.
We then take $x$ to be the largest point among the points $\bar{x}$.

\begin{enumerate}[({S}2)]
\item
	For every leaf $w \in L(T')$, sort the set of leaves in $T'_{\lowa{w}}$ by induced leaf-order, denoted $W = \{w_1, \ldots, w_m\}$.
	Define a set of indices $S \subseteq [m]$ as~$S \coloneqq \{k \in [m] \mid \lowa{w_k} \prec \lowa{w}\}$.
	Fix $i \in [m]$ such that $w_i = w$, and define $S_i \subseteq S$ as the set of indices in $S$ strictly smaller than $i$.
	Now consider the set $X = \alpha^{-1}(\lowa{w})$.
	\begin{itemize}
	\item
		If $S_{i}$ is empty, take $x$ to be the smallest point in $X$ and add $(x, w)$ to $\Pi$.
	\item
		If $S_{i}$ is not empty, consider the largest index $\ihat \in S_i$.
		Define $Y$ as the set of strict descendants of $\lowa{w}$ that were labelled in \enumit{(S1)}.
		Consider the following height values:
		\begin{equation}\label{eq:heights}
			\hat{h}_1 \coloneqq \max\{f'(w_k^F) \mid k \in S\},\quad
			\hat{h}_2 \coloneqq \max\{f'(y) \mid y \in Y\}, \quad
			\hat{h} = \max(\hat{h}_1, \hat{h}_2)
		\end{equation}
		Consider the unique ancestor $\hat{w}_\ihat$ of $w_{\ihat}$ at height $\hat{h}$.
		Note that $\hat{w}_\ihat$ is a strict descendant of the point $w^F$ and that it lies in the image of $\alpha$.
		Let $X_{\ihat} \subset X$ be the set of ancestors of points in $\alpha^{-1}(\hat{w}_\ihat)$.
		Take $\hat{x}$ to be the largest point in $X_{\ihat}$ and add $(\hat{x}, w)$ to $\Pi$.
	\end{itemize}
\end{enumerate}

\begin{figure}
	\centering
	\includegraphics{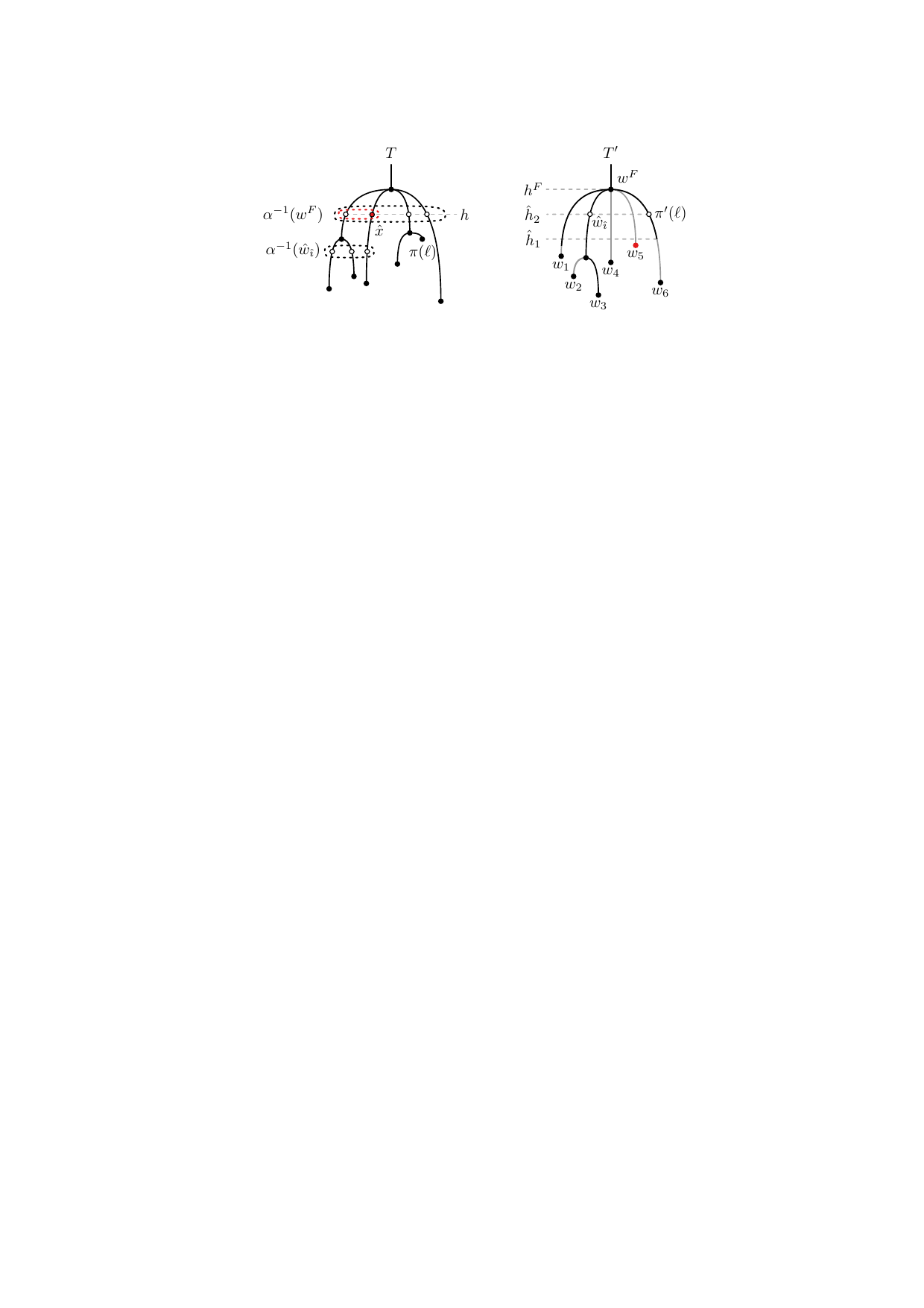}
	\caption{
		The refined step \enumit{(S2)}.
		The grey parts of $T'$ do not lie in the image of $\alpha$.
		We add the pair $(\hat{x}, w_5)$ to $\Pi$.
		In this example, $w = w_5$, $S = \{1, 2, 3, 6\}$, $i = 5$, $S_i = \{1, 2, 3\}$ and $\ihat = 3$.}
	\label{fig:label-construction}
\end{figure}

\noindent See Figure~\ref{fig:label-construction} for an illustration of the modified step \enumit{(S2)}.
We first prove some auxiliary facts about a labelling $(\pi, \pi')$ obtained from the refined construction.
We say a label $\ell$ of the labelling $(\pi, \pi')$ is \emph{constructed} in \enumit{(Si)}, for $i = 1, 2$, if the corresponding pair $(\pi(\ell), \pi'(\ell))$ was added to $\Pi$ in the i-th step of the refined construction.

\begin{lemma}\label{lem:step-s2}
	Consider two labels $\ell_1, \ell_2 \in [n]$ and fix $h' = \max(f'(\pi'(\ell_1)), f'(\pi'(\ell_2)))$.
	If $\pi'(\ell_1)^F \preceq \pi'(\ell_2)^F$ and $\an{\pi'(\ell_1)}{h'} \neq \an{\pi'(\ell_2)}{h'}$, then $\ell_2$ is constructed in \enumit{(S2)}.
\end{lemma}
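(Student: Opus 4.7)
\textbf{Plan for the proof of Lemma~\ref{lem:step-s2}.}
The plan is to argue by contradiction: assume $\ell_2$ is not constructed in (S2). Since every label is introduced either in (S1) or in (S2), this forces $\ell_2$ to come from (S1). I will then show that the two hypotheses $\pi'(\ell_1)^F \preceq \pi'(\ell_2)^F$ and $\an{\pi'(\ell_1)}{h'} \neq \an{\pi'(\ell_2)}{h'}$ cannot coexist under this assumption, yielding the contradiction.

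The key observation is that when $\ell_2$ is produced by (S1), the pair placed in $\Pi$ is of the form $(u_2, \alpha(u_2))$ for some leaf $u_2 \in L(T)$. Thus $\pi'(\ell_2) = \alpha(u_2) \in \im(\alpha)$, which means that $\pi'(\ell_2)$ is its own lowest ancestor in $\im(\alpha)$, i.e.\ $\pi'(\ell_2)^F = \pi'(\ell_2)$. On the other hand, by the definition of $(\,\cdot\,)^F$, the point $\pi'(\ell_1)^F$ is always an ancestor of $\pi'(\ell_1)$, so $\pi'(\ell_1) \preceq \pi'(\ell_1)^F$. Chaining this with the hypothesis $\pi'(\ell_1)^F \preceq \pi'(\ell_2)^F = \pi'(\ell_2)$ gives $\pi'(\ell_1) \preceq \pi'(\ell_2)$.

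From $\pi'(\ell_1) \preceq \pi'(\ell_2)$ we conclude $f'(\pi'(\ell_1)) \le f'(\pi'(\ell_2))$, so $h' = f'(\pi'(\ell_2))$, and therefore $\an{\pi'(\ell_2)}{h'} = \pi'(\ell_2)$. Since $\pi'(\ell_2)$ is itself an ancestor of $\pi'(\ell_1)$ lying at height $h'$, the unique ancestor of $\pi'(\ell_1)$ at that height is $\pi'(\ell_2)$ as well, so $\an{\pi'(\ell_1)}{h'} = \pi'(\ell_2) = \an{\pi'(\ell_2)}{h'}$. This contradicts the hypothesis that these two ancestors differ, and so $\ell_2$ must have been constructed in (S2).

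There is no real obstacle here: the proof is essentially a one-step chase exploiting the fact that points introduced in (S1) are automatically in $\im(\alpha)$ and hence equal to their own $F$-ancestor. The only thing to be a little careful about is to verify that every label originates in either (S1) or (S2), which is immediate from the construction.
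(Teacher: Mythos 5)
Your proposal is correct and follows essentially the same argument as the paper: assume $\ell_2$ arises from \enumit{(S1)}, conclude that $\pi'(\ell_2)$ lies in $\im(\alpha)$ and hence equals $\pi'(\ell_2)^F$, chain the ancestor relations to get $\pi'(\ell_1) \preceq \pi'(\ell_2)$, and deduce $\an{\pi'(\ell_1)}{h'} = \an{\pi'(\ell_2)}{h'}$, contradicting the hypothesis. Nothing is missing.
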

\begin{proof}
	We do a proof by contradiction.
	Assume label $\ell_2$ is not constructed in \enumit{(S2)}, but rather in \enumit{(S1)}.
	Then, by construction, $\pi'(\ell_2) = \alpha(\pi(\ell_2))$.
	This means that $\pi'(\ell_2)$ lies in the image of $\alpha$, and thus $\pi'(\ell_2)^F = \pi'(\ell_2)$.
	As we assumed $\pi'(\ell_1)^F \preceq \pi'(\ell_2)^F$, it follows that~$\pi'(\ell_1) \preceq \pi'(\ell_1)^F \preceq \pi'(\ell_2)^F = \pi'(\ell_2)$, so $h'= f'(\pi'(\ell_2))$.
	This means that $\pi'(\ell_2)$ is the unique ancestor of $\pi'(\ell_1)$ at height $h'$.
	However, then we get $\an{\pi'(\ell_1)}{h'} = \pi'(\ell_2) = \an{\pi'(\ell_1)}{h'}$, contradicting our assumption.
\hfill\end{proof}

\noindent 
Now, consider a label $\ell$ constructed in \enumit{(S2)}, and let the sets $W$, $X$ and $S$ be the corresponding sets as defined in the above construction.
Define $h, h^F$ and $\hat{h}$ accordingly.
Furthermore, for $k \in S$, denote the unique ancestor of $w_k$ at height $\hat{h}$ by $\hat{w}_k$, and define $X_k \subseteq X$ as the set of ancestors of points in $\alpha^{-1}(\hat{w}_k)$.
For a height value $h \ge 0$ and two sets $X_1, X_2 \subseteq \ls{h}$, we say $X_1 \le_h X_2$ if $x_1 \le_h x_2$ for all $x_1 \in X_1$ and $x_2 \in X_2$.
We now show that the sets $X_i$ as defined in the above construction respect the layer-order.

\begin{restatable}{lemma}{xell}\label{lem:x-ell}
	Consider two indices $i, j \in S$.
	If $i < j$ and $\hat{w}_i \neq \hat{w}_j$, then $X_{i} \le_h X_{j}$.
\end{restatable}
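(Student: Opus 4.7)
The plan is to transport the fact that $w_i$ precedes $w_j$ in the induced leaf-order of $T'$ (which holds since $W$ is sorted by this leaf-order and $i<j$) forward to the layer at height $\hat h$ in $T'$, pull it back through $\alpha$ to the layer at height $\hat h - \delta$ in $T$, and finally lift it to the layer at height $h \coloneqq f'(w^F) - \delta$ containing $X$ using Lemma~\ref{lem:total-order-prop}.

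First, I appeal to consistency of the induced layer-order $(\le'_h)$ on $T'$: since $\hat h \ge \hat h_1 \ge f'(w_i^F) \ge f'(w_i)$ and symmetrically $\hat h \ge f'(w_j)$, the leaf-order comparison between $w_i$ and $w_j$ upgrades to $\hat w_i = \an{w_i}{\hat h} \le'_{\hat h} \an{w_j}{\hat h} = \hat w_j$, and the hypothesis $\hat w_i \ne \hat w_j$ promotes this to the strict relation $\hat w_i <'_{\hat h} \hat w_j$.

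Second, given any $x_i \in X_i$ and $x_j \in X_j$, I would unfold the definition of $X_k$ to obtain descendants $z_i \preceq x_i$ with $\alpha(z_i) = \hat w_i$ and $z_j \preceq x_j$ with $\alpha(z_j) = \hat w_j$, both lying in the layer $\ls{\hat h - \delta}$ by the $\delta$-shift property. Using layer-monotonicity of $\alpha$ at this layer contrapositively rules out $z_j \le_{\hat h - \delta} z_i$, since that would imply $\hat w_j \le'_{\hat h} \hat w_i$, contradicting the first step. Totality of $\le_{\hat h - \delta}$ together with $z_i \ne z_j$ (forced by $\alpha(z_i) \ne \alpha(z_j)$) then yields $z_i <_{\hat h - \delta} z_j$.

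Finally, to conclude $x_i \le_h x_j$, I argue by contradiction: if $x_i \ne x_j$ and $x_j <_h x_i$ held, then since $z_i \in T_{x_i}$ and $z_j \in T_{x_j}$ both sit at the lower height $\hat h - \delta \le h$, Lemma~\ref{lem:total-order-prop} would force $z_j <_{\hat h - \delta} z_i$, contradicting the second step; by totality of $\le_h$ we therefore obtain $x_i \le_h x_j$. The main subtlety, rather than any deep obstacle, is the height bookkeeping: $\hat h$ is chosen precisely so that each $\hat w_k$ lies in $\im(\alpha)$ (ensuring $\alpha^{-1}(\hat w_k)$ is nonempty) while remaining strictly below $w^F$ (so $\hat h - \delta \le h$), which is exactly what is needed for Lemma~\ref{lem:total-order-prop} to apply at the final step.
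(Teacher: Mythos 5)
Your proposal is correct and follows essentially the same route as the paper's proof: both transport the order $w_i \ileaf' w_j$ up to height $\hat h$ via consistency, pull it back through preimages of $\hat w_i, \hat w_j$ at height $\hat h - \delta$ using layer-monotonicity of $\alpha$, and invoke Lemma~\ref{lem:total-order-prop} together with totality/antisymmetry to conclude; the only difference is that you contradict at the level of the preimage points $z_i, z_j$ while the paper contradicts the distinctness of $\hat w_i$ and $\hat w_j$, which is an immaterial reorganisation. Your height bookkeeping ($\hat h < f'(w^F)$, hence $\hat h - \delta < h$) matches the paper's.
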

\begin{proof}
	We give a proof by contradiction.
	Assume that not $X_i \le_h X_j$.
	Since $\le_h$ is a total order, that means there exist $x_i \in X_i$ and $x_j \in X_j$ such that $x_j <_h x_i$.
	Since by construction we have $\hat{h} < f'(w_\ell^F)$, it follows that $\hat{h} - \delta < f'(w_\ell^F) - \delta = h$.
	Moreover, we know that as $x_i \in X_i$, there is at least one point $\bar{x}_i \in \alpha^{-1}(\hat{w}_i)$ such that $\bar{x}_i \preceq x_i$.
	Similarly, there also is a point $\bar{x}_j \in \alpha^{-1}(\hat{w}_j)$ such that $\bar{x}_j \preceq x_j$.
	Since $f'(\hat{w}_i) = \hat{h} = f'(\hat{w}_j)$, it follows that $f(\bar{x}_i) = \hat{h} - \delta = f(\bar{x}_j)$.
	Now, we assumed that $x_j <_h x_i$, so from Lemma~\ref{lem:total-order-prop} we obtain that $\bar{x}_j <_{\hat{h}-\delta} \bar{x}_i$.
	By monotonicity of $\alpha$, this means that $\hat{w}_j = \alpha(\bar{x}_j) \le'_{\hat{h}} \alpha(\bar{x}_i) = \hat{w}_i$.
	
	Next, define $h^w = \max(f'(w_i), f'(w_j))$.
	Since $i < j$, we know that $w_i \ileaf' w_j$.
	In other words, we have $\an{w_i}{h^w} \le'_{h^w} \an{w_j}{h^w}$.
	By construction, we know that $f'(w_i) \le \hat{h}$ and $f'(w_j) \le \hat{h}$.
	Together, we thus obtain $h^w \le \hat{h}$.
	Using consistency of $(\le_h)$, it follows that $\hat{w}_i \le_{\hat{h}} \hat{w}_j$.
	We have now shown both $\hat{w}_i \le_{\hat{h}} \hat{w}_j$ and $\hat{w}_j \le_{\hat{h}} \hat{w}_i$.
	By antisymmetry of $\le_h$ we thus get $\hat{w}_i = \hat{w}_j$.
	However, this contradicts our assumption that they are distinct.
\hfill\end{proof}

\noindent 
Before we prove that $(\pi, \pi')$ is monotone, we show one more auxiliary fact.
\begin{lemma}\label{lem:good-property}
	Let $\alpha \colon \geom \to \geom'$ be a $\delta$-shift map.
	Then, for $x \in \geom$ and $h \ge f(x)$, we have~$\alpha(\an{x}{h}) = \an{\alpha(x)}{h + \delta}$.
\end{lemma}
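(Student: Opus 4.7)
The plan is to exploit continuity of $\alpha$ and the height-shift identity to transfer the ancestor chain from $x$ up to $\an{x}{h}$ in $\geom$ into an ancestor chain from $\alpha(x)$ up to $\alpha(\an{x}{h})$ in $\geom'$. First, I would introduce the canonical monotone parameterisation of that chain by height: define $\gamma \colon [f(x), h] \to \geom$ by $\gamma(t) = \an{x}{t}$. Since $f$ is continuous and strictly increasing toward the root, $\gamma$ is the unique continuous $f$-monotone path with $\gamma(f(x)) = x$ and $\gamma(h) = \an{x}{h}$, and it satisfies $f(\gamma(t)) = t$ throughout.

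Next, I would consider the composite $\alpha \circ \gamma \colon [f(x), h] \to \geom'$. Continuity of $\alpha$ makes this composite continuous, and the shift identity $f'(\alpha(y)) = f(y) + \delta$ yields $f'(\alpha(\gamma(t))) = t + \delta$, so $f'$ is strictly increasing along $\alpha \circ \gamma$. Hence $\alpha \circ \gamma$ is a continuous, height-monotone path in $\geom'$ starting at $\alpha(x)$ and ending at $\alpha(\an{x}{h})$, so $\alpha(\an{x}{h}) \succeq \alpha(x)$. Since $f'(\alpha(\an{x}{h})) = h + \delta$ and every point of a merge tree has a unique ancestor at each prescribed height above it, I conclude $\alpha(\an{x}{h}) = \an{\alpha(x)}{h + \delta}$.

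The only subtlety I anticipate is formally justifying, in the second step, that a continuous path in $\geom'$ along which $f'$ is strictly increasing must lie on a single ancestor chain. This follows from the tree structure together with the fact that $f'$ strictly increases toward the root: at any interior point of $\geom'$, the only continuous directions along which $f'$ can strictly increase run along the edge toward the parent, so a continuous $f'$-monotone curve cannot branch away from the ancestor chain of its starting point. Once that is in hand, the rest is bookkeeping with the height-shift identity and ancestor uniqueness.
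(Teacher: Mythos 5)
Your proof is correct and follows essentially the same route as the paper: establish that $\alpha(\an{x}{h})$ has height $h+\delta$ and is an ancestor of $\alpha(x)$, then invoke uniqueness of the ancestor at that height. The only difference is that you explicitly unpack, via the path $\gamma$ and the monotonicity of $f'$ along $\alpha\circ\gamma$, the fact that a $\delta$-shift map preserves the ancestor relation, which the paper simply cites as a known property of $\delta$-shift maps.
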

\begin{proof}
	Fix $x \in \geom$.
	First, observe that for $h \ge f(x)$, we have $f'(\alpha(\an{x}{h})) = f(\an{x}{h}) + \delta = h + \delta = f'(\an{\alpha(x)}{h+\delta})$.
	Secondly, by properties of the $\delta$-shift map $\alpha$, we know that as $\an{x}{h} \succeq x$, we have $\alpha(\an{x}{h}) \succeq \alpha(x)$.
	Since $\an{\alpha(x)}{h+\delta}$ is the unique ancestor of $\alpha(x)$ at height $h + \delta$, it follows that $\alpha(\an{x}{h}) = \an{\alpha(x)}{h+\delta}$.
\hfill\end{proof}

\begin{restatable}{lemma}{goodlabel}\label{lem:good-label}
	A $\delta$-good labelling $(\pi, \pi')$ obtained by the refined construction is monotone.
\end{restatable}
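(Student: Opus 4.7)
\noindent The plan is to verify monotonicity of $(\pi,\pi')$ by case analysis on how each of $\ell_1,\ell_2\in[n]$ was constructed: via \enumit{(S1)} (leaves of $T$) or via the refined \enumit{(S2)} (leaves of $T'$). Throughout, I assume $\pi(\ell_1)\sirel\pi(\ell_2)$ and derive $\pi'(\ell_1)\irel'\pi'(\ell_2)$.

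When both labels come from \enumit{(S1)}, we have $\pi'(\ell_k) = \alpha(\pi(\ell_k))$. Setting $h = \max(f(\pi(\ell_1)), f(\pi(\ell_2)))$, the hypothesis gives $\an{\pi(\ell_1)}{h}\le_h \an{\pi(\ell_2)}{h}$; Lemma~\ref{lem:good-property} rewrites each side as an ancestor of $\pi'(\ell_k)$ at height $h+\delta$, and monotonicity of $\alpha$ transports the ordering, yielding the claim directly.

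When both labels come from \enumit{(S2)}, let $\pi(\ell_k) = \hat{x}_k$ and $\pi'(\ell_k) = w_k$, and split on whether $w_1^F$ and $w_2^F$ agree. If they do, both $\hat{x}_k$ lie in the common fibre $X = \alpha^{-1}(w^F)$ and $w_1, w_2$ appear in the same sorted list $W$ used in \enumit{(S2)}; Lemma~\ref{lem:x-ell} combined with each $\hat{x}_k$ being the largest element of its $X_{\ihat_k}$ transports $\hat{x}_1 \irel \hat{x}_2$ to $w_1 \ileaf' w_2$. If $w_1^F \neq w_2^F$, Lemma~\ref{lem:step-s2} forces one of the $F$-projections to be a strict ancestor of the other in $T'$; monotonicity of $\alpha$ together with Lemma~\ref{lem:good-property} then gives the required ordering at the ancestor height where the two fibres meet.

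The mixed case --- WLOG $\ell_1$ from \enumit{(S1)} and $\ell_2$ from \enumit{(S2)}, with $\pi(\ell_1) = u$, $\pi(\ell_2) = \hat{x}$, $\pi'(\ell_2) = w$ --- is the main obstacle. The choice of $\hat{x}$ as the largest element of $X_{\ihat}$ (or the smallest of $X$ when $S_i=\emptyset$) was engineered precisely to block monotonicity violations against \enumit{(S1)}-labels. I would argue by contradiction: assuming $w \sirel' \alpha(u)$, use the construction height $\hat{h}$ to locate an index $j$ in the sorted list $W$ whose ancestor $\hat{w}_j$ witnesses the position of $\alpha(u)$'s ancestor at height $\hat{h}$; compare $j$ to $\ihat$, and derive from the extremality of $\hat{x}$ in $X_{\ihat}$, together with monotonicity of $\alpha$ via Lemma~\ref{lem:good-property}, a contradiction with $u \sirel \hat{x}$. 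The technical difficulty lies in carefully matching $u$'s ancestor at height $\hat{h}-\delta$ with the appropriate witness in $W$ and in handling the boundary behaviour at height $f'(w^F)$, where $\alpha(u)$ and $w$ may share the common ancestor $w^F$ and the strict/non-strict distinction in the induced relation must be tracked via Lemma~\ref{lem:total-order-prop}.
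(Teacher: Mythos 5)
Your decomposition by construction step (both \enumit{(S1)}, both \enumit{(S2)}, mixed) differs from the paper's decomposition by the ancestor relationship of $y_k = \pi'(\ell_k)^F$, and it is workable in principle; your first cell (both labels from \enumit{(S1)}) is correct and matches the paper's direct argument via Lemma~\ref{lem:good-property} and monotonicity of $\alpha$. However, the remaining cells contain a genuine gap, and in one place a false step. When $w_1^F \neq w_2^F$, Lemma~\ref{lem:step-s2} does \emph{not} force one to be a strict ancestor of the other --- they may be incomparable (that subcase is fine and is handled by the same direct argument as Case 1 of the paper). The real problem is the configuration where one \emph{is} a strict ancestor of the other, say $y_1 \prec y_2$: there the proposed ``required ordering at the ancestor height where the two fibres meet'' is vacuous, because at height $h^F = f'(y_2)$ both $\an{\pi'(\ell_1)}{h^F}$ and $\an{\pi'(\ell_2)}{h^F}$ equal $y_2$, so there is no strict order in $T'$ at that height to transport, and monotonicity of $\alpha$ plus Lemma~\ref{lem:good-property} give you nothing. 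This is precisely the situation the refined \enumit{(S2)} was engineered for, and resolving it requires descending \emph{below} $w^F$: one must pass to the contrapositive (assume $\pi'(\ell_1) \sirel' \pi'(\ell_2)$, conclude $\pi(\ell_1) \irel \pi(\ell_2)$), locate both labels relative to the sorted leaf list $W$ of $T'_{w^F}$, and compare the fibre-subsets $X_k$ at the carefully chosen height $\hat h$ using Lemma~\ref{lem:x-ell}. The contrapositive is not optional cosmetics here: Lemma~\ref{lem:x-ell} transports order from indices in $W$ (i.e.\ from $T'$) to the sets $X_k$ in $T$, which is the opposite direction from the implication you are trying to prove directly.

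For the mixed case you candidly flag the difficulty but do not supply the missing step, which is the analogue of the paper's Claim~\ref{clm:ell-X}: that the ancestor $\an{\pi(\ell_{k_1})}{h}$ of the \enumit{(S1)}-label lands in the correct set $X_i$, where $i$ indexes a leaf of $T'_{\pi'(\ell_{k_1})}$ in $W$. That claim hinges on showing $f'(y_{k_1}) \le \hat h$, and this is exactly why $\hat h$ is defined as $\max(\hat h_1, \hat h_2)$ with $\hat h_2$ ranging over the \enumit{(S1)}-labelled strict descendants $Y$ of $w^F$ --- the term $\hat h_2$ exists solely to make the mixed case go through. Without identifying and proving that containment, the ``compare $j$ to $\ihat$ and use extremality of $\hat x$'' step has no foothold, and the same containment is also what lets the equal-fibre subcase (your $w_1^F = w_2^F$) reduce cleanly to Lemma~\ref{lem:x-ell} after establishing that $\hat w_\ihat \neq \hat w_\jhat$ (which in turn needs the lowest-common-ancestor claim $v = v^F$). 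So the proposal correctly identifies the ingredients but does not close the two hard cases; as written it is not a complete proof.
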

\begin{proof}
	Let $n$ be the size of $(\pi, \pi')$.
	To show that $(\pi, \pi')$ is monotone, we need to show that for all labels $\ell_1, \ell_2 \in [n]$, we have $\pi(\ell_1) \sirel \pi(\ell_2)$ implies $\pi'(\ell_1) \irel' \pi'(\ell_2)$.
	Fix $\ell_1, \ell_2 \in [n]$ and set $y_1 = \pi'(\ell_1)^F$ and $y_2 = \pi'(\ell_2)^F$.
	Define height values $h = \max(f(\pi(\ell_1)), f(\pi(\ell_2)))$, $h' = \max(f'(\pi'(\ell_1)), f'(\pi'(\ell_2)))$ and $h^F = \max(f'(y_1), f'(y_2))$.
	Observe that $h' \le h^F$.
	
    \proofsubparagraph{Case 1: $y_1$ and $y_2$ lie in different subtrees.}
	In this case, the points $y_1$ and $y_2$ do not form an ancestor/descendant pair.
	We claim that the points $\an{\pi'(\ell_1)}{h^F}$ and $\an{\pi'(\ell_2)}{h^F}$ are distinct.
	Towards a contradiction, assume they are the same point.
	Without loss of generality, assume $h^F = f'(y_1)$.
	Otherwise, if $h^F = f'(y_2)$, we can use a symmetric argument.
	It follows that $\an{\pi'(\ell_1)}{h^F} = y_1$.
	But then, we get $y_2 \prec \an{\pi'(\ell_2)}{h^F} = \an{\pi'(\ell_1)}{h^F} = y_1$, contradicting our assumption that $y_1$ and $y_2$ do not form an ancestor/descendant pair.
	We know that both $\alpha(\pi(\ell_1)) = y_1$ and $\alpha(\pi(\ell_2)) = y_2$.
	As a result, we have~$h + \delta = h^F$.
	Using Lemma~\ref{lem:good-property}, we obtain the following property for both $i = 1$ and $i = 2$:
	\begin{align}\label{eq:label-property-2}
		\alpha(\an{\pi(\ell_i)}{h}) = \an{\alpha(\pi(\ell_i))}{h+\delta} = \an{y_i}{h^F} = 	\an{\pi'(\ell_i)}{h^F}.
	\end{align}
	Now, assume $\pi(\ell_1) \sirel \pi(\ell_2)$, that is, $\an{\pi(\ell_1)}{h} <_h \an{\pi(\ell_2)}{h}$.
	By monotonicity of $\alpha$, it follows that~$\alpha(\an{\pi(\ell_1)}{h}) \le'_{h+\delta} \alpha(\an{\pi(\ell_2)}{h})$.
	Using Equation~\eqref{eq:label-property-2} and the previous claim that these points are distinct, we obtain $\an{\pi'(\ell_1)}{h^F} <'_{h^F} \an{\pi'(\ell_2)}{h^F}$.
	Since $h' \le h^F$, we can use Lemma~\ref{lem:total-order-prop} to get~$\an{\pi'(\ell_1)}{h'} \le'_{h'} \an{\pi'(\ell_2)}{h'}$.
	In other words, we have $\pi'(i) \lei' \pi'(j)$.
	
    \proofsubparagraph{Case 2: $y_1$ and $y_2$ are the same point.}
	Secondly, we consider the case $\pi'(i)^F = y_1 = y_2 = \pi'(j)^F$.
	Instead of showing that $\pi(\ell_1) \sirel \pi(\ell_2)$ implies $\pi'(\ell_1) \irel' \pi'(\ell_2)$, we show the contrapositive.
	That is, we prove that not $\pi'(\ell_1) \irel' \pi'(\ell_2)$ implies not $\pi(\ell_1) \sirel \pi(\ell_2)$.
	As $\lei$ is a total relation, this statement can be rewritten to $\pi'(\ell_2) \sirel' \pi'(\ell_1)$ implies $\pi(\ell_2) \irel \pi(\ell_1)$.
	For readability and consistency, we prove the symmetric case where the variables $\ell_1$ and $\ell_2$ are swapped.
	So, we prove the following statement: $\pi'(\ell_1) \sirel' \pi'(\ell_2)$ implies $\pi(\ell_1) \irel \pi(\ell_2)$.
	
    Assume $\pi'(\ell_1) \sirel' \pi'(\ell_2)$.
    We denote the point $y_1 = y_2$ by $v^F$.
	By construction, we know that $\alpha(\pi(\ell_1)) = y_1 = v^F$ and $\alpha(\pi(\ell_2)) = y_2 = v^F$.
	As $\alpha$ is a $\delta$-shift map, the points $\pi(\ell_1)$ and $\pi(\ell_2)$ must lie in the same layer, at height $h = f'(v^F) - \delta = h^F - \delta$.
	Consequentially, the unique ancestors of $\pi(\ell_1)$ and $\pi(\ell_2)$ at height $h$ are simply the points $\pi(\ell_1)$ and $\pi(\ell_2)$ themselves.
	So, it suffices to prove $\pi(\ell_1) \le_h \pi(\ell_2)$
	
	As $y_1 = y_2$, both ancestor/descendant relations $y_1 \preceq y_2$ and $y_2 \preceq y_1$ hold.
	Moreover, from the assumption $\pi'(\ell_1) \sirel' \pi'(\ell_2)$, it follows that the points $\an{\pi'(\ell_1)}{h'}$ and $\an{\pi'(\ell_2)}{h'}$ are distinct.
	By Lemma~\ref{lem:step-s2}, we obtain that both labels $\ell_1$ and $\ell_2$ are constructed in \enumit{(S2)}.
	Next, sort the leaves in $T'_{v^F}$ by leaf-order, denoted by $W = \{w_1, \ldots, w_m\}$.
	Sort the points in~$\alpha^{-1}(v^F)$ by $\le_{h}$, denoted $X$.
	Observe that both points $\pi(\ell_1) \in X$ and $\pi(\ell_2) \in X$.
	Define $S = \{k \in [m] \mid w_k^F \prec v^F\}$.
	Lastly, for $i \in [m]$, define $S_{k} = \{k \in S \mid k < i\}$.
	
	As both $\ell_1$ and $\ell_2$ are constructed in \enumit{(S2)}, we know that $\pi'(\ell_1)$ and $\pi'(\ell_2)$ are leaves in~$W$.
	Fix $i, j \in [m]$ such that $w_i = \pi'(\ell_1)$ and $w_j = \pi'(\ell_2)$.
	Since we assumed $\pi'(\ell_1) \sirel' \pi'(\ell_2)$, we know that $i < j$.
	If $S_{i}$ is empty, $\pi(\ell_1)$ is the smallest point in $X$.
	In particular, this means that for $\pi(\ell_2) \in X$, it follows that $\pi(\ell_1) \le_h \pi(\ell_2)$.
	Otherwise, if $S_{i}$ is not empty, we know that as $S_{i} \subseteq S_{j}$, the set $S_j$ is non-empty as well.
	Let $\ihat$ and $\jhat$ be the maxima of $S_{i}$ and $S_{j}$.
	If $\ihat = \jhat$, then $\pi(\ell_1)$ and $\pi(\ell_2)$ were constructed equally, implying $\pi(\ell_1) \le_h \pi(\ell_2)$.
	In the case that $\ihat \neq \jhat$, it holds that $\ihat < i < \jhat < j$, and consequentially that $w_{\ihat} \sileaf' w_{i} \sileaf' w_{\jhat} \sileaf' w_{j}$.
	Define $v \coloneqq \lca(w_{\elll_i}, w_{\elll_j})$.
	We now first prove the following claim.
	
	\begin{claim}\label{clm:lca-leaves}
		We claim that $v = v^F$.
	\end{claim}
	\begin{claimproof}
		Observe that $v^F$ is a common ancestor of $w_{\ihat}$ and $w_{\jhat}$, so $v \preceq v^F$.
		Towards a contradiction, suppose that $v \prec v^F$.
		Since $\ihat \in S$, we know that $w^F_{\ihat} \prec v^F$.
		Now, observe that $w_\ihat$ is a descendant of both $w^F_{\ihat}$.
		As a result, both $v$ and $w^F_\ihat$ lie on the path from $w_\ihat$ to~$v^F$.
		So, either~$w_{\ihat}^F \preceq v$ or $v \prec w_{\ihat}^F \prec v^F$.
		In both cases, there is a point $\hat{v} \in \im(\alpha)$ such that~$v \preceq \hat{v} \prec v^F$: if $w_{\ihat}^F \preceq v$ then by continuity of $\alpha$ we must have $v \in \im(\alpha)$, and if~$v \prec w_{\ihat}^F \prec v^F$ then we can choose $\hat{v} = w_\ihat^F$.
		Recall the separating-subtree property of the leaf-order $\ileaf'$.
		That property tells us that as $w_{\ihat} \sileaf' w_{i} \sileaf' w_{\jhat}$, we have $w_{i} \in T'_v$.
		In particular, we obtain $w_{i} \preceq v \preceq \hat{v} \prec v^F$.
		However, this contradicts our choice of $v^F$, which is the lowest ancestor of $w_i$ in the image of $\alpha$. 
		Hence, we must have $v = v^F$.
	\hfill\end{claimproof}
	
	\noindent Next, define $Y$ as the set of points in $T'_{v^F}$ labelled in step \enumit{(S1)} of the construction, excluding $v^F$ itself.
	Define $\hat{h}_1 = \max\{f'(w^F_k) \mid k \in S\}$, $\hat{h}_2 = \max\{f'(y) \mid y \in Y\}$ and $\hat{h} = \max(\hat{h}_1, \hat{h}_2)$, and note $\hat{h} < h^F$.
	For $k \in S$, define $\hat{w}_k$ as the unique ancestor of $w_k$ at height $\hat{h}$, and define $X_k \subseteq X$ as the set of ancestors of points in $\alpha^{-1}(\hat{w}_k)$.
	We claim that $\hat{w}_\ihat$ and $\hat{w}_\jhat$ are distinct points.	
	Otherwise, if $\hat{w}_\ihat = \hat{w}_\jhat$, the point $\hat{w}_\ihat$ is an ancestor of both $w_\ihat$ and $w_\jhat$, with $f'(\hat{w}_\ihat) = \hat{h}$.
	However, from Claim~\ref{clm:lca-leaves} we know that the lowest common ancestor of $w_\ihat$ and $w_\jhat$ is $v^F$, with height $h^F > \hat{h}$.
	Hence, it follows that $\hat{w}_\ihat$ and $\hat{w}_\jhat$ are indeed distinct points.
	Finally, as $\ihat < \jhat$, we can use Lemma~\ref{lem:x-ell} to get $X_{\ihat} \le_h X_{\jhat}$.
	By construction we know that $\pi(i) \in X_{\ihat}$ and $\pi(j) \in X_{\jhat}$, so we conclude $\pi(i) \le_h \pi(j)$.
 
	\proofsubparagraph{Case 3: one of $y_1$ and $y_2$ is a strict ancestor of the other.}
	Thirdly, consider the case that one of $y_1$ and $y_2$ is a strict ancestor of the other.
	Similar as in Case 2, we show the contrapositive statement: $\pi'(\ell_1) \sirel' \pi'(\ell_2)$ implies $\pi(\ell_1) \irel \pi(\ell_2)$.
	Assume that $\pi'(\ell_1) \sirel' \pi'(\ell_2)$.
	Then we know $\an{\pi'(\ell_1)}{h'} \neq \an{\pi'(\ell_2)}{h'}$.
	This third case consists of two subcases: $y_1 \prec y_2$ and $y_2 \prec y_1$.
	Since these subcases are very similar up to a certain point, we combine them as follows.
	Let $k_1, k_2 \in \{1, 2\}$ such that $y_{k_1} \prec y_{k_2}$.
	By Lemma~\ref{lem:step-s2}, it then follows that label $\ell_{k_2}$ is constructed in \enumit{(S2)}.
	Note that $h^F = f'(\pi'(\ell_{k_2}))$, and that $h = h^F - \delta$.
	
    Again, sort the leaves of $T_{y_{k_2}}$ by leaf order, denoted $W = \{w_1, \ldots, w_m\}$, and set $X = \alpha^{-1}(y_{k_2})$.
	By construction, we know that $\pi(\ell_{k_2}) \in X$.
	Define $S = \{k \in [m] \mid w_k^F \prec y_{k_2}\}$, and $Y$ as the strict descendants of $v^F$ labelled in step \enumit{(S1)}.
	Set $\hat{h}_1 = \max\{f'(w^F_k) \mid k \in S\}$, $\hat{h}_2 = \max\{f'(y) \mid y \in Y\}$ and $\hat{h} = \max(\hat{h}_1, \hat{h}_2)$, and note $\hat{h} < h^F$.
	For $k \in [m]$, define $S_k$ as the set of indices in $S$ smaller than $k$, $\hat{w}_k$ as the unique ancestor of $w_k$ at height $\hat{h}$, and $X_k \subseteq X$ as the set of ancestors of points in $\alpha^{-1}(\hat{w}_k)$.
	Next, consider a leaf $w \in T'_{\pi'(\ell_{k_1})}$.
	Since $w \preceq y_{k_1} \prec y_{k_2}$, we know that $w \in W$ and thus that there exists an $i \in [m]$ such that $w = w_i$.
	Moreover, it follows that $i \in S$.
	We now prove the following claim.
	
	\begin{claim}\label{clm:ell-X}
		The point $\an{\pi(\ell_{k_1})}{h}$ is an element of $X_i$.
	\end{claim}
	\begin{claimproof}
		We first argue that $\pi'(\ell_{k_1})$ is a descendant of $\hat{w}_i$.
		Consider two cases.
		If label $\ell_{k_1}$ is constructed in \enumit{(S1)}, then $\alpha(\pi(\ell_{k_1})) = \pi'(\ell_{k_1})$, and as a result $\pi'(\ell_{k_1}) \in \im(\alpha)$.
		It follows that~$\pi'(\ell_{k_1}) = y_{k_1}$.
		Moreover, as $y_{k_1}$ is a strict descendant of $y_{k_2}$, we know that $\pi'(\ell_{k_1}) \in Y$.
		Together, we get $f'(y_{k_1}) = f'(\pi'(\ell_{k_1})) \le \hat{h}_2 \le \hat{h}$.
		Otherwise, if label $\ell_{k_1}$ is constructed in \enumit{(S2)}, then $\pi'(\ell_{k_1}) \in W$ and $\ell_{k_1} \in S$, implying $f'(y_{k_1}) \le \hat{h}_1 \le \hat{h}$.
		Either way, we have 
		\begin{align}\label{eq:pi-descendant}
			f'(y_{k_1}) \le \hat{h}.
		\end{align}
		Since both $y_{k_1}$ and $\hat{w}_i$ are ancestors of $w_i$, it follows that $y_{k_1} \preceq \hat{w}_i$.
		Consequentially, $\pi'(\ell_{k_1}) \prec y_{k_1}$ implies that $\pi'(\ell_{k_1})$ is a descendant of $\hat{w}_i$.
		In addition, Equation~\eqref{eq:pi-descendant} tells us that $f(\pi(\ell_{k_1})) = f'(y_{k_1}) - \delta \le \hat{h} - \delta$.
		As a result, the point $\an{\pi(\ell_{k_1})}{\hat{h} - \delta}$ is well-defined.
		Using Lemma~\ref{lem:good-property} we obtain $\alpha(\an{\pi(\ell_{k_1})}{\hat{h} - \delta}) = \an{\alpha(\pi(\ell_{k_1}))}{\hat{h}} = \hat{w}_i$.
		In other words, $\an{\pi(\ell_{k_1})}{\hat{h}-\delta} \in \alpha^{-1}(\hat{w}_i)$.
		Recall that $\hat{h} < h^F = h+\delta$.
		Rewriting, we see that $\hat{h} - \delta < h$.
		Hence, as $\an{\pi(\ell_{k_1})}{h}$ is the ancestor of $\an{\pi(\ell_{k_1})}{\hat{h}-\delta} \in \alpha^{-1}(\hat{w}_i)$, we get $\an{\pi(\ell_{k_1})}{h} \in X_{i}$.
	\hfill\end{claimproof}
	
	\noindent Recall that we assumed $\pi'(\ell_1) \sirel' \pi'(\ell_2)$, and that $h = h^F - \delta$.
	Since $h^F = f'(\pi(\ell_{k_2}))$, it follows that $h = f(\pi(\ell_{k_2}))$. 
	We consider the two possible allocations of labels.
 
    \proofsubparagraph{Case 3a: $k_1 = 1$ and $k_2 = 2$.}
	In this case, we have $y_1 \prec y_2$.
	Since label $\ell_2$ is constructed in \enumit{(S2)}, we know that $\pi'(\ell_2)$ is a leaf.
	Fix $j \in [m]$ such that $w_{j} = \pi'(\ell_2)$, and recall that we fixed $i \in [m]$ such that $w_i = w$, where $w \in T'_{\pi'(\ell_1)}$.
	We first argue that $i < j$.
	Set $h^w = \max(f'(w_{i}), f'(w_{j}))$, and observe that $h^w \le h'$.
	As we assumed $\pi'(\ell_1) \sirel' \pi'(\ell_2)$, we have $\an{\pi'(\ell_1)}{h'}  <'_{h'} \an{\pi'(\ell_2)}{h'}$.
	Using Lemma~\ref{lem:total-order-prop}, it follows that $\an{w_i}{h^w} <'_{h^w} \an{w_j}{h^w}$ and thus $w_{i} \sileaf' w_{j}$.
	This implies $i < j$.
	As in addition we have $i \in S$, it follows that $i \in S_{j}$.
	In particular, this means that $S_{j}$ is not empty.
	Let $\jhat$ be the largest index in $S_j$.
	By construction, $\pi(\ell_2)$ is the largest point from the set $X_{\jhat}$.
	From Claim~\ref{clm:ell-X} we know that $\an{\pi(\ell_1)}{h} \in X_i$.
	If $i = \jhat$, we immediately have $\an{\pi(\ell_1)}{h} \in X_\jhat$, and as a result, $\an{\pi(\ell_1)}{h} \le_h \pi(\ell_2)$.
	Otherwise, if $i < \jhat$, then by Lemma~\ref{lem:x-ell} we know that $X_{i} \le_h  X_{\jhat}$.
	In particular, for $\an{\pi(\ell_1)}{h} \in X_i$ and $\pi(\ell_2) \in X_\jhat$, it follows that $\an{\pi(\ell_1)}{h} \le_h \pi(\ell_2)$.
	Either way, we obtain $\pi(\ell_1) \irel \pi(\ell_2)$.
 
    \proofsubparagraph{Case 3b: $k_1 = 2$ and $k_1 = 1$.}
	In this case, we have $y_2 \prec y_1$.
	Since label $\ell_1$ is constructed in \enumit{(S2)}, we know that $\pi'(\ell_1)$ is a leaf.
	Fix $j \in [m]$ such that $w_{j} = \pi'(\ell_1)$, and recall that we fixed $i \in [m]$ such that $w_i = w$, where $w \in T'_{\pi'(\ell_2)}$.
	Using the same argument as in Case~(3a), it now follows that $j < i$.
	If $S_j$ is empty, we know that $\pi(\ell_1)$ is the smallest point in $X$.
	By Claim~\ref{clm:ell-X}, we know that $\an{\pi(\ell_2)}{h} \in X_i \subseteq X$.
	In particular, this means that $\pi(\ell_1) \le_h \an{\pi(\ell_2)}{h}$.
	Otherwise, if $S_j$ is not empty, let $\jhat$ be the largest index in $S_{j}$, and observe that $\jhat < j < i$.
	By Lemma~\ref{lem:x-ell}, it then again follows that $X_{\jhat} \le_h X_{i}$.
	In particular, for $\pi(\ell_1) \in X_j$ and $\an{\pi(\ell_2)}{h} \in X_i$, we get $\pi(\ell_1) \le_h \an{\pi(\ell_2)}{h}$.
	Either way, we obtain $\pi(\ell_1) \lei \pi(\ell_2)$.
\hfill\end{proof}

\subsection{From labellings to interleavings.}
Lastly, we extend the construction of a $\delta$-good map $\alpha \colon \geom \to \geom'$ from a $\delta$-labelling $(\pi, \pi')$ by Gasparovich et al.~\cite{gasparovich2019intrinsic}.
Specifically, for each point $x \in \geom$, they consider a label $\ell$ in the subtree $T_x$.
Note that such a label always exists, as each leaf has at least one label.
Then, they take the unique ancestor $y_\ell$ of $\pi'(\ell)$ at height $f(x) + \delta$, and set $\alpha(x) = y_\ell$.
The authors show~\cite[Theorem 4.1]{gasparovich2019intrinsic} that this point $y_\ell$ is well-defined for any choice of $\ell$, and argue that the resulting map $\alpha$ is a $\delta$-good map.
We use this approach to first construct a $\delta$-shift map $\alpha \colon \geom \to \geom'$, and secondly to construct another $\delta$-shift map~$\beta \colon \geom' \to \geom$.
Together, the construction is as follows:
\begin{enumerate}[\enumit{({S}1)}]
	\item For $x \in \geom$, let $\ell$ be a label in the subtree $T_x$. Set $\alpha(x) = \an{\pi'(\ell)}{f(x) + \delta}$.
	\item For $y \in \geom'$, let $\ell$ be a label in the subtree $T'_y$. Set $\beta(y) = \an{\pi(\ell)}{f'(y) + \delta}$.
\end{enumerate}
\noindent
In the following, we use $S_x \subseteq [n]$ to denote the set of labels in the subtree $T_x$.

\begin{restatable}{lemma}{labelinterleaving}\label{lem:label-interleaving}
	If there is a monotone $\delta$-labelling, then there exists a monotone $\delta$\nobreakdash-interleaving.
\end{restatable}
\begin{proof}
	Let $(\pi, \pi')$ be a monotone $\delta$-labelling.
	Consider the maps $\alpha \colon \geom \to \geom'$ and $\beta \colon \geom' \to \geom$ obtained using the extended construction.	
	To show that the pair $(\alpha, \beta)$ is indeed a monotone $\delta$-interleaving, recall the definition of a $\delta$-interleaving (Definition~\ref{def:interleaving}).
	We briefly argue that $\alpha$ is continuous; continuity of $\beta$ can be proven symmetrically.
	Consider two points $x_1, x_2 \in \geom$ such that $x_1 \prec x_2$.
	Let $i \in S_{x_1}$ be a label in the subtree $T_{x_1}$, so that $\alpha$ maps $x_1$ to the ancestor of $\pi'(i)$ at height $f(x_1) + \delta$.
	The label $i$ is also in the subtree $T_{x_2}$, so $\alpha$ maps $x_2$ to the ancestor of $\pi'(i)$ at height $f(x_2) + \delta$, implying that $\alpha(x_1) \prec \alpha(x_2)$.
	As $\alpha$ is a $\delta$-shift map by construction, it follows that it is continuous.	
	
    \noindent
    To show that $(\alpha, \beta)$ is a $\delta$-interleaving, we argue that conditions \enumit{(C1)}-\enumit{(C4)} hold.
	Conditions \enumit{(C1)} and \enumit{(C3)} of a $\delta$-interleaving follow immediately.
	To see that \enumit{(C2)} holds, we need to show that $\beta(\alpha(x)) = x^{2\delta}$ for all $x \in \geom$.
	Fix $x \in \geom$ and set $h = f(x)$.
	Let $\ell \in S_x$.
	From the construction, we know that $\alpha(x) = \an{\pi'(\ell)}{h+\delta}$.
	Hence, it follows that $\ell \in S_{\alpha(x)}$.
	Consequentially, $\beta(\alpha(x)) = \an{\pi(\ell)}{h+2\delta}$.
	Since $\ell \in S_x$, we know that $\pi(\ell) \preceq x$, and thus $\an{\pi(\ell)}{h+2\delta} = x^{2\delta}$.
	Using a symmetric argument it follows that condition \enumit{(C4)} also holds.
	
	Next, we show that $\alpha$ is monotone.
	Fix $h \ge 0$ and let $x_1, x_2 \in \ls{h}$ such that $x_1 \le_h x_2$.
	To show monotonicity, we need to show $\alpha(x_1) \le'_{h+\delta} \alpha(x_2)$.
	If~$x_1 = x_2$, we trivially get $\alpha(x_1) = \alpha(x_2)$.
	Otherwise, if~$x_1 \neq x_2$, consider two leaves $u_1 \in L(T_{x_1})$ and $u_2 \in L(T_{x_2})$.
	Note that $f(u_1) \le f(x_1) = h$, and similarly $f(u_2) \le h$.
	Defining $h^u = \max(f(u_1), f(u_2))$ we thus see that $h^u \le h$.
	By Lemma~\ref{lem:total-order-prop}, we get $\an{u_1}{h^u} <_{h^u} \an{u_2}{h^u}$.
	As $\pi$ is surjective on the leaves of $T$, there exist labels $\ell_1, \ell_2 \in [m]$ such that $\pi(\ell_1) = u_1$ and $\pi(\ell_2) = u_2$.
	By definition of $\irel$, we have $\pi(\ell_1) \sirel \pi(\ell_2)$.
	Since $(\pi, \pi')$ is monotone, it follows that $\pi'(\ell_1) \irel' \pi'(\ell_2)$.

	Consider the induced matrices $M = M(T, f, \pi)$ and $M' = M(T', f', \pi')$.
	As $(\pi, \pi')$ is a $\delta$-labelling, we know that its label distance is $\delta$.
	In other words, we know that each element of $\mat = \mat(T, T') = |M - M'|$ is bounded by $\delta$.
	In particular, for $k \in \{1, 2\}$, we have:
	\[
	f'(\pi'(\ell_k)) - f(\pi(\ell_k)) = \fl'(\pi'(\ell_k), \pi'(\ell_k)) - \fl(\pi(\ell_k), \pi(\ell_k)) \le \mat_{\ell_k, \ell_k} \le \delta.
	\]
	Rewriting, we have $f'(\pi'(\ell_k)) \le \delta + f(\pi(\ell_k)) \le \delta + h$.
	Since this holds for both $k = 1$ and $k = 2$, it follows that $h' = \max(f'(\pi'(\ell_1)), f'(\pi'(\ell_2))) \le \delta + h$.
	By construction of $\alpha$, we know that $\alpha(x_1)$ is the unique ancestor of $\pi'(\ell_1)$, at height $h + \delta$.
	Consequentially, as $h' \le h + \delta$, the point $\an{\pi'(\ell_1)}{h'}$ lies on the path from $\pi'(\ell_1)$ to $\alpha(x_1)$.
	Similarly, we know that $\an{\pi'(\ell_2)}{h'}$ lies on the path from $\pi'(\ell_2)$ to $\alpha(x_2)$.
	Since we know that $\pi'(\ell_1) \irel' \pi'(\ell_2)$, we obtain $\an{\pi'(\ell_1)}{h'} \le'_{h'} \an{\pi'(\ell_2)}{h'}$.
	By consistency of $(\le'_h)$, we obtain $\alpha(x_1) \le'_{h'} \alpha(x_2)$, showing that $\alpha$ is monotone.
	A symmetric argument shows that $\beta$ is monotone as well.
\hfill\end{proof}

\noindent Theorem~\ref{thm:monotone-equivalences} now follows from Lemmas~\ref{crl:compatible-good}, \ref{lem:good-label}, and \ref{lem:label-interleaving}.

\section{Concluding Remarks}\label{sec:concluding-remarks}
In this paper, we introduced the monotone interleaving distance: an order-preserving distance for a class of ordered merge trees.
The monotone interleaving distance captures more geometric facets of data than the regular interleaving distance, and can hence be a more powerful tool for data that has a geometric interpretation.
We showed that the monotone interleaving distance between two ordered merge trees can be computed in near-quadratic time, by exploiting a relation with the Fr\'{e}chet distance on 1D curves.
Additionally, we demonstrated that the monotone interleaving distance is an interleaving distance in the categorical sense.
Lastly, analogously to the regular setting, we gave three ways to define the monotone interleaving distance: in terms of an interleaving, in terms of a $\delta$-good map, and in terms of a labelling.

We plan to explore various directions for future work.
Firstly, given that the regular interleaving distance for merge trees is NP-hard to compute, and the monotone interleaving distance can be computed in polynomial time for ordered merge trees, it would be interesting to explore which intermediate variants can still be computed in polynomial time.
Specifically, we could relax the total orders on merge tree layers to partial orders.
Furthermore, it would be interesting to explore more connections between variants of the Fr\'{e}chet distance and variants of the (monotone) interleaving distance, which may lead to efficient algorithms or approximations for various types of (monotone) interleaving distances.
Finally, we plan to engineer our results to be applicable to real-world river networks.
As the interleaving distance is a bottleneck distance, an optimal interleaving does not give much information about the parts of the trees that are not the bottleneck.
Therefore, to make the distance more meaningful on real data, we consider developing a \emph{locally adaptive variant} of the (monotone) interleaving distance.

\bibliography{references.bib}

\newpage
\appendix

\section{NP-Hardness of Interleaving Distance}\label{app:np-hardness}
In this section, we show NP-hardness of computing the interleaving distance.
In existing literature it is commonly accepted that computing the interleaving distance is NP-hard \cite{gasparovich2019intrinsic, touli2022fpt}, as a reduction given by Agarwal et al.\ \cite{agarwal2018computing} to prove NP-hardness of the Gromov-Hausdorff distance would also work for the interleaving distance.
For completeness, we slightly alter their proof and show that computing the interleaving distance is indeed NP-hard.

They describe a reduction from the decision problem called balanced partition.
Given a multiset $X = \{a_1, \ldots, a_n\}$ of positive integers and a positive integer $m$, the balanced partition problem asks whether there exists a partition $X_1, \ldots, X_m$ of $X$ such that every set in the partition sums to the same value $\mu \coloneqq 1/m\sum^n_{i=1}a_i$.
The balanced partition problem is strongly NP-complete, meaning that it is NP-complete even if for some constant $c$, it holds that $a_i \le n^c$, for every element $a_i$.
We now describe the construction of two merge trees from an instance $(X, m)$ of the balanced partition problem.

If $m = 1$, the problem is trivial, so assume $m > 1$.
Fix a value $\lambda > 8$.
We first construct a merge tree $(T, f)$ as follows.
For each value $a_i \in X$, we add $a_i$ leaves $\leaf_{i, 1}, \ldots \leaf_{i, a_i}$ at height $0$, all having an incident edge to a vertex $\hat{p}_i$ at height $f(\hat{p}_i) = \lambda$.
We then add $n$ more vertices $p_i$ with $f(p_i) = \lambda + 1$ and edges $(\hat{p}_i, p_i)$.
Moreover, we connect all vertices $p_i$ to a vertex $r$ at height $f(r) = \lambda + 2$.
Lastly, we connect $r$ to a root node at height infinity.
Secondly, we construct a merge tree $(T', f')$ as follows.
For $1 \le j \le m$, we add a group of $\mu$ leaves $w_{j, 1}, \ldots, w_{j, m}$, all at height $1$.
Then, for each of these groups, we connect all its leaves to a vertex $q_j$ at height $f'(q_j) = \lambda+1$.
Finally, connect all vertices $q_j$ to a vertex $r'$ at height $f'(r') = \lambda+3$, and connect $r'$ to a root node at height infinity.
See Figure~\ref{fig:reduction} for an illustration of the two constructed merge trees.

\begin{figure}[b]
	\centering
	\includegraphics{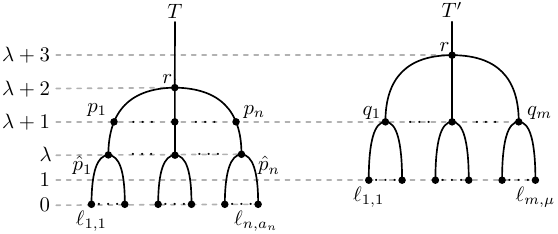}
	\caption{The constructed trees $T$ and $T'$.}
	\label{fig:reduction}
\end{figure}

\begin{lemma}
	If $(X, m)$ is a \emph{yes} instance of the balanced partition problem, then $\intdist{}(T, T') \le 1$.
	Otherwise, we have $\intdist{}(T, T') \ge 3$.
\end{lemma}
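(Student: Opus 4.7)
For the \emph{forward direction}, suppose $X$ admits a balanced partition $X_1,\dots,X_m$ with $\sum_{a\in X_j}a=\mu$, and set $\phi(i)=j$ whenever $a_i \in X_j$. For each $j$, I fix an arbitrary bijection $\sigma_j$ between the $\mu$ leaves $\{\ell_{i,k} : \phi(i)=j,\, 1\le k\le a_i\}$ of $T$ and the $\mu$ leaves $\{w_{j,1},\dots,w_{j,\mu}\}$ of $T'$. I would define $\alpha\colon T\to T'$ by $\alpha(\ell_{i,k})=\sigma_{\phi(i)}(\ell_{i,k})$, $\alpha(\hat p_i)=q_{\phi(i)}$, $\alpha(p_i)$ the point of height $\lambda+2$ on the edge $(q_{\phi(i)},r')$, and $\alpha(r)=r'$, interpolating heights linearly on edges. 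I define $\beta\colon T'\to T$ symmetrically: $\beta(w_{j,k})$ is the point of height $2$ on the edge from $\sigma_j^{-1}(w_{j,k})$ up to its parent $\hat p_i$, $\beta(q_j)=r$ for all $j$, and $\beta(r')$ is the point of height $\lambda+4$ on the edge above $r$. A direct case check on each of the four layers (leaves, the $\hat p_i/q_j$ level, the $p_i/(q_j,r')$ level, and $r/r'$) verifies that $(\alpha,\beta)$ satisfies \enumit{(C1)}--\enumit{(C4)}, yielding $\intdist{}(T,T') \le 1$.

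For the \emph{backward direction}, I argue the contrapositive: assume a $\delta$-interleaving $(\alpha,\beta)$ exists with $\delta<3$, and I exhibit a balanced partition of $X$. Since leaves of $T$ have height $0$ while $T'$ has minimum height $1$, the condition $f'(\alpha(\ell))=\delta$ on leaves forces $\delta\ge 1$. Using $\lambda>8$, the image $\alpha(\hat p_i)$ lies at height $\lambda+\delta\in[\lambda+1,\lambda+3)$, strictly below $r'$, so it equals the vertex $q_j$ (when $\delta=1$) or is an interior point of a unique edge $(q_j,r')$; in either case I set $\phi(i)=j$. Continuity and the shift property of $\alpha$ imply $\alpha(\ell_{i,k})\preceq\alpha(\hat p_i)$, so $\alpha(\ell_{i,k})$ lies in the closed subtree rooted at $q_{\phi(i)}$ at height $\delta\in[1,\lambda+1)$, i.e.\ on the edge from some leaf $\alpha_L(\ell_{i,k})$ of $q_{\phi(i)}$ up to $q_{\phi(i)}$. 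A map $\beta_L\colon L(T')\to L(T)$ is defined symmetrically.

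The identities $\beta\circ\alpha(x)=x^{2\delta}$ and $\alpha\circ\beta(y)=y^{2\delta}$, together with $2\delta<6<\lambda$, force $\beta_L\circ\alpha_L=\mathrm{id}_{L(T)}$ and $\alpha_L\circ\beta_L=\mathrm{id}_{L(T')}$, so $\alpha_L$ is a bijection $L(T)\to L(T')$. Counting leaves then gives $\sum_{i:\phi(i)=j}a_i=\mu$ for every $j$, producing a balanced partition of $X$ and establishing the contrapositive. The main subtlety I foresee is in this last bijection argument: showing that $\beta(\alpha(\ell_{i,k}))=\ell_{i,k}^{2\delta}$ combined with $2\delta<\lambda$ pins down $\alpha_L(\ell_{i,k})$ uniquely, since $\ell_{i,k}^{2\delta}$ then lies strictly below $\hat p_i$ and hence identifies the leaf edge of $\ell_{i,k}$ — which is precisely where the margin $\lambda>8$ enters the argument.
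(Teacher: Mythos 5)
Your proof is correct and follows the same overall strategy as the paper: an explicit map construction for yes-instances, and extraction of a leaf bijection (hence a balanced partition) from any interleaving with $\delta<3$. The only real divergence is mechanical. In the forward direction the paper builds a single $1$-good map and verifies \enumit{(G1)}--\enumit{(G3)}, whereas you build the pair $(\alpha,\beta)$ and check \enumit{(C1)}--\enumit{(C4)} directly; both are legitimate by Theorem~\ref{thm:equivalences}. In the backward direction the paper obtains surjectivity of the induced leaf map from condition \enumit{(G3)} (unvisited points have a nearby ancestor in the image) and then gets injectivity by comparing cardinalities $|L(T)|=|\Lambda'_\delta|=\overline a$, while you obtain the bijection in one step by showing $\beta_L\circ\alpha_L=\mathrm{id}$ and $\alpha_L\circ\beta_L=\mathrm{id}$ from $\beta\circ\alpha(x)=x^{2\delta}$ and $\alpha\circ\beta(y)=y^{2\delta}$ together with $2\delta<\lambda$; your route avoids \enumit{(G3)} entirely but requires having both maps of the interleaving in hand, which is exactly what the two-map definition provides.
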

\begin{proof}
	Assume $(X, m)$ is a yes instance.
	Then $X$ can be partitioned into $m$ subsets $X_1, \ldots, X_m$ such that each subset sums to $\mu$.
	Consider the merge trees $(T, f)$ and $(T', f')$ obtained by the construction described above.
	We now construct a map $\alpha \colon \geom \to \geom'$ and show it is a $1$-good map.
	Take a set $X_j$, and consider the subtrees rooted at the vertices corresponding to the points in $X_j$.
	We map each leaf in these subtrees to a unique leaf in the subtree $T_{q_j}$.
	Note that this is well-defined: the tree $T_{q_j}$ contains exactly $\mu$ leaves, and by construction there are exactly $\mu$ leaves in $T$ that map to $T_{q_j}$.
	
	Now consider any other point $x \in \geom$, take an arbitrary leaf $\leaf$ in its subtree $T_x$ and set $h = f(x) - f(\leaf)$.
	We map $x$ to the ancestor of $\alpha(\leaf)$ that lies exactly $h$ higher.
	We first argue that the choice of leaf does not matter.
	If $x \in T_{p'_i}$ for some $i$, then $T_x$ contains only a single leaf.
	If $x$ lies on the path from $p'_i$ to $r$ for some $i$, then $\lambda \le f(x) \le \lambda + 2$, and thus $\lambda + 1 \le f'(\alpha(x)) \le \lambda + 3$.
	Since any leaf in $T_x$ maps to the same subtree in $T'$, there is only one candidate at height $\lambda + 1$, namely a point on the path from $q_j$ to $r'$.
	Lastly, if $x \succeq r$, then $f'(\alpha(x)) \ge \lambda + 3$, implying there is only a single candidate for $\alpha(x)$.
	
	Lastly, we show that $\alpha$ is a $1$-good map.
	Condition \enumit{(G1)} is satisfied by definition: every point is mapped to a point in $\geom'$ that lies exactly one higher.
	To see that \enumit{(G2)} holds, consider $x_1$ and $x_2$ such that $y_1 \coloneqq \alpha(x_1) \succeq \alpha(x_2) \eqqcolon y_2$.
	If both $y_1$ and $y_2$ lie on the same edge in a subtree rooted at $q_j$ for some $j$, then by construction $x_1$ and $x_2$ lie on the same edge in a subtree rooted at some $p_i$ with $a_i \in X_j$.
	Otherwise, we must have $f'(y_1) \ge \lambda+1$, and thus $f(x_1^2) \ge \lambda + 2$.
	By construction, this means $x_1^2 \succeq r$ and thus also $x_1^2 \succeq x$ for any point $x$ with $f(x) \le f(x_1^2)$.
	In particular, this holds for $x = x_2^2$.
	Lastly, since $(X, m)$ is a yes instance, every leaf in $T'$ lies in the image of $\alpha$.
	Hence, trivially, the condition \enumit{(G3)} holds as well.
	
	Now, suppose $\delta \coloneqq \intdist{}(T, T') < 3$ and let $\alpha \colon \geom \to \geom'$ be a $\delta$-good map.
	Note that $\delta \ge 1$, since otherwise the leaves of $T$ can not be mapped to points in $T'$.
	We argue that the instance $(X, m)$ is a yes instance.
	Consider the layer $\ls{\delta}' \coloneqq \{y \in \geom' \mid f'(y) = \delta\}$.
	By construction, all leaves of $T$ are mapped to a point in $\ls{\delta}'$.
	We show that the restriction $\chi \colon L(T) \to \ls{\delta}'$ of $\alpha$ on the leaves of $T$ is a bijection.
	
	We first show that $\chi$ is surjective.
	Suppose there is a point $y \in \ls{\delta}'$ such that $y \notin \im(\alpha)$.
	By \enumit{(G3)}, we know that the lowest ancestor $y^F$ in the image of $\alpha$ lies at most $2\delta$ higher than $y$.
	So, since $f'(y) = \delta$, we have $f'(y^F) \le 3\delta < 9$.
	By construction, this means $y^F$ lies on the same edge as $y$, as the higher endpoint $q_j$ for some $j$ lies at height $\lambda + 1 > 9$.
	Let $x \in \geom$ such that $\alpha(x) = y^F$.
	Then $f(x) = f'(y^F) - \delta > f'(y) - \delta = 0$.
	Hence, $x$ can not be a leaf of $T$.
	Let $\leaf \in T_X$ be a leaf in the subtree rooted at $x$.
	By continuity of $\alpha$, it must be that $\alpha(\leaf) \preceq \alpha(x) = y^F$, and moreover $f'(\alpha(\leaf)) = \delta$.
	As there is only a single point at height $\delta$ in the subtree rooted at $y^F$, it follows that $\alpha(\leaf) = y$, which contradicts the assumption that $y \notin \im(\alpha)$.
	To see that $\chi$ is also injective, we make the following observation. 
	By construction, we have $|L(T)| = \sum_{i=1}^n a_i \eqqcolon \overline{a}$.
	Similarly, we have $|\ls{\delta}'| = m \cdot \mu = \overline{a}$.
	As $|L(T)| = |\ls{\delta}'|$, and $\chi$ is surjective, it follows that $\chi$ is injective, and hence bijective.
	
	We can now use this bijection to partition $X$ into $m$ subsets.
	For $i \in [n]$, let $\leaf \in T_{p'_i}$ be an arbitrary leaf.
	Now, let $q_j$ be the unique ancestor of $\alpha(\leaf)$ at height $\lambda+1$.
	We claim that $q_j$ is independent of the choice for $\leaf$.
	Consider two leaves $\leaf_1, \leaf_2 \in T_{p'_i}$, and set $y_1, y_2$ to be their respective images under $\alpha$.
	Note that $\leaf_1^\lambda = p'_i = \leaf_2^\lambda$.
	Then, by continuity, we must have $y \coloneqq y_1^\lambda = y_2^\lambda$.
	Since $1 \le \delta < 3$, we have $\lambda + 1 \le f'(x_1)^\lambda < \lambda + 3$, which means $y$ lies on the edge from $q_j$ to $r$ for some $j$.
	Hence, $y_1, y_2$ lie in the same subtree $T'_{q_j}$, meaning $q_j$ is indeed uniquely determined by $i$.
	so, we can safely add $a_i$ to $X_j$.
	Since $\chi$ is a bijection, there are exactly $\mu$ leaves that map to each subtree in $T'$, meaning that the partition is valid.
\end{proof}

\section{Good Maps}\label{app:good-equivalence}
The original definition of a $\delta$-good map was given by Touli and Wang \cite{touli2022fpt}.
\begin{definition}[Definition 3 from \cite{touli2022fpt}]
	Given two merge trees $(T, f)$ and $(T', f')$, a map~$\alpha \colon \geom \to \geom$ is called $\delta$-good if it is a continuous map such that
	\begin{enumerate}[({T}1)]
		\item
		$f'(\alpha(x)) = f(x) + \delta$ for all $x \in \geom$,
		
		\item
		if $\alpha(x_1) \succeq \alpha(x_2)$, then $x_1^{2\delta} \succeq x_2^{2\delta}$,
		
		\item
		$|f'(y^F) - f'(y)| \le 2\delta$ for $y \in \geom' \setminus \im(\alpha)$, where $y^F$ is the lowest ancestor of $y$ in~$\im(\alpha)$.
	\end{enumerate}
	
\end{definition}

\noindent Gasparovich et al.\ \cite{gasparovich2019intrinsic} use the concept of a $\delta$-good map to show that the label interleaving distance is equal to the interleaving distance.
However, they use a slightly different definition for a $\delta$-good map, and they do not provide a proof that their definition is equivalent.
\begin{definition}[Definition 2.11 from \protect\cite{gasparovich2019intrinsic}]
	Given two merge trees $(T, f)$ and $(T', f')$, a map~$\alpha \colon \geom \to \geom'$ is called $\delta$-good if it is a continuous map such that
	\begin{enumerate}[({G}1)]
		\item for all $x \in \geom$, $f'(\alpha(x)) = f(x) + \delta$,
		\item for all $y \in \im(\alpha)$, $f(x') - f(x) \le 2\delta$, where $x' = \lca(\alpha^{-1}(y))$, for all $x \in \alpha^{-1}(y)$.
		\item for all $y \in \geom' \setminus \im(\alpha)$, $\depth(y) \le 2\delta$, where $\depth(y)$ is defined as $f'(y) - f'(y^L)$, and where $y^L$ is the lowest descendant of $y$.
	\end{enumerate}
\end{definition}

\noindent For completeness, we show these definitions are equivalent.

\begin{lemma}\label{lem:good-equivalence}
	A continuous map $\alpha \colon \geom \to \geom'$ satisfies \enumit{(T1)}-\enumit{(T3)} iff it satisfies \enumit{(G1)}-\enumit{(G3)}.
\end{lemma}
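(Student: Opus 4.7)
The plan is to establish the equivalence by checking each condition in turn. Condition (T1) and (G1) are identical, so the content lies in showing (T2)$\Leftrightarrow$(G2) and (T3)$\Leftrightarrow$(G3), where the remaining (G1)/(T1) and continuity of $\alpha$ are available throughout.

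For (T2)$\Rightarrow$(G2), I would exploit the fact that $\alpha$ is a $\delta$-shift map, so all points in $\alpha^{-1}(y)$ share the common height $f'(y)-\delta$. Applying (T2) to the trivial pair of relations $\alpha(x_1)\succeq\alpha(x_2)$ and $\alpha(x_2)\succeq\alpha(x_1)$ for any $x_1,x_2\in\alpha^{-1}(y)$ forces $x_1^{2\delta}\succeq x_2^{2\delta}$ and vice versa; since these two ancestors lie at the same height, they must coincide. Thus every element of $\alpha^{-1}(y)$ has the same $2\delta$-ancestor $z$, and $x'=\lca(\alpha^{-1}(y))\preceq z$, giving $f(x')-f(x)\le 2\delta$. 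For the converse (G2)$\Rightarrow$(T2), suppose $\alpha(x_1)\succeq\alpha(x_2)$. By continuity of $\alpha$ and the fact that $f'\circ\alpha$ grows continuously to infinity along the path from $x_2$ to the root, there exists an ancestor $x_2^*\succeq x_2$ with $\alpha(x_2^*)=\alpha(x_1)$. Then (G2) applied to $\alpha^{-1}(\alpha(x_1))$ yields $\lca(x_1,x_2^*)\preceq x_1^{2\delta}$, and hence $x_2\preceq x_2^*\preceq x_1^{2\delta}$. Combined with the height inequality $f(x_1^{2\delta})\ge f(x_2^{2\delta})$ (from $f(x_1)\ge f(x_2)$), this forces $x_1^{2\delta}\succeq x_2^{2\delta}$.

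For (T3)$\Leftrightarrow$(G3), the key structural remark is that $\im(\alpha)$ is a connected subtree of $T'$ containing the root, because $\alpha$ is continuous and $T$ is connected. Consequently, $T'\setminus\im(\alpha)$ decomposes into \emph{hanging subtrees}, each attached to $\im(\alpha)$ at a single point of $\im(\alpha)$. I would first observe that for any $y\notin\im(\alpha)$, the entire subtree $T'_y$ is disjoint from $\im(\alpha)$: otherwise a descendant of $y$ in $\im(\alpha)$, together with an ancestor $y^F\in\im(\alpha)$, would force the path through $y$ to lie in $\im(\alpha)$ by connectedness of $\im(\alpha)$, contradicting $y\notin\im(\alpha)$. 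With this, I would argue that both (T3) and (G3) are equivalent to the uniform bound: every hanging subtree has depth at most $2\delta$. Indeed, (T3) applied to the lowest leaf $y^L$ of a hanging subtree bounds its depth by $2\delta$; conversely, if the depth is bounded by $2\delta$, then for any $y$ in the subtree, $f'(y^F)-f'(y)$ is at most the depth. Symmetrically, (G3) applied to points $y$ approaching $y^F$ from below yields a bound on the full depth (since $f'(y)\to f'(y^F)$ and $y^L$ stays the lowest leaf of the subtree), and conversely the bound on depth controls $f'(y)-f'(y^L)$ for every $y$.

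The main obstacle will be handling the case analysis cleanly in the (T3)$\Leftrightarrow$(G3) argument, in particular dealing with hanging subtrees whose root $y^F$ may be either an interior edge point or a vertex of $T'$ with multiple hanging branches, and making the limiting argument ``$y\to y^F$'' rigorous. The (G2)$\Rightarrow$(T2) direction will also require care in verifying that the constructed ancestor $x_2^*$ indeed exists, which ultimately reduces to continuity of $f'\circ\alpha$ along the monotone path from $x_2$ to the root.
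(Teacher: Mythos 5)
Your proposal is correct and, for the (T2)$\Leftrightarrow$(G2) equivalence, essentially identical to the paper's proof: the forward direction applies (T2) to pairs in $\alpha^{-1}(y)$ (which all lie at height $f'(y)-\delta$) to show they share a common $2\delta$-ancestor dominating $\lca(\alpha^{-1}(y))$, and the backward direction lifts $x_2$ to a point $x_2^*$ with $\alpha(x_2^*)=\alpha(x_1)$ and applies (G2) to $\alpha^{-1}(\alpha(x_1))$, exactly as in the paper. For (T3)$\Leftrightarrow$(G3) your route is a mild repackaging rather than a genuinely different argument: the paper proves (T3)$\Rightarrow$(G3) directly by contradiction via the lowest descendant $y_L$ of $y$ (noting $y_L\notin\im(\alpha)$ and $y_L^F\succeq y$), whereas you factor both conditions through the statement that each component of $T'\setminus\im(\alpha)$ has depth at most $2\delta$. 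Your version buys one thing: the paper's converse step asserts $|f'(y^F)-f'(y)|\le\depth(y^F)\le 2\delta$ ``by (G3)'' even though $y^F\in\im(\alpha)$, where (G3) does not literally apply; your limiting argument along points of the hanging subtree approaching its attachment point makes that step precise, at the cost of the case analysis and connectedness bookkeeping you already flag.
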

\begin{proof}
	We show both directions.
	
	\begin{description}
	\item[$\Rightarrow$]
		First, suppose $\alpha$ satisfies \enumit{(T1)}-\enumit{(T3)}.
		\begin{enumerate}[({G}1)]
		\item This immediately follows from \enumit{(T1)}.
		
		\item 
			Let $y \in \im(\alpha)$, set $x' = \lca(\alpha^{-1}(y))$ and take $x_1 \in \alpha^{-1}(y)$.
			We consider two cases.
			If $x_1$ is the only point in $\alpha^{-1}(y)$, then $x' = x_1$ and thus $f(x') - f(x_1) = 0 \le 2\delta$.
			Else, let $x_2 \in \alpha^{-1}(y)$ such that $x_1 \neq x_2$.
			Observe that $\alpha(x_1) = \alpha(x_2)$, so both $\alpha(x_1) \succeq \alpha(x_2)$ and $\alpha(x_2) \succeq \alpha(x_1)$.
			By (T2) it then follows that both $x_1^{2\delta} \succeq x_2^{2\delta}$ and $x_2^{2\delta} \succeq x_1^{2\delta}$, and thus $x_1^{2\delta} = x_2^{2\delta}$.
			Since this must hold for any two points $x_1$ and $x_2$ in $\alpha{-1}(y)$, we get $x_1^{2\delta} \succeq x'$.
			In particular, this means $f(x') - f(x_1) \le 2\delta$.
			
		\item
			Let $y \in \geom' \setminus \im(\alpha)$ and towards a contradiction assume $\depth(y) > 2\delta$.
			Then there is $y_L \in \geom'$ such that $f'(y) - f'(y_L) > 2\delta$.
			By continuity of $\alpha$ and tree properties, it must be the case that $y_L \notin \im(\alpha)$, and that $y_L^F \succeq y$.
			But that means we get $|f'(y_L^F) - f'(y_L)| > 2\delta$, contradicting property \enumit{(T3)}.			
		\end{enumerate}
		
	\item[$\Leftarrow$]
		Secondly, suppose $\alpha$ satisfies \enumit{(G1)}-\enumit{(G3)}.
		\begin{enumerate}[({T}1)]
		\item 
			This immediately follows from \enumit{(G1)}.
			
		\item
			Suppose $\alpha(x_1) \succeq \alpha(x_2)$.
			We consider two cases.
			If $x_1 \succeq x_2$, then $x_1^{2\delta} \succeq x_2^{2\delta}$ follows by tree properties.
			Else, we know that $x_1 \neq \lca(x_1, x_2) \neq x_2$.
			Set $y = \lca(\alpha(x_1), \alpha(x_2))$ and observe that $y = \alpha(x_1)$, as $\alpha(x_1) \succeq \alpha(x_2)$.
			Define $h = f'(\alpha(x_1)) - f'(\alpha(x_2)) \ge 0$, so that by continuity of $\alpha$, we have $\alpha(x_2^h) = \alpha(x_1)$.
			This means that both $x_1 \in \alpha^{-1}(y)$ and $x_2^h \in \alpha^{-1}(y)$.
			Set $x' = \lca(\alpha^{-1}(y))$, so that $x' \succeq x_1$ and $x' \succeq x_2^h \succeq x_2$
			By property \enumit{(G2)}, we then obtain $f(x') - f(x_1) \le 2\delta$.
			So, lifting $x_1$ by $2\delta$ yields $x_1^{2\delta} \succeq x' \succeq x_2$.
			As $\alpha(x_1) \succeq \alpha(x_2)$, and by \enumit{(G1)}, we know that $f(x_1) \ge f(x_2)$.
			Hence, we get our desired result: $x_1^{2\delta} \succeq x_2^{2\delta}$.
			
		\item
			Let $y \in \geom' \setminus \im(\alpha)$.
			Then $|f'(y^F) - f'(y)| \le \depth(y^F) \le 2\delta$ by \enumit{(G3)}.
		\end{enumerate}
	\end{description}
	This proves the claim.
\end{proof}
\end{document}